\renewcommand{\algorithmiccomment}[1]{\bgroup\hfill//~\color{gray}#1\egroup}
\newcommand{\specialcell}[2][c]{
	\begin{tabular}[#1]{@{}c@{}}#2\end{tabular}
}
\newcommand{\norm}[1]{\left\lVert#1\right\rVert}
\newcommand{\X}[1]{X^{(#1)}}
\newcommand{\PIMB}{\pmb{\pi}_{\mathbf{e}}}
\newcommand{\mathbbm}[1]{\mathds{#1}}
\def\@sect#1#2#3#4#5#6[#7]#8{%
    \ifnum #2>\c@secnumdepth
        \let\@svsec\@empty
    \else
        \refstepcounter{#1}%
        \edef\@svsec{%
            \begingroup
                %\ifnum#2>2 \noexpand\rm \fi % changed to next 29 July 2002 gkmt
                        \ifnum#2>2 \noexpand#6 \fi
                \csname the#1\endcsname
            \endgroup
            \ifnum #2=1\relax .\fi
            \hskip 1em
        }%
    \fi
    \@tempskipa #5\relax
    \ifdim \@tempskipa>\z@
        \begingroup
            #6\relax
            \@hangfrom{\hskip #3\relax\@svsec}%
            \begingroup
                \interlinepenalty \@M
                \if@uchead
                    \MakeTextUppercase{#8}% HERE
                \else
                    #8%
                \fi
                \par
            \endgroup
        \endgroup
        \csname #1mark\endcsname{#7}%
        \vskip -12pt  %gkmt, 11 aug 99 and GM July 2000 (was -14) - numbered section head spacing
\addcontentsline{toc}{#1}{%
            \ifnum #2>\c@secnumdepth \else
                \protect\numberline{\csname the#1\endcsname}%
            \fi
            #7%
        }%
    \else
        \def\@svsechd{%
            #6%
            \hskip #3\relax
            \@svsec
            \if@uchead
                \uppercase{#8}%
            \else
                #8%
            \fi
            \csname #1mark\endcsname{#7}%
            \addcontentsline{toc}{#1}{%
                \ifnum #2>\c@secnumdepth \else
                    \protect\numberline{\csname the#1\endcsname}%
                \fi
                #7%
            }%
        }%
    \fi
    \@xsect{#5}\hskip 1pt
    \par
}
\def\tagform@#1{\maketag@@@{\bfseries(\ignorespaces#1\unskip\@@italiccorr)}}
\renewcommand{\eqref}[1]{\textup{{\normalfont(\ref{#1}}\normalfont)}}
\definecolor{red}{HTML}{E51400} %red
\definecolor{green}{HTML}{008A00} %emerald
\definecolor{purple}{HTML}{AA00FF} %violet
\definecolor{darkgreen}{HTML}{004225} % dark green
\declaretheorem[name={Theorem}]{theorem}
\newtheorem{lemma}[theorem]{Lemma}
\renewenvironment{proof}{\textsc{Proof.}}{\qed}
\begin{document}

% ****************** TITLE ****************************************

\title{A General Framework for Estimating Graphlet Statistics via Random Walk}

% ****************** AUTHORS **************************************

\numberofauthors{1} 

\author{
\alignauthor
Xiaowei Chen$^\textsuperscript{1}$, Yongkun Li$^{\textsuperscript{2}}$, Pinghui
Wang$^\textsuperscript{3}$, John C.S. Lui$^\textsuperscript{1}$\\
\affaddr{$^{1}$The Chinese University of Hong Kong}\\
        \affaddr{$^2$University of Science and Technology of China}\\
        \affaddr{$^3$Xi'an Jiaotong University}\\
        \email{$^1$\{xwchen, cslui\}@cse.cuhk.edu.hk, $^2$ykli@ustc.edu.cn,
        $^3$phwang@mail.xjtu.edu.cn}
}

\maketitle

\begin{abstract}
    Graphlets are induced subgraph patterns and  have been frequently applied to
    characterize the local topology structures of graphs across various
    domains, e.g., online social networks (OSNs) and biological networks.
    Discovering and computing graphlet statistics are highly challenging.
    First, the massive size of real-world graphs makes the exact computation of
    graphlets extremely expensive.  Secondly, the graph topology may not be
    readily available so one has to resort to web crawling using the available
    application programming interfaces (APIs).
    In this work, we propose a general and novel framework to estimate graphlet
    statistics of ``{\em any size}''. Our framework is based on collecting
    samples through consecutive steps of random walks. We derive an analytical
    bound on the sample size (via the Chernoff-Hoeffding technique) to
    guarantee the convergence of our unbiased estimator. 
    To further improve the accuracy, we introduce two novel optimization techniques
    to reduce the lower bound on the sample size.  Experimental evaluations
    demonstrate that our methods outperform the state-of-the-art method up to
    an order of magnitude both in terms of accuracy and time cost.
\end{abstract}

\section{Introduction}\label{sec:introduction}
Graphlets are defined as induced subgraph patterns in real-world
networks~\cite{Milo824}. Unlike some global properties such as
degree distribution, the frequencies of graphlets provide important statistics
to characterize the local topology structures of networks.  Decomposing networks
into small $k$-node graphlets has been a fundamental approach to characterize
the local structures of real-world complex networks.  Graphlets also have
numerous applications ranging from biology to network science.  Applications in
biology include protein detection~\cite{milenkovic2008uncovering}, biological
network comparison~\cite{przuljbio10} and disease gene
identification~\cite{Milenkovi423}.  In network science, the researchers have
applied graphlets for web spam detection~\cite{BecchettiESA}, anomaly
detection~\cite{AkogluTK14}, social network structure
analysis~\cite{UganderSFM}, and friendship recommendation,
e.g., there are two ``{\em types}'' of 3-node graphlets: (1) a 3-node line
subgraph or (2) a triangular subgraph. If 3-node line subgraphs occur with much
higher frequency than triangular subgraphs in an OSN, then we know that there
are more opportunities for us to make friendship recommendation.

\textbf{Research Problem.} In most applications, relative frequencies among
various graphlets are sufficient. 
One example is the friendship
recommendation in OSNs we have just mentioned. 
Another example is building graphlet kernels for large graph
comparison~\cite{5664}.
In this work, we focus on relative graphlet frequencies discovery and computation.
More specifically, we propose
efficient sampling methods to compute the \textit{percentage} of a specific
$k$-node graphlet type within all $k$-node graphlets in a given graph. 
The percentage of a particular $k$-node graphlet type is called the
``{\em graphlet concentration}'' or ``{\em graphlet statistics}''.

\textbf{Challenges.} 
The straightforward approach to compute the graphlet concentration is via exhaustive counting.
However, there exist a large number of graphlets even for a
moderately sized graph. For example,
Facebook~\cite{konect} in our datasets with 817K edges
has $9\times 10^9$ 4-node graphlets and $2\times 10^{12}$ 5-node graphlets.
Due to the combinatorial explosion problem, how to count graphlets efficiently is a long
standing research problem. Some techniques, such as leveraging parallelism
provided by multi-core architecture~\cite{ahmed2015icdm}, exploiting
combinatorial relationships between graphlets~\cite{hovcevar2014combinatorial},
and employing distributed systems~\cite{SuriCTC}, have been applied to speed up the
graphlet counting. However, these exhaustive counting algorithms are not scalable 
because they need to explore
each graphlet at least once. Even with those highly
tuned algorithms, exhaustive counting of graphlets has prohibitive computation cost
for real-world large graphs. An alternative approach is to adopt
``{\em sampling algorithms}'' to achieve significant speedup with acceptable
error. Several methods based on sampling have been proposed to address the
challenge of graphlet counting~\cite{jha2015path,
DBLP:journals/corr/WangTZG15, wang2014efficiently, bhuiyan2012guise,
wang2015minfer, ElenbergBTD}.  

Another challenge is the restricted access to the
complete graph data. For example, most OSNs' service
providers are unwilling to share the complete data for public use. The
underlying network may only be available by calling some application programming interfaces (APIs),
which support the function to retrieve a list of user's
friends.  Graph sampling through crawling is widely used in this scenario to
estimate graph properties such as degree distribution~\cite{lee2012beyond,
lirandom, gjoka2010walking}, clustering
coefficient~\cite{hardiman2013estimating} and size of
graphs~\cite{katzir2011estimating}. 
In this work, we assume that the graph data has to be externally accessed, either
through remote databases or by calling APIs provided by the operators of OSNs.

\textit{The aim of this work is to design and implement efficient random
walk-based methods to estimate graphlet concentration for restricted accessed
graphs}.  Note that estimating graphlet concentration is a more complicated task
than estimating other graph properties such as degree distribution. For degree
distribution, one can randomly walk on the graph to collect node samples and
then remove the bias using standard techniques such as Horvitz-Thompson
estimator.
For graphlet concentration, one needs to consider the various local structures.  A
single node sample {\em cannot} tell us information about the local structures.
One needs to map a random walk to a Markov chain, and carefully define the
state space and its transition matrix
so as to ensure that the state space contains all the $k$-node graphlets. 

\subsection{Related Works and Existing Problems.} 
Previous studies on graphlet counts or concentration include sampling methods
for (1) memory-based graphs~\cite{seshadhri2013triadic, jha2015path,
DBLP:journals/corr/WangTZG15, RahmanGAG}, (2) streaming
graphs~\cite{wang2015minfer, Ahmed:2014:GSH:2623330.2623757}, and (3)
restricted accessed graphs~\cite{hardiman2013estimating, bhuiyan2012guise,
wang2014efficiently}.  The state-of-the-art sampling methods for memory-based
graphs are wedge sampling~\cite{seshadhri2013triadic} and path
sampling~\cite{jha2015path}.  Wedge sampling in~\cite{seshadhri2013triadic}
generates uniformly random wedges (i.e., paths of length two) from the graphs to
estimate triadic measures (e.g., number of triangles, clustering coefficient).
Later on, Jha et al.~\cite{jha2015path} extended the idea of wedge sampling and
proposed path sampling to estimate the number of $4$-node graphlets in the
graphs.  However, both of wedge sampling and path sampling need to access the
whole graph data, which renders them impractical for restricted accessed graphs.  

Estimating graphlet counts for streaming graphs has been studied
in~\cite{wang2015minfer, Ahmed:2014:GSH:2623330.2623757}.  Ahmed et
al.~\cite{Ahmed:2014:GSH:2623330.2623757} proposed the {\em graph sample and
hold} method to estimate the triangle counts in the graphs. Wang et
al.~\cite{wang2015minfer} proposed a method to infer the number of any $k$-node
graphlets in the graph with a set of uniformly sampled edges from the graph
stream.  These streaming sampling methods access each edge at least once and are
not applicable to restricted accessed graphs.

Most relevant to our framework are previous random walk-based 
methods~\cite{hardiman2013estimating, bhuiyan2012guise, wang2014efficiently}
designed for graphs with restricted access.  
Hardiman and Katzir\cite{hardiman2013estimating} proposed a
random walk-based method to estimate the clustering coefficient, which
is a variant of the $3$-node graphlet concentration we study.
Bhuiyan et al.~\cite{bhuiyan2012guise} proposed \textit{GUISE}, which is based
on the Metropolis-Hasting random walk on a subgraph relationship graph, whose
nodes are all the $3, 4, 5$-node subgraphs.  They aimed to estimate $3,
4, 5$-node graphlet statistics simultaneously,  but \textit{GUISE} suffers
from rejection of samples.  In~\cite{wang2014efficiently}, authors proposed
three random walk-based methods: subgraph random walk (\textit{SRW}), pairwise
subgraph random walk (\textit{PSRW}), and mix subgraph sampling (\textit{MSS}).  
\textit{MSS} is an extension of \textit{PSRW} to estimate $k - 1, k, k +
1$-node graphlets jointly.  The simulation results show that \textit{PSRW}
outperforms \textit{SRW} in estimation accuracy. To the best of our knowledge,
\textit{PSRW} is the state-of-the-art random walk-based method to estimate
graphlet statistics  for restricted graphs.

We denote the subgraph relationship graph in~\cite{wang2014efficiently} as
$G^{(d)}$, and each element in $G^{(d)}$ is a $d$-node connected subgraphs in
the original graph.  The main idea of \textit{PSRW} is to collect $k$-node
graphlet samples through two consecutive steps of a simple random walk on
$G^{(k-1)}$ to the estimate $k$-node graphlet concentration.  One drawback of
\textit{PSRW} is its inefficiency of choosing neighbors during the random walk.
For example, \textit{PSRW} performs the random walk on $G^{(3)}$ to estimate
$4$-node graphlet concentration. Populating neighbors of nodes in $G^{(3)}$ is
about \textit{an order of magnitude} slower than choosing random neighbors of
nodes in $G^{(2)}$.  If one can figure out how to estimate $4$-node graphlet
concentration with random walks on $G^{(2)}$, the time cost can be reduced
dramatically.  Furthermore, since \textit{PSRW} is more accurate than the
simple random walk on $G^{(k)}$ (\textit{SRW}) when estimating $k$-node
graphlet concentration, we have reasons to believe that random walks on
$G^{(d)}$ with smaller $d$ have the potential to achieve higher accuracy.
Faster random walks and more accurate estimation motivate us to propose more
efficient sampling methods based on random walks on $G^{(d)}$ to estimate
$k$-node graphlet concentration. Different from \textit{PSRW}, we seek for $d$
that is smaller than $k-1$. 

\subsection{Our Contributions}
\noindent\textbf{Novel framework.} In this paper, we propose a novel framework
to estimate the graphlet concentration. Our framework is {\em provably
correct} and makes no assumption on the graph structures.  The main idea of
our framework is to collect samples through consecutive steps of a random walk
on $G^{(d)}$ to estimate $k$-node graphlet concentration, here $d$ can be {\em any
positive integer less than $k$}, and \textit{PSRW} is just a special case where
$d = k - 1$.  We construct the subgraph relationship graph $G^{(d)}$ on the
fly, and we do not need to know the topology of the original graph in advance.
In fact, one can view $d$ as a parameter of our framework.  As mentioned
in~\cite{wang2014efficiently}, {\em it is non-trivial to analyze and remove the
sampling bias when randomly walking on $G^{(d)}$ where $d$ is less than $k -
1$}.  The analysis method in \textit{PSRW} cannot be applied to the situation
where $d < k - 1$. Our work is not a simple extension of \textit{PSRW}. More
precisely, we propose a new and general framework which subsumes \textit{PSRW}
as a special case.  When choosing the appropriate parameter $d$, our methods
significantly outperform the state-of-the-art methods.

\noindent\textbf{Efficient optimization techniques.} We also introduce two
novel optimization techniques to further improve the efficiency of our framework.
The first one, \textit{corresponding state sampling (CSS)}, modifies the
re-weight coefficient and improves the efficiency of our estimator. The second
technique integrates the non-backtracking random walk in our framework. 
Simulation results show that our optimization techniques can improve the
estimation accuracy.   

\noindent\textbf{Provable guarantees.} We give detailed theoretical analysis on
our unbiased estimators. Specifically, we derive an analytic Chernoff-Hoeffding
bound on the sample size.  The theoretical bound guarantees the convergence of
our methods and provides insight on the factors which affect the performance of
our framework.

\noindent\textbf{Extensive experimental evaluation.} To further validate our
framework, we conduct extensive experiments on real-world networks.
In Section~\ref{sec:experiment}, we
demonstrate that our framework with an appropriate chosen parameter $d$ is more
accurate than the state-of-the-art methods. For $3$-node graphlets, our method with
the random walk on $G$ outperforms \textit{PSRW} up to 3.8$\times$ in accuracy. For
$4, 5$-node graphlets, our methods outperform
\textit{PSRW} up to 10$\times$ in accuracy and 100$\times$ in time cost.  
In summary:
\begin{itemize}[leftmargin=*]
    \item We propose a general Markov Chain and Monte Carlo
        (MCMC) framework to estimate the graphlet concentration.
    \item We derive an unbiased estimator for the framework
        and develop a Chernoff-Hoeffding bound for the sample size.
    \item We introduce two novel optimization techniques to improve the
        accuracy. The simulation results show that the techniques improve
        the estimation efficiency significantly.
    \item We conduct extensive experimental evaluation to 
        support our theoretical arguments.
\end{itemize}

The remainder of this paper is organized as follows.
Section~\ref{sec:preliminary} provides preliminaries.
Section~\ref{sec:framework} explains our framework in detail.
Section~\ref{sec:improvement} presents our two optimization techniques.
Section~\ref{sec:implementation} gives some implementation details.
Section~\ref{sec:experiment} reports our experimental results,
and finally, conclusion is given in 
Section~\ref{sec:conclusion}.

\section{Preliminary}\label{sec:preliminary}
In this section, we first define some notations and concepts used throughout
the paper. Then we review a basic Markov chain theory which serves as the
mathematical foundation of our random walk-based framework.

\subsection{Notations and Definitions}
Networks can be modeled as a graph $G=(V, E)$, where $V$ is the set of nodes
and $E$ is the set of edges. For a node $v\in V$, $d_v$ denotes the degree of
node $v$, i.e., the number of neighbors of node $v$.  A graph with neither
self-loops nor multiple edges is defined as a simple graph. In this work, we
consider \textit{simple}, \textit{connected} and \textit{undirected} graphs. 

\noindent \textbf{Induced subgraph.} A $k$-node induced subgraph is a subgraph
$G_k=(V_k, E_k)$ which has $k$ nodes in $V$ together with any edges whose both
endpoints are in $V_k$. Formally, we have $V_k\subset V$, $|V_k|=k$ and
$E_k\!=\!\{(u, v):\ u, v\in V_k\wedge (u, v)\in E\}$. 

\noindent \textbf{Subgraph relationship graph.} In~\cite{bhuiyan2012guise,
wang2014efficiently}, the authors proposed the concept of subgraph relationship
graph. Here we adopt the definition in~\cite{wang2014efficiently} and define
the $d$-node subgraph relationship graph $G^{(d)}$ as follows. Let $H^{(d)}$
denote the set of all $d$-node \textit{connected induced} subgraphs of $G$.
For $s_i, s_j \in H^{(d)}$, there is an edge between $s_i$ and $s_j$ if and
only if they share $d-1$ common nodes in $G$. We use $R^{(d)}$ to denote the
set of edges among all elements in $H^{(d)}$. Then we define $G^{(d)} =
(H^{(d)}, R^{(d)})$.  Specially, we define $G^{(1)} = G, H^{(1)} = V, R^{(1)} =
E$ when $d = 1$. If the original graph $G$ is connected, then $G^{(d)}$ is also
connected~\cite[Theorem 3.1]{wang2014efficiently}.
Figure~\ref{fig:subgraph_relationship} shows an example of $G^{(2)}$ and
$G^{(3)}$ for a $4$-node graph $G$.  Let $H^{(2)}$ denote all 2-node induced
subgraphs of $G$, then the node set of $G^{(2)}$ is $H^{(2)}$, i.e., node
pairs $\{(1,2), (1,3), (1,4), (2,3), (3,4)\}$.  Note that there is an edge
between node pair $(1,2)$ and $(2,3)$ in $G^{(2)}$ because they share node $2$
in $G$.

\begin{figure}[htbp]
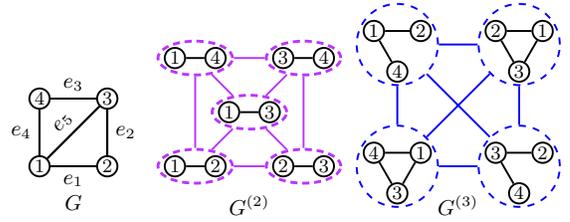
    
    \centering
    \scalebox{0.9}{
    \SUBGRAPHEXAMPLE}
    \caption{Original graph $G$ and its $2$ and $3$-node subgraph relationship graph
    $G^{(2)}$ and $G^{(3)}$.}
    \label{fig:subgraph_relationship}
\end{figure}
In general, constructing $G^{(d)}$ is impractical due to intensive computation
cost.  However, for our random walk-based framework, there is {\em no need} to
construct $G^{(d)}$ in advance since we can generate the neighborhood subgraphs
of $s\in H^{(d)}$ on the fly according to the definition of $G^{(d)}$.

\noindent \textbf{Isomorphic.} Two graphs $G=(V, E)$ and $G^\prime=(V^\prime,
E^\prime)$ are isomorphic if there exists a bijection $\varphi: V\rightarrow
V^{\prime}$ with $(v_i, v_j)\in E \Leftrightarrow (\varphi(v_i),
\varphi(v_j))\in E^{\prime}$ for all $v_i, v_j\in V$. Such a bijection map is
called an \textit{isomorphism}, and we write isomorphic $G$ and $G^\prime$ as
$G\simeq G^\prime$.

\begin{definition}
    Graphlets are formally defined as connected, \textit{non-isomorphic},
    \textit{induced subgraphs} of a graph $G$.
\end{definition}
Figure~\ref{fig:graphletexample} shows all $3, 4$-node graphlets. There are $2$
different $3$-node graphlets and $6$ different $4$-node graphlets. The second
row of Table~\ref{table:fivenodegraphlet} shows $21$ different $5$-node
graphlets. The number of distinct graphlets grows exponentially with the number
of vertices in the graphlets. For example, there are $112$ different $6$-node
graphlets and $853$ different $7$-node graphlets.  Due to the combinatorial
complexity, the computation for graphlets is usually restricted to $3, 4, 5$
nodes~\cite{jha2015path, DBLP:journals/corr/WangTZG15, ahmed2015icdm,
wang2014efficiently, bhuiyan2012guise, wang2015minfer,
hovcevar2014combinatorial}. Note that various applications, e.g.,~\cite{5664,
UganderSFM}, focus on graphlets with no more than $5$ nodes since graphlets
with up to $5$ nodes have the best cost-benefit trade
off~\cite{bhuiyan2012guise}.
\begin{figure}[htb]
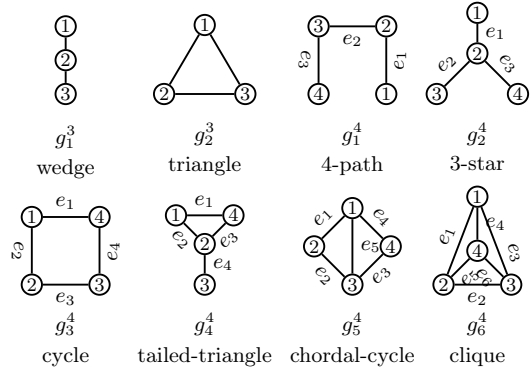

    \centering
    \scalebox{0.9}{
        \MOTIFFOURSUB{1}}
\caption{All $3, 4$-node distinct graphlets.}\label{fig:graphletexample}
\end{figure}

\noindent \textbf{Problem definition.} 
Given an undirected connected graph $G$ and all the distinct $k$-node graphlets
$\mathcal{G}^k = \{g^k_1, g^k_2, \cdots, g^k_m\}$,
where $g_{i}^k$ is the $i^{th}$ type
of $k$-node graphlets. 
Let $C^k_i$ denote the number of induced subgraphs that are
isomorphic to graphlet $g^k_i$. Our goal is to estimate the concentration of
$g^k_i\in \mathcal{G}^k$ for $G$, where the
concentration of $g^k_i$ is defined as
\begin{equation}\label{eq:concentration_def}
    c^k_i \triangleq \frac{C^k_i}{\sum^{m}_{j=1}C^k_j}.
\end{equation}
{\bf \em Example:} The graph $G$ in Figure~\ref{fig:subgraph_relationship} has two
triangles (induced by $\{1, 3, 4\}$ and $\{1, 2, 3\}$ respectively) and two
wedges (induced by $\{4, 1, 2\}$ and $\{2, 3, 4\}$ respectively). Then the
wedge concentration $c^3_1 = 0.5$ and the triangle concentration $c^3_2 = 0.5$ in $G$.

\noindent \textbf{Applications.} In the following, we list some important applications
involving graphlet concentration.
\begin{itemize}[leftmargin=*]
    \setlength\itemsep{0em}
    \item {\em Clustering coefficient.} Friends of friends tend to become
        friends themselves in OSNs. This property is referred to as
        ``transitivity''. Clustering coefficient, which is defined as
        $3C^3_2/(C^3_1 + 3C^3_2) = 3c^3_2/ (2c^3_2 + 1)$, quantifies the
        transitivity of networks, i.e., the probability that two neighbors of
        any vertex are connected. The clustering coefficient is important to
        understand the networks and can be obtained directly with the triangle
        concentration $c^3_2$. 
    \item {\em Large graph comparison and classification.} 
        One can use the graphlet concentration
        as the fingerprint for graph comparison~\cite{ahmed2015icdm}. 
        The $3, 4, 5$-node graphlet concentration was
        proposed as the features for large graph classification
        in~\cite{5664}. The intuition of using the graphlet concentration
        instead of the graphlet counts 
        is that the differences in sizes of
        graphs skew the graphlet counts greatly and may result in poor
        performance of graph classification.
    \item {\em Intrinsic properties analysis.} Graphlet concentration can be
        used to understand the intrinsic properties of the networks. For
        example, Ahmed et al.~\cite{ahmed2015icdm} computed the $4$-node
        graphlet concentration of the OSN Friendster and found that Friendster is
        lack of community related graphlets (e.g., cliques), which indicates
        the collapse of Friendster. Wang et al.  in~\cite{wang2014efficiently}
        used the concentration of directed $3$-node graphlets to analyze the
        differences in functions of two OSNs, i.e., Douban and Sinaweibo. 
\end{itemize}
Computing graphlet concentration is by no way an easier task than computing
graphlet counts. Actually, graphlet counts are just reflections of the sizes of
graphs. In Section~\ref{subsec:unbiased_estimator} we show that the graphlet
counts can be reconstructed easily if we have access to the whole graph data.

\subsection{Random Walk and Markov Chain}\label{subsec:markov_chain}
Our framework is to generate samples from random walks. A \textit{simple random
walk} (SRW) over graph $G$ is defined as follows: start from an initial node
$v_0$ in $G$, we move to one of its neighbors which is chosen uniformly at
random, and repeat this process until certain stopping criteria. The random
walk on graph $G$ can be viewed as a \textit{finite} and
\textit{time-reversible} Markov chain with state space $V$. More specifically,
let $\{X_t\}$ be the Markov chain representing the visited nodes of the random
walk on $G$, the transition probability matrix $\mathbf{P}$ of this Markov
chain is defined as
\begin{equation*}
    \mathrm{P}(i, j) = 
    \begin{cases}
        \frac{1}{d_i}, & \text{if }(i, j)\in E,\\
        0, & \text{otherwise.}
    \end{cases}
\end{equation*}
The SRW has a \textit{unique} stationary distribution $\pmb{\pi}$ where
$\pi(v)=\frac{d_v}{2|E|}$ for $v\in V$ ~\cite{Lovasz1996, olle2000finite}.  The
stationary distribution is vital for
correcting bias of samples generated by random walks.  

\noindent\textbf{Strong Law of Large Numbers.}
Below, we review the {\em Strong Law of Large Numbers} (SLLN)
for Markov chain which serves as the theoretical foundation for our random
walk-based framework.  For a Markov chain with finite state space
$\mathcal{S}$ and stationary distribution $\pmb{\pi}$, we define the
expectation of the function $f: \mathcal{S}\rightarrow \mathbb{R}$ with
respect to $\pmb{\pi}$ as
\begin{equation*}
    \mu  = \mathbb{E}_{\pmb{\pi}}[f] \triangleq \sum_{X\in \mathcal{S}} f(X)\pi(X).
\end{equation*}
The subscript $\pmb{\pi}$ indicates that the expectation is calculated with
the assumption that $X\sim \pmb{\pi}$. Let 
\begin{equation*}
    \hat{\mu}_n = \frac{1}{n}\sum_{s=1}^{n}f(X_s)
\end{equation*}
denote the sample average of $f(X)$ over a run of the Markov chain.  The
following theorem SLLN ensures that the sample mean converges almost surely to its expected value.
\begin{theorem}
    \cite{geyer1998markov, lee2012beyond} For a finite and irreducible Markov
    chain with stationary distribution $\pmb{\pi}$, we have
    \begin{equation*}
        \hat{\mu}_n\ \rightarrow \mathbb{E}_{\pmb{\pi}}[f]\ \text{almost surely
        (a.s.)}
    \end{equation*}
    as $n\rightarrow \infty$ regardless of the initial distribution of the chain.
\end{theorem}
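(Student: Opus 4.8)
The plan is to prove that the ergodic average converges by exploiting two structural facts about a finite irreducible Markov chain: it is positive recurrent with unique stationary distribution $\pmb{\pi}$, and its trajectory decomposes into i.i.d.\ excursions between successive visits to any fixed reference state. First I would reduce the general statement to a statement about occupation times of single states. Since $\mathcal{S}$ is finite we may write $f(X_s)=\sum_{y\in\mathcal{S}} f(y)\,\mathbf{1}\{X_s=y\}$, so that $\hat{\mu}_n=\sum_{y\in\mathcal{S}} f(y)\,\frac{1}{n}\sum_{s=1}^{n}\mathbf{1}\{X_s=y\}$. Because the outer sum is over finitely many states and $f$ is bounded, it suffices to show that for each fixed $y$ the fraction of time spent in $y$, namely $N_n(y)/n$ with $N_n(y)\triangleq\sum_{s=1}^{n}\mathbf{1}\{X_s=y\}$, converges almost surely to $\pi(y)$; summing the $m=|\mathcal{S}|$ almost-sure limits against the weights $f(y)$ then yields $\hat{\mu}_n\to\sum_{y} f(y)\pi(y)=\mathbb{E}_{\pmb{\pi}}[f]$.

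Next I would establish the single-state occupation limit by a renewal argument. Fix $y$ and let $T_1<T_2<\cdots$ be the successive times at which the chain visits $y$, with inter-visit gaps $\tau_k=T_{k+1}-T_k$. By the strong Markov property, once the chain reaches $y$ its future is independent of the past and is distributed as a fresh chain started at $y$; hence the gaps $\{\tau_k\}_{k\ge 1}$ are i.i.d., each distributed as the return time $\tau_y$ of a chain started from $y$. Irreducibility on a finite state space forces positive recurrence, so $\mathbb{E}_y[\tau_y]<\infty$, and the standard identity relating stationary mass to mean return time gives $\mathbb{E}_y[\tau_y]=1/\pi(y)$. Applying the classical (Kolmogorov) SLLN for i.i.d.\ variables to the partial sums $T_k=T_1+\sum_{j=1}^{k-1}\tau_j$ yields $T_k/k\to 1/\pi(y)$ almost surely.

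To convert this into a statement about $N_n(y)$, I would invert the renewal relation $T_{N_n(y)}\le n<T_{N_n(y)+1}$. Dividing through by $N_n(y)$ and letting $n\to\infty$ (so that $N_n(y)\to\infty$ a.s.\ by recurrence) sandwiches $n/N_n(y)$ between $T_{N_n(y)}/N_n(y)$ and $T_{N_n(y)+1}/N_n(y)$, both of which converge to $\mathbb{E}_y[\tau_y]=1/\pi(y)$; hence $N_n(y)/n\to\pi(y)$ almost surely. The independence of this limit from the initial distribution follows because, on a finite irreducible chain, the first passage time to $y$ is almost surely finite from every starting state, so changing the start of the walk merely shifts the trajectory by an a.s.\ finite amount, which is absorbed by the $1/n$ normalization and does not affect the limit.

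The main obstacle is the renewal-theoretic core rather than the bookkeeping: making the excursion decomposition rigorous through the strong Markov property so that the gaps are \emph{genuinely} i.i.d., and justifying the two classical ingredients $\mathbb{E}_y[\tau_y]=1/\pi(y)$ and the inversion $N_n(y)/n\to 1/\mathbb{E}_y[\tau_y]$. Once these are in place, linearity over the finite state space reduces the entire claim to a single application of the i.i.d.\ SLLN.
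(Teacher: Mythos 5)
Your proof is correct. Note first that the paper itself does not prove this statement: it is quoted as a known result with citations to the literature, so there is no ``paper proof'' to match against. What you have written is the classical regeneration (renewal-theoretic) proof of the ergodic theorem for Markov chains, and all the essential steps are in place: the reduction over the finite state space to occupation frequencies $N_n(y)/n$, the i.i.d.\ decomposition of the trajectory into excursions between successive visits to $y$ via the strong Markov property, Kac's formula $\mathbb{E}_y[\tau_y]=1/\pi(y)$ (valid because a finite irreducible chain is positive recurrent), the i.i.d.\ SLLN applied to $T_k/k$, and the sandwich inversion $T_{N_n(y)}\le n<T_{N_n(y)+1}$ to pass from hitting times to counts. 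Two details you handle correctly and that are worth keeping explicit in a full write-up: the first excursion (from the arbitrary initial state to the first visit of $y$) is not distributed like the others, but it is a.s.\ finite and washed out by the $1/n$ normalization, which is exactly how the independence from the initial distribution enters; and no aperiodicity is needed anywhere, since the argument concerns almost-sure time averages rather than convergence of the marginal distributions. The only ingredients you take on faith, Kac's formula and the strong Markov property, are standard and each admits a short self-contained proof for finite chains, so there is no gap.
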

The SLLN is the fundamental basis for most random walk-based graph sampling
methods, or more formally, Markov Chain Monte Carlo (MCMC) samplers,
e.g.,~\cite{lirandom,bhuiyan2012guise, hardiman2013estimating}. The SLLN
guarantees the asymptotic unbiasedness of the estimators based on any finite
and irreducible Markov chain.

\noindent{\bf Mixing Time.}
The mixing time of a Markov chain is the number of steps it takes for a random
walk to approach its stationary distribution. We adopt the definition of mixing
time in~\cite{Mitzenmacher, mohaisen2010measuring, chung2012chernoff}. The
mixing time is defined as follows.
\begin{definition}
    The mixing time $\tau(\epsilon)$ (parameterized by $\epsilon$) of a Markov
    chain is defined as
\begin{equation*}
    \tau(\epsilon) =\max_{X_i\in \mathcal{S}} \min\{t: |\pmb{\pi} -
        \mathbf{P}^t\pmb{\pi}^{(i)}|_1 < \epsilon\},
\end{equation*}
where $\pmb{\pi}$ is the stationary distribution of the Markov chain,
$\pmb{\pi}^{(i)}$ is the initial distribution when starting from state
$X_i\in \mathcal{S}$, $\mathbf{P}^{t}$ is the transition matrix after $t$
steps and $|\cdot|$ is the variation distance between two
distributions\footnote{The variation
distance between two distributions $\mathbf{d}_1$ and $\mathbf{d}_2$ on a
countable state space $\mathcal{S}$ is given by $|\mathbf{d}_1 -
\mathbf{d}_2|_1\triangleq\frac{1}{2}\sum_{x\in \mathcal{S}}|d_1(x) -
d_2(x)|$.
}. 
\end{definition}
Later on, we will use the mixing time based Chernoff-Hoeffding
bound~\cite{chung2012chernoff} to compute the needed sample size to guarantee that
our estimate is within $(1\pm \epsilon)$ of the true value with probability at
least $1-\delta$. 

\section{General Framework}\label{sec:framework}
In this section, we introduce our random walk-based general framework for
graphlet concentration estimation. 
Our framework leverages ``{\em consecutive
steps}'' of the random walk.  We derive an unbiased estimator for graphlet
concentration with the {\em re-weighting techniques} for SRW. With this
unbiased estimator, we can estimate, in general, any $k$-node graphlet
concentration.  We will illustrate how to estimate $k=3, 4, 5$-node graphlet
concentration in particular.
For ease of presentation, we summary the notations in
Table~\ref{table:notations}.
\begin{table}[t]
    \caption{Summary of Notations}\label{table:notations}
    \setlength{\tabcolsep}{1pt}
    \resizebox{0.48\textwidth}{!}{
    \begin{tabular}{|c|c|}
        \hline
        $G$ & $G=(V, E)$, underlying undirected graph\\ 
        $G^{(d)}$& {\small $G^{(d)}\!=\!(H^{(d)},\!R^{(d)})$}, 
        $d$-node subgraph relationship graph\\
        $g^k_i$ & the $i^{th}$ type of $k$-node graphlets\\
        $C^k_i$ & number of subgraphs isomorphic to $g^k_i$ in graph $G$\\
        $c^k_i$ & concentration of graphlet $g^k_i$ in $G$ \\ 
        $d_v$ & degree of node $v$ in $G$\\
        $d_X$ & degree of state $X$, $X$ is also a node in $G^{(d)}$\\
        $\mathcal{M}^{(l)}$ & state space of the expanded Markov chain \\
        $X^{(l)}$ & state in state space $\mathcal{M}^{(l)}$\\
        $V(X^{(l)})$ & set of graph $G$'s nodes contained in state $X^{(l)}$ \\ 
        $s(X^{(l)})$ & subgraph induced by node set $V(X^{(l)})$ \\ 
        $\pmb{\pi}$ & stationary distribution of the random walk \\
        $\PIMB$ & stationary distribution of the expanded Markov chain \\
        $\alpha^k_i$ & number of states $X^{(l)}$ in $\mathcal{M}^{(l)}$ s.t.
        $s\simeq g^k_i$, $s(X^{(l)}) = s$.\\
        $\mathcal{C}(s)$ & set of states $X^{(l)}$ in $\mathcal{M}^{(l)}$ s.t.
        $s(X^{(l)})  = s$.\\ 
        \hline
    \end{tabular}
    }% resizebox
\end{table}

\subsection{Basic Idea}\label{subsec:basic_idea_of_framework}
We are interested in finding the concentration of $g^k_i$, where $k \geq 3$
(since it is not difficult to find for $k=1$ or $2$).  The core idea is to
collect $k$-node graphlet samples through $l \!=\! k - d + 1$ consecutive steps
of the random walk on $G^{(d)}$, where $d \in \{1,\ldots,k-1\}$. 
Specifically, for each state $X_i$ ($i \geq l$) of the random walk on
$G^{(d)}$, suppose we keep the history of previously visited $l-1$ states, then we
consider the subgraph induced by nodes contained in $X^{(l)} = (X_{i - l +
1},\ldots, X_i)$ as a $k$-node graphlet sample.  Note that we only consider
consecutive $l$ steps which visit $k$ distinct nodes in $V$. If the consecutive
$l$ steps on $G^{(d)}$ fail to collect $k$ distinct nodes in $V$, we just
continue the random walk until we find $l$ consecutive steps which contain $k$
distinct nodes in $V$. 

\noindent{\bf \em Example:} Consider the graph in Figure~\ref{fig:subgraph_relationship}. 
    (a) Suppose we want to get $3$-node graphlet samples and choose a random
        walk on $G$. Then $l = 3 -1 + 1 = 3$, i.e., we need to walk for $3$
        steps on $G$ to get the $3$-node graphlet samples.  Assume the random
        walk sequence is $1\rightarrow 2\rightarrow 1\rightarrow 4\rightarrow
        3$. Then we can obtain two $3$-node graphlet samples induced by $\{2,
        1, 4\}$ and $\{1, 4, 3\}$ respectively. The sequence $1\rightarrow
        2\rightarrow 1$ is discarded since it only visits two distinct nodes. 
    (b) If we want to explore possible graphlets $g^4_i$ and have decided to
        perform a random walk on $G^{(2)}$ (i.e., $d=2$).  Assume we make
        $l=3$ transitions on the following states: $(1, 2)\rightarrow(1,
        3)\rightarrow(3,4)$, then we can obtain a $4$-node graphlet sample
        induced by the node set $\{1, 2, 3, 4\}$, 
        because $\{1, 2, 3, 4\}$ is contained in the three states $(1, 2),
        (1, 3)$ and $(3, 4)$. In this case, the
        obtained graphlet sample corresponds to $g_{5}^{4}$ in 
        Figure~\ref{fig:graphletexample}.

The main technical challenge is to remove the bias of the obtained graphlet
samples. To analyze the bias theoretically, we first introduce the concept of
\textit{expanded Markov chain}.

\subsection{Expanded Markov Chain}\label{subsec:expanded_markovchain}
We define an \textit{expanded Markov chain} which remembers consecutive $l$
steps of the random walk on $G^{(d)}= (H^{(d)}, R^{(d)})$. Each $l$ consecutive
steps are considered as a state $X^{(l)} = (X_1, \cdots, X_l)$ of the expanded
Markov chain. Here we use the superscript ``$l$'' to denote the length of the
random walk block. Each time the random walk on $G^{(d)}$ makes a transition,
the expanded Markov chain transits to the next state.  Assume the expanded
Markov chain is currently at state $X^{(l)}_i=(X_1,\cdots, X_l)$, it means the
random walker is at $X_l$. If the walker jumps to $X_{(l+1)}$, i.e., one of the
neighbors of $X_l$, then the expanded Markov chain transits to the state
$X^{(l)}_{i+1}=(X_2, \cdots, X_{l+1})$.  
Let $\mathcal{N} = H^{(d)}$ denote the state space for
the random walk and $\mathcal{M}^{(l)}$ denote the state space for the
corresponding expanded Markov chain.  The state space $\mathcal{M}^{(l)}$
consists of all possible consecutive $l$ steps of the random walk. More
formally, $\mathcal{M}^{(l)}=\{(X_1, \cdots, X_l): X_i\in \mathcal{N}, 1\leq
i\leq l{\ s.t.\ }(X_i, X_{i+1})\in R^{(d)}\ \forall 1\leq i\leq l-1\}\subseteq
\mathcal{N}\times\cdots\times\mathcal{N}$.  For example, if we perform a random
walk on $G$, any $(u, v)$ and $(v, u)$ where $e_{uv}\in E$ are states in
$\mathcal{M}^{(2)}$. 
Note that the expanded Markov chain describes the same process as the random
walk. The reason we define it here is for the convenience of deriving unbiased
estimator.  

The bias caused by the non-uniform sampling probabilities of the graphlet
samples arises from two aspects.  First, the states in the expanded Markov
chain do not have equal stationary probabilities. Second, a graphlet sample
corresponds to several states in the expanded Markov chain. To derive the
unbiased estimator of the graphlet concentration, we compute the stationary
distribution of the expanded Markov chain and the number of states
corresponding to the same graphlet sample.

\noindent{\bf Stationary distribution.}
Let $\pmb{\pi}$ denote the stationary distribution of the random walk and $\PIMB$ denote the stationary distribution of the expanded Markov chain.  $d_{X}$ is
number of neighbors of state $X$ in $G^{(d)}$.  The following theorem states
that $\PIMB$ is unique and its value can be computed with $\pmb{\pi}$.

\begin{restatable}[]{theorem}{stationarydistribution}
    \label{thm:stationary_distribution}
    The stationary distribution $\PIMB$ exists and is unique. 
    For any $X^{(l)} = (X_1, \cdots, X_l)\in \mathcal{M}^{(l)}$, we have 
    \begin{equation}\label{eq:stationary_distribution}
        \begin{split}
            \pi_{e}(X^{(l)}) =\begin{cases}
                \frac{d_{X_l}}{2|R^{(d)}|} &\text{if } l = 1,\\
            \frac{1}{2|R^{(d)}|} &\text{if } l = 2,\\
            \frac{1}{2|R^{(d)}|}\frac{1}{d_{X_2}}\cdots\frac{1}{d_{X_{l - 1}}} 
            &\text{if } l > 2
            \end{cases}
        \end{split}
    \end{equation}
\end{restatable}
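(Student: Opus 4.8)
The plan is to treat the expanded chain as an ordinary finite Markov chain on $\mathcal{M}^{(l)}$ and proceed in three steps: first establish existence and uniqueness from irreducibility, then guess the formula by an equilibrium-window heuristic, and finally verify it rigorously through the global balance equations. For existence and uniqueness I would argue irreducibility of the expanded chain. Since $G^{(d)}$ is connected, the underlying walk on $\mathcal{N}=H^{(d)}$ is irreducible; and given any two windows $(X_1,\dots,X_l)$ and $(Y_1,\dots,Y_l)$ in $\mathcal{M}^{(l)}$, I can steer the walker from $X_l$ along a path to $Y_1$ and then through the edges $(Y_1,Y_2),\dots,(Y_{l-1},Y_l)\in R^{(d)}$, which realizes a sequence of expanded-chain transitions terminating exactly at $(Y_1,\dots,Y_l)$. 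Irreducibility on a finite state space guarantees that $\PIMB$ exists and is unique, so it only remains to identify it.

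To guess $\PIMB$ I would exploit that the expanded chain is nothing but the sliding window of width $l$ over the underlying walk; hence in equilibrium the probability of a window $(X_1,\dots,X_l)$ must equal the probability that the stationary walk traverses $X_1\to\cdots\to X_l$, namely $\pi(X_1)\prod_{i=1}^{l-1}\mathrm{P}(X_i,X_{i+1})$. Substituting $\pi(X_1)=d_{X_1}/(2|R^{(d)}|)$ and $\mathrm{P}(X_i,X_{i+1})=1/d_{X_i}$, the factor $d_{X_1}$ cancels against the first transition probability, leaving only the interior degrees $d_{X_2},\dots,d_{X_{l-1}}$; this reproduces the three cases of \eqref{eq:stationary_distribution}, the degree product being empty for $l=1,2$. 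This only motivates the formula, and the genuine proof is the balance-equation check below.

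The crux is verifying $\sum_{X^{(l)}}\pi_e(X^{(l)})\,\mathrm{P}_e(X^{(l)},Y^{(l)})=\pi_e(Y^{(l)})$. I would first note that the chain is \emph{not} reversible (detailed balance already fails for $l=2$ unless all degrees agree), so I must work with the full balance equation rather than detailed balance. Fixing a target $Y^{(l)}=(Y_1,\dots,Y_l)$, its only predecessors are the states $(Z,Y_1,\dots,Y_{l-1})$ with $Z$ a neighbor of $Y_1$, and each makes the transition with probability $1/d_{Y_{l-1}}$. Since $\pi_e(Z,Y_1,\dots,Y_{l-1})$ carries the interior-degree product $1/(d_{Y_1}\cdots d_{Y_{l-2}})$ and is independent of $Z$, summing over the $d_{Y_1}$ choices of $Z$ cancels the leading $1/d_{Y_1}$, while the transition factor $1/d_{Y_{l-1}}$ supplies the missing tail, leaving exactly $\pi_e(Y_1,\dots,Y_l)$; the cases $l=1,2$ follow by the same cancellation with empty products. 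I would then confirm $\sum_{X^{(l)}}\pi_e(X^{(l)})=1$ by summing out the last coordinate $X_l$: the $d_{X_{l-1}}$ neighbors of $X_{l-1}$ collapse the length-$l$ summand to the length-$(l-1)$ one, so the total mass telescopes downward to the case $l=2$, where $|\mathcal{M}^{(2)}|=2|R^{(d)}|$ forces the sum to $1$.

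I expect the obstacles to be bookkeeping rather than conceptual: keeping the index ranges of the interior-degree product aligned under the shift $(X_1,\dots,X_l)\mapsto(X_2,\dots,X_{l+1})$, and resisting the temptation to invoke detailed balance where only global balance holds. The one place where a short explicit argument is genuinely required is establishing irreducibility of the \emph{window} chain, as opposed to merely that of the underlying walk.
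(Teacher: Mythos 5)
Your proposal is correct and follows essentially the same route as the paper's proof: irreducibility of the expanded chain on its finite state space gives existence and uniqueness, and the formula is then confirmed by checking the global balance equation $\PIMB = \PIMB\cdot\mathbf{P}_e$ together with normalization. The only difference is that you carry out the predecessor-sum cancellation and the telescoping normalization explicitly, whereas the paper compresses these into a one-line ``we can verify.''
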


The proof of Theorem~\ref{thm:stationary_distribution}
is in Appendix~\ref{app:section:stationary_distribution}.
In fact, $\PIMB$ can be derived directly using conditional probability
formula. For the state $X^{(l)} = (X_1, \cdots, X_l)$, $X_1$ is visited with
probability $\pi(X_1) = \frac{d_{X_1}}{2|R^{(d)}|}$ during the random walk and
$\pi_e(X^{(l)})$ can be written as $\frac{d_{X_1}}{2|R^{(d)}|}\times
\frac{1}{d_{X_1}}\times \cdots\times \frac{1}{d_{X_{l}}}$.

\noindent{\bf \em Example:}
Still using the graph in Figure~\ref{fig:subgraph_relationship} as an example,
if we walk on $G^{(2)}$ and visit states $X_1 = (1, 2), X_2 = (1, 3), X_3 = (3,
4)$, then the corresponding state $X^{(3)}_1$ in $\mathcal{M}^{(3)}$ is $(X_1,
X_2, X_3)$. The number of edges in $G^{(2)}$ is 8.  The degrees of $X_1, X_2,
X_3$ are $3, 4, 3$ respectively. Then the stationary distribution of
$X^{(3)}_1$ is $1/16\cdot1/4=1/64$. 

\noindent{\bf State corresponding coefficient.} Let $V(X^{(l)})$ represent the
set of graph $G$'s nodes contained in the state $X^{(l)}$, and $s(X^{(l)})$ be
the subgraph induced by $V(X^{(l)})$. We define $X^{(l)}$ as the
\textit{corresponding state} for the subgraph $s(X^{(l)})$. A key observation
is that a subgraph may have several corresponding states in
$\mathcal{M}^{(l)}$. For example, if we perform a random walk on $G$, then a
triangle induced by $\{u, v, w\}$ in $G$ has 6 corresponding states $(u, v,
w)$, $(u, w, v)$, $(v, u, w)$, $(w, v, u)$, $(v, w, u)$, $(w, u, v)$ in
$\mathcal{M}^{(3)}$. To describe this idea formally, we define the
\textit{state corresponding coefficient} $\alpha^k_i$.
\begin{definition}
    For any connected induced subgraph $s\simeq g^k_i$, we define the set of
    corresponding states for $s$ as $\mathcal{C}(s)=\{X^{(l)}|s(X^{(l)}) = s,
    X^{(l)}\in \mathcal{M}^{(l)}\}$ and state corresponding coefficient
    $\alpha^k_i$ as $|\mathcal{C}(s)|$.  
\end{definition}
Coefficient $\alpha^k_i$ is vital to the design of unbiased estimator. 
If states in $\mathcal{M}^{(l)}$ are
uniformly sampled, then the probability of getting a subgraph isomorphic to $g^k_i$
is $\alpha^k_i C^k_i/|\mathcal{M}^{(l)}|$.  The physical interpretation of
$\alpha^k_i$ is that each subgraph isomorphic to $g^k_i$ is {\em replicated}
$\alpha^k_i$ times in the state space. If $\alpha^k_i$ is larger, we have a
higher chance to get samples of $g^k_i$.  Note that coefficient $\alpha^k_i$
only depends on different graphlets and random walk types. Hence we can compute
$\alpha^k_i$ in advance.
For example, if we perform random walk on $G$ and $l = 3$, we have $\alpha^3_2
= 6$, i.e., any triangle in $G$ has $6$ corresponding states in
$\mathcal{M}^{(3)}$.  In another words, we have $6$ ways to traverse the
triangle $\{u, v, w\}$ with random walk on $G$.

In fact, $\alpha^k_i$ denotes how many ways we can traverse the
$g^k_i$.
Theoretically, $\alpha^k_i$ equals to {\em twice} of the number of Hamilton
paths\footnote{Hamilton path is defined as a
path that visits each vertices in the graph exactly once.}
(each Hamilton path is counted from both directions) in the subgraph
relationship graph of $g^k_i$.  The
detailed computation process of $\alpha^k_i$ is presented in the
Appendix~\ref{app:computation_of_coefficient}. Table~\ref{table:threefournode}
lists the coefficient $\alpha^k_i$ for $k = 3, 4$. $SRW(d)$ in the table
represents a random walk on $G^{(d)}$. Notice that $\alpha^4_2 = 0$ if we
choose $SRW(1)$, i.e., if we walk on $G$, there is no chance to get samples of
$g^4_2$. In this case, we can only estimate the relative concentration for all
$4$-node graphlets except $g^4_2$ ($3$-star)\footnote{This problem can be solved
by using linear equations between counts of non-induced subgraphs and induced subgraphs.
The details are omitted for clear presentation and limitation of space.}.
Table~\ref{table:fivenodegraphlet} lists the coefficient for all $5$-node
graphlets. 

\begin{table}[t]
        \caption{Coefficient $\alpha^k_i$ for $k = 3, 4$ nodes
        graphlets.}\label{table:threefournode}
        \begin{tabular}{|*{11}{c|}}
        \hline
        \multicolumn{2}{|c|}{Graphlet}     
        & $g^3_1$   & $g^3_2$  & $g^4_1$   & $g^4_2$   & $g^4_3$ & $g^4_4$  &
        $g^4_5$ & $g^5_6$ \\ \hline
        \multirow{3}{*}{$\alpha^k_i / 2$} & $SRW(1)$     
        & $1$   & $3$  & $1$   & $\mathbf{0}$   & $4$ & $2$ & $6$   & $12$  
        \\ \cline {2-10} 
        & $SRW(2)$    
        & $1$   & $3$  & $1$   & $3$   & $4$ & $5$ & $12$   & $24$   
        \\ \cline{2-10}
        & $SRW(3)$ 
        & $1/2$   & $1/2$  & $1$   & $3$   & $6$ & $3$ & $6$   & $6$   
        \\ \hline
    \end{tabular}
\end{table}

\begin{table*}[t]
    \begin{minipage}{\textwidth}
        \caption{Coefficient $\alpha^5_i$ for $5$-node
        graphlets}\label{table:fivenodegraphlet}
        \centering
        \footnotesize\setlength{\tabcolsep}{1pt}
        \resizebox{\textwidth}{!}{
        \begin{tabular}{|*{23}{c|}}
        \hline
        \multicolumn{2}{|c|}{ID}     
        & $1$   & $2$  & $3$   & $4$   & $5$ & $6$  & $7$ & $8$ & $9$ 
        & $10$ & $11$  & $12$   & $13$   & $14$ & $15$  & $16$  & $17$ 
        & $18$ & $19$ & $20$ & $21$
        \\ \hline
        \multicolumn{2}{|c|}{Shape}     
        & \adjustbox{valign=m}{\GFIVE{1}}  & \adjustbox{valign=m}{\GFIVE{2}} 
        & \adjustbox{valign=m}{\GFIVE{3}}  & \adjustbox{valign=m}{\GFIVE{4}}
        & \adjustbox{valign=m}{\GFIVE{5}}  & \adjustbox{valign=m}{\GFIVE{6}} 
        & \adjustbox{valign=m}{\GFIVE{7}}  & \adjustbox{valign=m}{\GFIVE{8}}
        & \adjustbox{valign=m}{\GFIVE{9}}  
        & \adjustbox{valign=m}{\GFIVE{10}} 
        & \adjustbox{valign=m}{\GFIVE{11}} & \adjustbox{valign=m}{\GFIVE{12}}
        & \adjustbox{valign=m}{\GFIVE{13}} & \adjustbox{valign=m}{\GFIVE{14}}
        & \adjustbox{valign=m}{\GFIVE{15}} & \adjustbox{valign=m}{\GFIVE{16}}
        & \adjustbox{valign=m}{\GFIVE{17}} & \adjustbox{valign=m}{\GFIVE{18}}
        & \adjustbox{valign=m}{\GFIVE{19}} & \adjustbox{valign=m}{\GFIVE{20}}
        & \adjustbox{valign=m}{\GFIVE{21}} 
        \\ \hline
        \multirow{4}{*}{$\alpha^5_i / 2$} & $SRW(1)$ &1 &\textbf{0}
        &\textbf{0} & 1 & 2 & \textbf{0}
        &5 & 2 & 2 & 4 & 4 & 6 & 7 & 6 & 6 & 10 & 14 & 18 & 24 & 36 & 60
        \\ \cline{2-23} 
        & $SRW(2)$     
        & $1$   & $2$  & $12$   & $5$   & $4$ & $16$ & $5$   & $6$ & $24$ 
        & $24$ & $12$  & $18$  & $15$   &$54$  & $36$   & $42$ & $34$ & $82$ & $76$   & $144$  & $240$ 
        \\ \cline{2-23} 
        & $SRW(3)$    
        & $1$   & $5$  & $24$   & $8$   & $5$ & $24$ & $5$   & $16$ & $30$  
        & $24$ & $16$ & $63$   & $26$   & $63$ & $30$ & $43$   & $63$ & $63$ & $90$   & $90$  & $90$ 
        \\ \cline{2-23}
        & $SRW(4)$ 
        & $1$   & $3$  & $6$   & $3$   & $3$ & $6$ & $10$   & $12$ & $12$  
        & $12$ & $12$ & $10$   & $10$   & $10$ & $12$ & $10$   & $10$ & $10$  & $10$   & $10$  & $10$ 
        \\ \hline
    \end{tabular}
    }
    \end{minipage}
\end{table*}

\subsection{Unbiased Estimator}\label{subsec:unbiased_estimator}
Now we derive an unbiased estimator for the graphlet concentration.
Define an indicator function $h^k_i$ for the state $\X{l}$:
\begin{equation}\label{eq:indicator_function}
    h^k_i(X^{(l)}) = \mathbbm{1}\{s(X^{(l)})\simeq g^k_i\}.
\end{equation}
Function $h^k_i(X^{(l)}) = 0$ if the number of distinct nodes in $X^{(l)}$ is
less than $k$ or $s(X^{(l)})$ (subgraph induced by $G$'s nodes in $X^{(l)}$) is
not isomorphic to $g^k_i$.  Since each subgraph isomorphic to $g^k_i$ is {\em
replicated} $\alpha^k_i$ times in the state space, we have
\begin{equation*}
    \sum_{X^{(l)}\in \mathcal{M}^{(l)}} h^k_i(X^{(l)}) = \alpha^k_i C^k_i. 
\end{equation*}
Using above fact, we have
\begin{equation*}
    \small
    \mathbb{E}_{\PIMB}\left[\frac{h^k_i(X^{(l)})}{\pi_e(X^{(l)})}\right]
    =\sum_{X^{(l)}\in \mathcal{M}^{(l)}} \frac{h^k_i(X^{(l)})}{\pi_e(X^{(l)})}\pi_e(X^{(l)})
    = \alpha^k_i C^k_i.
\end{equation*}
Suppose there are $n$ samples $\{X^{(l)}_s\}^n_{s=1}$ obtained
from the expanded Markov chain. Combining the SLLN in
Section~\ref{subsec:markov_chain} and above equation, we have 
\begin{equation*}
    \hat{\mu}_n\triangleq\!\frac{1}{n}\!\sum_{s=1}^n\!\frac{h^k_i(X^{(l)}_s)}{\pi_e(X^{(l)}_s)} \rightarrow \alpha^k_i C^k_i.
\end{equation*}
Hence, we estimate $C^k_i$ as
\begin{equation}\label{eq:graphletcounter}
    \hat{C}^k_i \triangleq \frac{1}{n} \sum_{s =
1}^{n}\frac{h^k_i(X^{(l)}_s)}{\alpha^k_i\pi_e(X^{(l)}_s)}\rightarrow
    C^k_i.
\end{equation}
The bias of each graphlet sample induced by nodes in $X^{(l)}$ is corrected by
dividing its ``{\it inclusion probability}'' $\alpha^k_i\pi_e(X^{(l)})$.  Note
that this is a special case of \textit{importance sampling}~\cite[Chapter
9]{mcbook}.  With the {\em graphlet count estimator} in
Equation~\eqref{eq:graphletcounter}, we can derive the graphlet concentration
estimator easily.  Define $h^k(X^{(l)}) = \mathbbm{1}\{|V(X^{(l)})| = k\}$ and
$\alpha^k (X^{(l)}) = \sum_{i = 1}^{|\mathcal{G}^k|}\alpha^k_i h^k_i(X^{(l)})$.
The concentration $c^k_i$ is estimated with the following formula:
\begin{equation}\label{eq:concentration_estimator}
    \hat{c}^k_i \triangleq \frac{\sum_{s=1}^{n}h^k_i(X^{(l)}_{s}) /
    \left(\alpha^k_i\pi_e(X^{(l)}_{s})\right)}{\sum_{s=1}^{n} h^k(X^{(l)}_{s}) /
    \left(\alpha^k(X^{(l)}_{s})\pi_e(X^{(l)}_{s})\right)}\ \rightarrow\ c^k_i.
\end{equation}
\noindent {\bf \em Remarks:} The common denominator in $\pi_e(X^{(l)})$ is
$2|R^{(d)}|$, which is usually unknown for graphs with restricted access.
Fortunately, $|R^{(d)}|$ in the numerator and denominator of $\hat{c}^k_i$
cancels out. That means $c^k_i$ can be estimated without knowing $|R^{(d)}|$.
Local information (i.e., degree of nodes, adjacent relationship) collected along
the random walk is enough for the estimation. If we replace $\pi_e(X^{(l)})$
with $\tilde{\pi}_e(X^{(l)})\triangleq 2|R^{(d)}|\pi_e(X^{(l)})$ in
Equation~\eqref{eq:concentration_estimator}, the estimator $\hat{c}^k_i$ remains
unchanged. For the state $X^{(l)} = (X_1, \cdots, X_l)$,
$\tilde{\pi}_e(X^{(l)})$ can be computed directly with the degrees of $X_1, \cdots, X_l$.

Algorithm~\ref{algo:unbiased_estimate} depicts the process of $k$-node graphlet
concentration estimation.  Note that we can easily estimate the count of
graphlets with Equation~\eqref{eq:graphletcounter} if we know $|R^{(d)}|$.  For
random walk on $G$, we have $R^{(1)} = |E|$. For SRW on $G^{(2)}$,
$|R^{(2)}|=\frac{1}{2}\sum_{e_{uv}} (d_u + d_v - 2)$ and a single pass of graph
data is enough to compute this value.

\noindent{\bf Bound on sample size.}
The next important question we like to ask is what is the {\em smallest sample
size} (or random walk steps) to guarantee high accuracy of our estimator?  In
the following, we show the relationship between accuracy and sample size using
the Chernoff-Hoeffding bound for Markov chain.
Let $W$ denote $\max_{X^{(l)}}1 / \pi_e(X^{(l)})$ and
$\alpha_\text{min}=\min_i \alpha^k_i$. 
The following theorem states the
relationship between the estimation accuracy and sample size.
\begin{restatable}[]{theorem}{unionbound}\label{thm:union_bound}
For any $0 < \delta < 1$, there exists a constant
$\xi$ such that
\begin{equation}
    \Pr\left[(1 - \epsilon)c^k_i \leq \hat{c}^k_i \leq (1 +
    \epsilon)c^k_i\right] > 1 - \delta
\end{equation}
when the sample size $n \geq
\xi(\frac{W}{\Lambda})\frac{\tau}{\epsilon^2}(\log
\frac{\norm{\varphi}_{\PIMB}}{\delta})$. Here $\Lambda =
\min\{\alpha^k_iC^k_i, \alpha_\text{min}C^k\}$,
$\tau$ is the mixing time $\tau(1/8)$ of the original random walk, $\varphi$ is
the initial distribution and
$\norm{\varphi}_{\PIMB}$ is defined as $\sum_{X^{(l)}}\varphi^2(X^{(l)})/\pi_e(X^{(l)})$.
\end{restatable}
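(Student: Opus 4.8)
The plan is to regard $\hat{c}^k_i$ as a ratio $N_n/D_n$ of two Markov-chain sample sums, establish a multiplicative concentration for each of $N_n$ and $D_n$ separately, and then push both bounds through the ratio. Write $N_n = \sum_{s=1}^n h^k_i(\X{l}_s)/(\alpha^k_i\,\PIM(\X{l}_s))$ and $D_n = \sum_{s=1}^n h^k(\X{l}_s)/(\alpha^k(\X{l}_s)\,\PIM(\X{l}_s))$, so that $\hat{c}^k_i = N_n/D_n$. The computation in Section~\ref{subsec:unbiased_estimator} already yields $\mathbb{E}_{\PIMB}[N_n/n] = C^k_i$, and the same telescoping over the state space gives $\mathbb{E}_{\PIMB}[D_n/n] = \sum_j C^k_j \triangleq C^k$, since each state of graphlet type $j$ contributes $1/\alpha^k_j$ and there are $\alpha^k_j C^k_j$ such states. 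Thus $N_n/D_n \to C^k_i/C^k = c^k_i$.

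The engine of the proof is the Chernoff-Hoeffding bound for Markov chains in~\cite{chung2012chernoff}: for a chain with stationary distribution $\PIMB$, $(1/8)$-mixing time $\tau$, and initial distribution $\varphi$, and for a weight function $f$ valued in $[0,1]$, the sum $S=\sum_{s=1}^n f(\X{l}_s)$ with stationary mean $\mu = n\,\mathbb{E}_{\PIMB}[f]$ obeys
\begin{equation*}
    \Pr\!\left[|S-\mu|\geq \gamma\mu\right] \leq c\,\norm{\varphi}_{\PIMB}\exp\!\left(-\tfrac{\gamma^2\mu}{72\tau}\right).
\end{equation*}
To apply this to $N_n$, I normalize $g_i(\X{l}) = h^k_i(\X{l})/(\alpha^k_i\,\PIM(\X{l}))$, whose range is $[0,W/\alpha^k_i]$, by its peak value $W/\alpha^k_i$; the resulting $[0,1]$-valued function has stationary mean $\alpha^k_i C^k_i/W$, so the exponent carries $\mu = n\,\alpha^k_i C^k_i/W$ and forcing the tail below $\delta/2$ requires $n = \Omega\!\big(\tfrac{W}{\alpha^k_i C^k_i}\,\tfrac{\tau}{\gamma^2}\log\tfrac{\norm{\varphi}_{\PIMB}}{\delta}\big)$. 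The identical calculation for $D_n$, now using that $h^k/\alpha^k$ has range $[0,W/\alpha_{\text{min}}]$ together with $\mathbb{E}_{\PIMB}[D_n/n]=C^k$ (so that the normalized mean is $\alpha_{\text{min}}C^k/W$), gives $n = \Omega\!\big(\tfrac{W}{\alpha_{\text{min}}C^k}\,\tfrac{\tau}{\gamma^2}\log\tfrac{\norm{\varphi}_{\PIMB}}{\delta}\big)$. Taking the larger of the two requirements replaces the leading factor by $W/\Lambda$ with $\Lambda = \min\{\alpha^k_i C^k_i,\ \alpha_{\text{min}}C^k\}$.

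To close, I set the per-sum tolerance to $\gamma = \epsilon/3$ and union-bound the two failure events. On the complementary good event, $(1-\gamma)nC^k_i \leq N_n \leq (1+\gamma)nC^k_i$ and $(1-\gamma)nC^k \leq D_n \leq (1+\gamma)nC^k$, hence
\begin{equation*}
    \tfrac{1-\gamma}{1+\gamma}\,c^k_i \;\leq\; \frac{N_n}{D_n} \;\leq\; \tfrac{1+\gamma}{1-\gamma}\,c^k_i,
\end{equation*}
and for $\gamma=\epsilon/3$ one has $\tfrac{1+\gamma}{1-\gamma}\leq 1+\epsilon$ and $\tfrac{1-\gamma}{1+\gamma}\geq 1-\epsilon$, delivering the two-sided guarantee with failure probability at most $\delta$. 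Absorbing the constant $72$, the $\gamma=\Theta(\epsilon)$ rescaling, and the factor $2$ from the union bound into a single $\xi$ produces exactly $n \geq \xi(\tfrac{W}{\Lambda})\tfrac{\tau}{\epsilon^2}\log\tfrac{\norm{\varphi}_{\PIMB}}{\delta}$.

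The main obstacle is that the bound of~\cite{chung2012chernoff} is phrased for a single Markov chain, whereas our samples $\X{l}_s$ are states of the \emph{expanded} chain while the theorem charges $\tau$ to the mixing time of the \emph{original} walk on $G^{(d)}$. I would handle this by reading each summand as a fixed length-$l$ window function of the base trajectory and invoking the weighted-sum form of~\cite{chung2012chernoff} along the original chain; because $l = k-d+1$ is a constant, windowing inflates the effective mixing time only by an additive $O(l)$ term that is swallowed by $\xi$. A secondary point requiring care is the license to take $\mu$ as the \emph{stationary} mean independent of $\varphi$ --- with the initial-distribution dependence isolated in the $\norm{\varphi}_{\PIMB}$ prefactor --- which is exactly the normalization that lets the numerator and denominator sub-bounds recombine into the stated form.
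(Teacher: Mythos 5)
Your proposal is correct and follows essentially the same route as the paper's own proof: the same split of $\hat{c}^k_i$ into numerator and denominator sums, the same normalization of each summand to $[0,1]$ (yielding stationary means $\alpha^k_i C^k_i/W$ and $\alpha_{\min}C^k/W$) before invoking the Chernoff--Hoeffding bound of~\cite{chung2012chernoff}, the same $\epsilon/3$ tolerance with a union bound and ratio manipulation, and the same resolution of the expanded-versus-original-chain mixing time (the paper shows the variation distances coincide exactly via the product form of $\pi_e$, where you argue an additive $O(l)$ slack absorbed into $\xi$ --- a cosmetic difference).
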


\noindent{\bf \em Remarks:} The proof for
Theorem~\ref{thm:union_bound} is in Appendix~\ref{app:chernoff_bound}.
From Theorem~\ref{thm:union_bound}, we know
that the needed sample size is {\em linear} with the mixing time $\tau$. This
implies our framework performs better for graphs with smaller mixing time.
Furthermore, some graphlet types are relatively rare in the graphs. For these
rare graphlet types, we need larger sample size to guarantee the same accuracy.
If $\alpha^k_i$ is higher for rare graphlet $g^k_i$, the needed sample size is
smaller. 

\begin{algorithm}[t]
\caption{Unbiased Estimate of Graphlet Statistics}
\begin{algorithmic}[1]\label{algo:unbiased_estimate}
    \REQUIRE sample budget $n$, SRW on $G^{(d)}$, graphlet size $k$
    \ENSURE estimate of $\left[c^k_1, \cdots, c^k_m\right]$ ($m =
    |\mathcal{G}^k|$)
    \STATE{random walk block length $l\leftarrow k - d + 1$ }\\
    \STATE{counter $\hat{C}^k_i\leftarrow 0$ for $i\in\{1,\cdots, m \}$}\\
    \STATE{walk $l$ steps to get the initial state $X^{(l)}\!=\!(X_1,\!\cdots,
    \!X_l)$}\\
    \STATE{random walk step $t\leftarrow 0$}
    \WHILE{$t < n$}
    \STATE{$i\leftarrow$ graphlet type id of subgraph $s(X^{(l)})$}
    \STATE{$\hat{C}^k_i\leftarrow \hat{C}^k_i +
    1/\left(\alpha^k_i\tilde{\pi}_e(X^{(l)})\right)$}
    \STATE{$X_{l + t + 1}\leftarrow$ uniformly chosen neighbor of $X_{l + t}$}
    \STATE{$X^{(l)}\leftarrow (X_{t + 2}, \cdots, X_{l + t + 1})$}
    \STATE{$t\leftarrow t + 1$}
    \ENDWHILE
    \STATE{$\hat{c}^k_i = \hat{C}^k_i/\sum_{j=1}^{m} \hat{C}^k_j$
    for all $i\in\{1,\cdots, m\}$}
    \STATE {\textbf{return} $\left[\hat{c}^k_1, \cdots, \hat{c}^k_m\right]$}
\end{algorithmic}
\end{algorithm}

\section{Improved Estimation}\label{sec:improvement}
We now introduce two novel optimization techniques to reduce the need sample
size in Theorem~\ref{thm:union_bound}.  The first technique is
to correct the bias by combining the stationary probabilities of the
corresponding states.  The second technique integrates the non-backtracking random
walk into our sampling framework.  With these two optimization techniques, we
obtain a more efficient estimator.

\subsection{Corresponding State Sampling (CSS)}
Recall that $\mathcal{C}(s)$ is defined as the set of states which correspond
to the subgraph $s$, i.e., states in $\mathcal{C}(s)$ contain the same set
of nodes as subgraph $s$.
The key observation is that for $X^{(l)}_a, X^{(l)}_b\in \mathcal{C}(s)$, the
``{\em inclusion probabilities}'' $\alpha^k_i\pi_e(X^{(l)}_a)$
and $\alpha^k_i\pi_e(X^{(l)}_b)$ may be different even though $X^{(l)}_a$ and
$X^{(l)}_b$ correspond to the same subgraph. In other words, the inclusion
probability of a subgraph is determined not only by the nodes in the subgraph,
but also the orders in which these nodes are visited.

\noindent{\bf \em Example:}
To illustrate, consider a triangle $\Delta$ induced by nodes $u, v, w$. Suppose
we choose a random walk on $G$. Then both of states $X^{(3)}_1 = (u, v, w)$ and
$X^{(3)}_2 = (v, u, w)$ correspond to the triangle $\Delta$.  We know that
$\alpha^3_2\pi_e(X^{(3)}_1)=\frac{6}{2|E|}\frac{1}{d_v}$ while
$\alpha^3_2\pi_e(X^{(3)}_2)=\frac{6}{2|E|}\frac{1}{d_u}$.  If nodes $u$ and $v$
have different degrees, then $\alpha^3_2\pi_e(X^{(3)}_1)\neq
\alpha^3_2\pi_e(X^{(3)}_2)$, i.e., the same triangle $\Delta$ has different
inclusion probabilities when visited in orders $u, v, w$ and $v, u, w$.   

Based on this observation\footnote{Here we require $l > 2$ since when $l = 2$,
the inclusion probabilities are the same for the states corresponding to the
same subgraph.}, we define ``{\it sampling probability}'' $p(X^{(l)})$ for
the subgraph induced by nodes in $X^{(l)}$. The value of $p(X^{(l)})$ only depends
degrees of nodes in $X^{(l)}$.
\begin{definition}
    For a state $X^{(l)}$ and a subgraph $s = s(X^{(l)})$, 
    we define the sampling probability for $s$ as
    \begin{equation*}
        p(X^{(l)}) \triangleq \sum_{X^{(l)}_j\in \mathcal{C}(s)} \pi_e(X^{(l)}_j).
    \end{equation*}
\end{definition}
In the following, we prove that if we substitute $\alpha^k_i\pi_e(X^{(l)})$ with
$p(\X{l})$ in Equation (\ref{eq:graphletcounter}), we still obtain an unbiased
estimator of $C^k_i$.

\begin{lemma}\label{lemma:equation_expectation}
For a specific subgraph $s\simeq g^k_i$, we have
\begin{equation*}
    \begin{split}
        \mathbb{E}_{\PIMB}[\frac{1}{\alpha^k_i\pi_e{X^{(l)}}}
     \mathbbm{1}
 \{V(\X{l})&=V(s)\}]
        = \quad \quad \quad\\
        \mathbb{E}_{\PIMB}[\frac{1}{p(X^{(l)})}\mathbbm{1}\{V(\X{l})&=V(s)\}]
    \end{split}
\end{equation*}
\end{lemma}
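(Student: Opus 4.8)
The plan is to expand both sides directly from the definition of expectation under the expanded stationary distribution, namely $\mathbb{E}_{\PIMB}[g] = \sum_{X^{(l)}\in \mathcal{M}^{(l)}} g(X^{(l)})\,\pi_e(X^{(l)})$, and show that each side equals the constant $1$. The whole argument is essentially bookkeeping built on two facts from the definitions: that $|\mathcal{C}(s)| = \alpha^k_i$ for $s\simeq g^k_i$, and that the indicator $\mathbbm{1}\{V(\X{l}) = V(s)\}$ picks out exactly the states in $\mathcal{C}(s)$.

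First I would argue that restricting to $V(\X{l}) = V(s)$ is the same as restricting to $\X{l}\in\mathcal{C}(s)$. Since $s(\X{l})$ is the subgraph \emph{induced} by $V(\X{l})$ and $s$ is induced by $V(s)$, the equality $V(\X{l}) = V(s)$ forces $s(\X{l}) = s$, and conversely. Hence in both expectations the indicator collapses the sum over $\mathcal{M}^{(l)}$ to a sum over $\mathcal{C}(s)$. For the left-hand side, the factor $\pi_e(\X{l})$ supplied by the expectation cancels the $\pi_e(\X{l})$ in the denominator, leaving $\sum_{\X{l}\in\mathcal{C}(s)} \frac{1}{\alpha^k_i} = \frac{|\mathcal{C}(s)|}{\alpha^k_i}$, which is $1$ by the definition $\alpha^k_i = |\mathcal{C}(s)|$.

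For the right-hand side, the key observation is that $p(\X{l}) = \sum_{X^{(l)}_j\in \mathcal{C}(s)} \pi_e(X^{(l)}_j)$ depends only on the subgraph $s$, not on which particular corresponding state $\X{l}$ is plugged in; so it takes a single common value, say $P(s)$, over all of $\mathcal{C}(s)$. This lets me pull $1/p(\X{l}) = 1/P(s)$ outside the sum, giving $\frac{1}{P(s)}\sum_{\X{l}\in\mathcal{C}(s)}\pi_e(\X{l}) = \frac{P(s)}{P(s)} = 1$. Both sides therefore equal $1$ and the identity follows.

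I do not expect a genuine obstacle here: once the indicator is recognized as selecting $\mathcal{C}(s)$, the left side reduces via cancellation of $\pi_e$ and the right side via the constancy of $p$ on $\mathcal{C}(s)$. The only point worth stating carefully is the footnoted restriction $l > 2$ from the surrounding text, which guarantees that the $\pi_e$ values within $\mathcal{C}(s)$ genuinely differ (so that the re-weighting is a nontrivial improvement); but the equality itself holds regardless, since in each case the summand telescopes to the cardinality-over-coefficient or the normalized-mass form. I would present the two computations as a short chain of equalities rather than grinding through any estimates.
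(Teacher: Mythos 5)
Your computation is correct and is exactly the direct-from-definitions verification the paper has in mind (the paper only remarks that the lemma ``can be proved directly using the definition'' and omits the details). Both sides indeed reduce to sums over $\mathcal{C}(s)$ and evaluate to $1$, using $|\mathcal{C}(s)|=\alpha^k_i$ on the left and the constancy of $p(\cdot)$ on $\mathcal{C}(s)$ on the right.
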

Lemma~\ref{lemma:equation_expectation} can be proved directly using the
definition. 
It is trivial to verify that the function $h^k_i(X^{(l)})$ in
Equation~\eqref{eq:indicator_function} is the linear combination of function
$\mathbbm{1}\{V(\X{l})=V(s)\}$. Using the linearity of expectation and the
result in Lemma~\ref{lemma:equation_expectation}, we have
\begin{equation*}
    \footnotesize{
\!\mathbb{E}_{\PIMB}\left[\frac{h^k_i(X^{(l)})}{p(X^{(l)})}\right]
= 
\mathbb{E}_{\PIMB}
    \left[\frac{h^k_i(X^{(l)})}{\alpha^k_i\pi_e(X^{(l)})}\right].
}
\end{equation*}
Hence, we can rewrite the estimator in Equation~\eqref{eq:graphletcounter} as
\begin{equation}\label{eq:css_unbiased_estimator}
    \frac{1}{n}\sum_{s=1}^n
    \frac{h^k_i(\X{l}_s)}{p(X^{(l)}_s)} \rightarrow C^k_i\ a.s.
\end{equation}
Similarly, we estimate graphlet concentration as 
\begin{equation}\label{eq:css_unbiased_concentration}
    \frac{\sum_{s=1}^{n}h^k_i(X^{(l)}_s)/p(X^{(l)}_s)}{\sum_{s=1}^{n}h^k(X^{(l)}_s)/p(X^{(l)}_s)}\
    \rightarrow c^k_i \ a.s.
\end{equation}

\noindent{\bf \em Remarks:}
The pseudo code of computing  $p(X^{(l)})$ is presented in
Appendix~\ref{app:computation_of_sampling_probability}.
The estimator in Equation~\eqref{eq:css_unbiased_estimator} corrects the bias
of graphlet samples using the sampling probability $p(X^{(l)})$ instead of
$\alpha^k_i\pi_e(X^{(l)})$.  It indicates that the probability that a subgraph
$s$ is generated by the random walk actually equals to $\sum_{\X{l}_j\in
\mathcal{C}(s)}\pi_e(\X{l}_j)$.

\noindent{\bf \em Example:} 
Table~\ref{table:new_coefficient}
lists the corresponding $p(X^{(l)})$ when we choose $SRW(1)$
for $3$-node graphlets and $SRW(2)$ for $4$-node graphlets.
Labels for nodes and edges are defined in Figure~\ref{fig:graphletexample}.
Note that an edge $e_{uv}$ in the graph $G$ is a node in $G^{(2)}$. The degree
of $e_{uv}$ in $G^{(2)}$ should be $d_u + d_v - 2$, i.e., $d_{e_{uv}} = d_u +
d_v - 2$. Here $d_u$ and $d_v$ are degrees of nodes $u$ and $v$ in $G$.  In
Table~\ref{table:new_coefficient}, the first column is the graphlet type of
subgraph induced by nodes in $X^{(l)}$.  The second column is the random walk
type. The third column is the state corresponding coefficient and the fourth
column is the sampling probability $p(\X{l})$. To further understand the
sampling probability, we give an example of triangle ($g^3_2$).  If we randomly
walk on $G$ and visit nodes $1, 2, 3$ sequentially, then the state we are
visiting is $\X{3}=(1, 2, 3)$.  Assume $\{1, 2, 3\}$ induces a triangle $\Delta$. 
We know that the corresponding states of $\Delta$
are $(1, 2, 3)$, $(3, 2, 1)$, $(1, 3, 2)$, $(2, 3, 1)$, $(2, 1,
3)$ and $(3, 1, 2)$. The sampling probability $p(X^{(l)})$ for $\X{3}=(1, 2,
3)$ is $\frac{1}{2|E|} (2/d_1 + 2/d_2 + 2/d_3)$ while
$\alpha^k_i\pi_e(X^{(l)})$ for $\X{3}$ is $\frac{6}{2|E|}\frac{1}{d_2}$.
Observe that $p(X^{(l)})$ {\em makes better use of the degree information for
the nodes in the subgraph}.
\begin{table}[t]
    \caption{Sampling probability $p(\X{l})$ for all $3, 4$-node
    graphlets.}\label{table:new_coefficient}
    \center
        \resizebox{0.45\textwidth}{!}{
    \begin{tabular}{|c|c|c|c|}
        \hline
        {\bf Graphlet} & ${\bf SRW(d)}$ & {\boldmath$\alpha_i^k / 2$} &
        {\boldmath $2|R^{(d)}|\cdot p(\X{l})/2$} \\ \hline \hline
        $g^3_1$ & \multirow{2}{*}{$SRW(1)$} & 1 & $1/d_{2}$\\
        \cline{1-1}\cline{3-4}
        $g^3_2$ & &3 &  $1 / d_1 + 1 / d_2 + 1 / d_3$\\ \hline
        $g^4_1$ & &1 &  $1 / d_{e_2}$\\ \cline{1-1}\cline{3-4}
        $g^4_2$ & \multirow{6}{*}{$SRW(2)$} & 3 &$\sum_{j=1}^3 1 /
        d_{e_j}$\\
        \cline{1-1}\cline{3-4}
        $g^4_3$ & &4 &  $\sum_{j=1}^4 1 / d_{e_j}$\\ 
        \cline{1-1}\cline{3-4}
        $g^4_4$ & &5 & $2 / d_{e_2} + 2 / d_{e_3} + 1 /
        d_{e_4}$\\ \cline{1-1}\cline{3-4}
        $g^4_5$ & &12 & $2 \sum_{j=1}^5 1 / d_{e_j} + 2 / d_{e_5}$\\ 
        \cline{1-1}\cline{3-4}
        $g^4_6$ & &24 & $4\sum_{j=1}^{6}1 / d_{e_j}$\\ \hline
    \end{tabular}
}
\end{table}

\noindent{\bf Bound on sample size.}
Define $W^{\prime}\triangleq \max_{X^{(l)}}1/p(X^{(l)})$.  When the sample size $n \geq \xi
(\frac{W^{\prime}}{C^k_i})\frac{\tau}{\epsilon^2}(\log\frac{\norm{\varphi}_{\PIMB}}{\delta})$,
the estimate in Equation~\eqref{eq:css_unbiased_concentration} is within $(1\pm
\epsilon)c^k_i$ with probability at least $1-\delta$.  Here, $\xi$ is a constant
independent of $\epsilon, \delta$. $\tau$ is the mixing time $\tau(1/8)$ of the
random walk. $\varphi$ is the initial distribution.  Since
$\max 1/p(X^{(l)})\leq \max 1/\alpha^k_i\pi_e(\X{l})$, the bound on the sample
size for the new estimator in Equation~\eqref{eq:css_unbiased_concentration} is
smaller.

\noindent{\bf Efficiency.} For a fixed sample budget, the efficiency of the
sampling scheme is determined by the asymptotic variance. 
Since the covariance between random variables is difficult to analyze, we
analyze the variance of independent random variables.
Lemma~\ref{lemma:css_variance} indicates that {\em if all states are
independent, CSS sampling scheme is more efficient than the basic one in
Section~\ref{sec:framework}.}
\begin{restatable}[]{lemma}{cssvariance}\label{lemma:css_variance}
    The variance of function $h^k_i(\X{l})/p(X^{(l)})$ is smaller than that of
    function $h^k_i(\X{l})/\left(\alpha^k_i\pi_e(X^{(l)})\right)$ under the
    same stationary distribution $\PIMB$. Specifically, we have 
    \begin{equation*}
        \text{Var}_{\PIMB}\left[h^k_i(\X{l})/p(X^{(l)})\right]\leq
        \text{Var}_{\PIMB}\left[h^k_i(\X{l})/(\alpha^k_i\pi_e(X^{(l)}))\right].
    \end{equation*}
\end{restatable}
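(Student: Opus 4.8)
The plan is to exploit the fact, already established via Lemma~\ref{lemma:equation_expectation}, that the two estimators $h^k_i(\X{l})/(\alpha^k_i\pi_e(\X{l}))$ and $h^k_i(\X{l})/p(\X{l})$ share the \emph{same} expectation $C^k_i$ under $\PIMB$. Since equal means imply that a variance comparison is just a comparison of second moments, it suffices to prove $\mathbb{E}_{\PIMB}[(h^k_i/p)^2]\le \mathbb{E}_{\PIMB}[(h^k_i/(\alpha^k_i\pi_e))^2]$, and the whole argument can be organized around the second moment alone.

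First I would partition the state space $\mathcal{M}^{(l)}$ into the equivalence classes $\mathcal{C}(s)$, grouping together all states that induce the same subgraph $s$. Because $h^k_i(\X{l})$ vanishes unless $s(\X{l})\simeq g^k_i$, only the classes $\mathcal{C}(s)$ with $s\simeq g^k_i$ contribute to either second moment, and on every such class both estimators are nonzero. The crucial structural observation is that $p(\X{l})$ is \emph{constant} across a class, equal to $\sum_{X^{(l)}_j\in\mathcal{C}(s)}\pi_e(X^{(l)}_j)$ for every member, whereas the basic estimator's value $1/(\alpha^k_i\pi_e(\X{l}))$ fluctuates with the individual stationary probabilities. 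This is precisely the Rao--Blackwell picture: $h^k_i/p$ is the conditional expectation of $h^k_i/(\alpha^k_i\pi_e)$ given the induced-subgraph class, which already guarantees both the equal mean and the variance reduction; I would nonetheless present the inequality directly for transparency.

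Next I would compute the two class contributions explicitly. Writing $N=\alpha^k_i=|\mathcal{C}(s)|$ and letting $\pi_1,\dots,\pi_N$ denote the stationary probabilities of the members of $\mathcal{C}(s)$, with $p=\sum_j\pi_j$, the basic estimator contributes $\sum_j\pi_j\,(N\pi_j)^{-2}=N^{-2}\sum_j\pi_j^{-1}$ while the CSS estimator contributes $\sum_j\pi_j\,p^{-2}=p^{-1}$. The desired per-class inequality $p^{-1}\le N^{-2}\sum_j\pi_j^{-1}$ is then equivalent to $N^2\le\big(\sum_j\pi_j\big)\big(\sum_j\pi_j^{-1}\big)$, which is exactly Cauchy--Schwarz (equivalently AM--HM) applied to the vectors $(\sqrt{\pi_j})_j$ and $(1/\sqrt{\pi_j})_j$. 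Summing these per-class inequalities over all $s\simeq g^k_i$ yields the claimed ordering of second moments, hence of variances.

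The main obstacle is not the inequality itself, which collapses to Cauchy--Schwarz once set up, but verifying the setup cleanly: confirming that each class $\mathcal{C}(s)$ carries exactly $N=\alpha^k_i$ members all having $h^k_i=1$, that $p$ is genuinely constant on the class, and that the two estimators really do have identical means so that the variance comparison is legitimate. Care is also needed at the boundary, since the CSS construction presupposes $l>2$: when $l=2$ all corresponding states already carry equal inclusion probability, the two estimators coincide, and the inequality holds with equality.
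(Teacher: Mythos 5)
Your proposal is correct and follows essentially the same route as the paper: reduce to a comparison of second moments (since Lemma~\ref{lemma:equation_expectation} gives equal means), decompose the state space into the classes $\mathcal{C}(s)$ for $s\simeq g^k_i$, compute the per-class contributions $p^{-1}$ versus $(\alpha^k_i)^{-2}\sum_j \pi_e(\X{l}_j)^{-1}$, and conclude by the AM--HM (equivalently Cauchy--Schwarz) inequality, which is exactly the paper's ``harmonic mean is not greater than arithmetic mean'' step. The Rao--Blackwell framing you add is a nice conceptual bonus but does not change the argument.
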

See Appendix~\ref{app:css_variance} for more details on the proof.

\subsection{Non-backtracking Random Walk}
Many techniques have been proposed to improve the efficiency of random
walk-based algorithms, for example, non-backtracking random
walk~\cite{lee2012beyond}, random walk leveraging walk
history~\cite{DBLP:journals/pvldb/Zhou0D15}, rejection controlled
Metropolis-Hasting random walk~\cite{lirandom}, random walk with
jump~\cite{Xu2014}, etc. In this subsection, we introduce non-backtracking
random walk (NB-SRW) to our estimation framework {\em as an example to show how
to integrate these techniques with our framework}.

The basic idea of NB-SRW is to avoid backtracking to the previously visited node.
Due to the dependency on previously visited node, the random walk on $G^{(d)}=
(H^{(d)}, R^{(d)})$ is not a Markov chain on state space $H^{(d)}$. However, we
can define an augmented state space $\Omega=\{(i, j): i, j\in H^{(d)}, {\ s.t.\
} (i, j)\in R^{(d)}\}\subseteq H^{(d)}\times H^{(d)}$. The transition matrix
$\mathbf{P}^\prime\triangleq\{P^\prime(e_{ij}, e_{lk})\}_{e_{ij}, e_{lk}\in
\Omega}$ for the NB-SRW is defined as follows
\begin{equation*}
    \mathrm{P}^{\prime}(e_{ij}, e_{jk})=\left\{
         \begin{array}{lr}
             \frac{1}{d_j-1}, & \text{if } i\neq k \text{ and } d(j)
             \geq 2, \\
             0, &\text{if } i=k\text{ and } d(j)\geq 2, \\
             1, &\text{if } i=k \text{ and } d(j) = 1.
     \end{array}\right.
\end{equation*}
All other elements of matrix $\mathbf{P}$ are zeros. Let $\pmb{\pi}^\prime$ be
the stationary distribution of the NB-SRW. A useful fact is that NB-SRW
preserves the stationary distribution of the original random walk, i.e.,
$\pi^\prime(i)=d_i/2|R^{(d)}|$ and $\pi^\prime(e_{ij})=1/2|R^{(d)}|$. To apply
NB-SRW, we just need to replace our previously used simple random walk with
NB-SRW. The estimator in Equation~\eqref{eq:concentration_estimator}
and~\eqref{eq:css_unbiased_concentration} can still be used except that we need
to replace the $\pi_e(\X{l})$ with $\pi^{\prime}_e(X^{(l)})$. 
Define the nominal degree for $X_i\in H^{(d)}$ as $d^\prime_{X_i}=\max\{d_{X_i}
- 1, 1\}$. For any $X^{(l)} = (X_1, \cdots, X_l)$, the $\pi^{\prime}_e(\X{l})$ can be
computed by substituting $d_X$ with $d^{\prime}_X$ in
Equation~\eqref{eq:stationary_distribution}.

Applying NB-SRW helps us eliminate some ``{\em invalid}'' states from the state
space.  For example, if we want to estimate $3$-node graphlet concentration
using $SRW(1)$, we need to walk for $3$ steps on $G$ to collect a sample. 
It is possible for us to get only $2$ distinct nodes from
$3$ steps. We call such samples as invalid samples.
Figure~\ref{fig:validsample_example} shows an example of valid sample
and invalid sample. The invalid samples do not contribute to the estimation.
If we apply NB-SRW here, it is less likely to get such invalid samples. Hence
NB-SRW helps to improve the estimation efficiency of our framework.
\begin{figure}[htb]
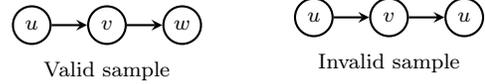

    \centering
    \scalebox{1} { \NBSRWEXAMPLE }
    \caption{Example of ``valid'' \& ``invalide'' 
    samples}\label{fig:validsample_example}
\end{figure}

\section{Implementation Details}\label{sec:implementation}

\noindent{\bf Populate Neighbors of Graphlet.}
We explain how to obtain neighbors of currently visited state (subgraph) $s\in
H^{(d)}$ on the fly.  Obtaining a uniformly chosen neighbor of a node in $G$ or
$G^{(2)}$ takes $O(1)$ time. We give details about the random walk on $G^{(2)}$.
The set of neighbors of $e_{uv}$ is $N(e_{uv})=\{e_{uw}: w\in N(u)\backslash
v\}\cup \{e_{vz}: z\in N(v)\backslash u\}$.  Recall that $N(v)$ denotes the set
of neighbors of $v\in V$.  To ensure each neighbor of $e_{uv}$ is chosen
uniformly, we first select one of $u$ and $v$ with probability $d_u / (d_u +
d_v)$ and $d_v / (d_u + d_v)$ respectively.  Suppose we have chosen node $u$,
we then randomly select a node $w\in N(u)$. If $w\neq v$, $e_{uw}$ is
proposed as the next step after $e_{uv}$. Otherwise, we restart the process until we
obtain a neighbor of $e_{uv}$.  Based on above discussion, we know that getting
a uniformly sampled neighbor of a state in $G^{(2)}$ can be done in constant
time.

To obtain a randomly chosen neighbor of $s$ in $G^{(d)}$, we can replace one
node $v_i$ in $V(s)$ with a node $v_j\in \cup_{v\in V(s)\backslash v_i} N(v)$
and ensure the connectivity of this new subgraph induced by node set
$\{v_j\}\cup V(s)\backslash v_i$. Here $V(s)$ is the node set of the subgraph
$s$. However, to ensure the neighbors of $s$ are uniformly sampled, we need to
generate all neighbors of $s$ when $d >2$, which requires $d$ merge operations
over $d - 1$ adjacent lists of nodes in the currently visited subgraph. Hence,
the time complexity of selecting a random neighbor for a state in $G^{(d)}$ is
simply $O(d^2\frac{|E|}{|V|})$ when $d > 2$.

\noindent{\bf Identify Graphlet Types.}
According to Algorithm~\ref{algo:unbiased_estimate}, we also need to identify
the graphlet type of obtained samples at each step.
Assume the last $l + 1$ steps of the random walk on $G^{(d)}$ are $\{X_0, X_1,
\cdots, X_l\}$. Our task is to identify the graphlet type of the subgraph
induced by $V_k = \cup_{1\leq i\leq l}V(X_i)$.  
Let $G_k = (V_k, E_k)$ denote the
subgraph induced by $V_k$.  
The first step of identifying
the graphlet type of $G_k$ is to obtain the $E_k$.  
Assume the adjacent lists are stored in sorted
order. The naive solution to obtain $E_k$ needs $k(k-1)/2$ binary searches.  
Now, we show that $k - 1$ binary search operations are sufficient to determine
$E_k$.  The core idea is to reuse previous computation results.  
Let $V^{\prime}_k =
\cup_{0\leq i\leq l-1}V(X_i)$ and $G^{\prime}_k = (V^{\prime}_k, E^{\prime}_k)$
be the subgraph induced by $V^{\prime}_k$.  
Suppose we already obtain
$E^{\prime}_k$.  According to the definition of $SRW(d)$, the
set $V_k = \cup_{1\leq i\leq l}V(X_i)$ can be rewritten as
$V^{\prime}_k\backslash V_\text{out}\cup V_\text{in}$ where $V_\text{out}=
V(X_0)\backslash V(X_1)$ and $V_\text{in} = V(X_l)\backslash V(X_{l-1})$. Here,
$|V_\text{out}| = 1$ and $|V_\text{in}| = 1$. The key observation is that we
only need to determine the adjacent relationship between $v_\text{in}\in
V_\text{in}$ and $v\in V_k\backslash V_\text{out}$, which requires $k-1$ binary
searches on $N(v_\text{in})$ since the adjacent relationship between
nodes in $V_k\backslash V_\text{in}$ can be obtained directly from
$E^{\prime}_k$. 
The time complexity of obtaining the edge lists of graphlet
samples at each step is $O(k\log \frac{|E|}{|V|})$.

With the set $E_k$, we can now determine the graphlet type of the subgraph $G_k =
(V_k, E_k)$. Note that the time to determine the graphlet types of samples
grows exponentially with $k$ since the number of distinct graphlets grows
exponentially with $k$.  However, for $k=3, 4, 5$-node graphlets, we can
determine the graphlet types more easily with the \textit{degree-signature} of
different graphlets~\cite{bhuiyan2012guise}, i.e., for the subgraph $G_k =
(V_k, E_k)$, we just need to compute the degree of each node and derive the
degree-signature, then compare it with the degree-signature of different
graphlet types~\cite{bhuiyan2012guise}. 

Note that our framework does not need any preprocessing of the graph data.
The time complexity of our
framework is $O(n)$ when $d\leq 2$ and $O(n d^2 \frac{|E|}{|V|})$ when $d >
2$ for the $k$-node graphlets. Here $n$ is
the random walk steps.

\section{Experimental Evaluation}\label{sec:experiment}
We evaluate the performance of our framework on $3, 4, 5$-node
graphlets. The algorithms are implemented in \verb!C++! and we run experiments
on a Linux machine with Intel 3.70GHz CPU. We aim to answer the following
questions.
\begin{itemize}
    \item[Q1:] How accurate is our framework? Do the optimization techniques
        really improve the accuracy?
    \item[Q2:] How does the random walk on $G^{(d)}$ affect the performance? What
        is the best parameter $d$ for $3, 4, 5$-node graphlets?
    \item[Q3:] Does our framework provide more accurate estimation than the
        state-of-the-art methods?
\end{itemize}

\subsection{Experiment Setup}
We use publicly available real-world networks to evaluate our framework. We
focus on undirected graphs by removing the directions of edges if the graphs
are directed. We only retain the largest connected component (LCC) of the
graphs and discard the rest nodes and edges. The detailed information about the
LCC of the graphs is reported in Table~\ref{table:dataset}.

The exact graphlet concentration is obtained through well-tuned
enumeration methods~\cite{ahmed2015icdm, hovcevar2014combinatorial}.  For
$5$-node graphlets, the ground-truth value is only computed for the four
smaller datasets due to the extremely high computation cost. As an example, we
state the exact concentration of the $3, 4, 5$-node cliques (i.e., $c^3_2$,
$c^4_6$, and $c^5_{21}$) in Table~\ref{table:dataset}, and we can see that the
$3, 4, 5$-node cliques take a relatively low percentage.

\begin{table}[t]
\centering
\caption{Datasets} \label{table:dataset}
\resizebox{0.48\textwidth}{!}{
\begin{tabular}{|l|c|c|c|c|c|c|c||}
    \hline
    Graph     & $|V|$   & $|E|$ & \specialcell{$c^3_2$
    \\ ($10^{-2}$)} & \specialcell{$c^4_6$ \\ ($10^{-3}$)} &
    \specialcell{$c^5_{21}$\\ ($10^{-5}$)} \\ \hline \hline
    BrightKite~\cite{konect} 
    & 57K & 213K & 3.98 & 1.447   & 4.661 \\
    Epinion~\cite{snapnets}    
    & 76K & 406K & 2.29 & 0.225   & 0.147 \\
    Slashdot~\cite{snapnets}   
    & 77K & 469K & 0.82 & 0.092   & 0.115 \\
    Facebook~\cite{konect}  
    & 63K & 817K & 5.46 & 1.419   & 2.511 \\
    Gowalla~\cite{snapnets}   
    & 197K& 950K & 0.80 & 0.008   & - \\
    Wikipedia~\cite{konect}  
    & 1.9M& 36.5M& 0.10 & 0.00009 & - \\
    Pokec~\cite{konect}
    & 1.6M& 22.3M &1.6 & 0.035 & - \\
    Flickr~\cite{konect}     
    & 2.2M& 22.7M& 3.87 & 0.886   & - \\
    Twitter~\cite{nr} 
    & 21.3M &265M & 0.86& 0.0166  &-\\ 
    Sinaweibo~\cite{nr} 
    & 58.7M &261M & 0.03& 0.00008&-\\
    \hline
    \end{tabular}
}
\end{table}

We use the following normalized root mean
square error (NRMSE) to measure the estimation accuracy: 
\begin{equation*} 
    {\small
    \text{NRMSE}(\hat{c}^k_i)
    \triangleq\frac{\sqrt{\mathbb{E}[(\hat{c}^k_i - c^k_i)^2]}}{c^k_i} =
    \frac{\sqrt{\text{Var}[\hat{c}^k_i] + (c^k_i - \mathbb{E}[\hat{c}^k_i])^2}}{c^k_i},
    }
\end{equation*} 
where $\hat{c}^k_i$ is the estimated value and $c^k_i$ is the ground-truth.
NRMSE$(c^k_i)$ is a combination of variance and bias of the estimate
$\hat{c}^k_i$, both of which are important to characterize the accuracy of
the estimator.

The names of the methods are given in the following way.  \textsf{SRWd}
represents random walks on $G^{(d)}$. If the method also integrates the
optimization techniques \textit{corresponding state sampling} (\textsf{CSS}) and
\textit{non-backtracking random walk} (\textsf{NB-SRW}), we append \textsf{CSS} and
\textsf{NB} at the end of the method name, respectively.  For example,
\textsf{SRW1CSSNB} means that we perform the random walk on $G^{(1)}$ (i.e., $G$) and
use both techniques \textit{CSS} and \textit{NB-SRW} for further optimization.

\subsection{Framework Evaluation}
\subsubsection{Accuracy}
\noindent{\bf Comparison between different random walks.} We first demonstrate
the effects of the parameter $d$ and the optimization techniques on the estimation
accuracy.  The estimation results are presented in
Figure~\ref{fig:different_random_walks} for all the graphs whose ground truth
has been obtained. Note that only graphlets $g^3_2$, $g^4_6$, $g^5_{21}$ are
presented since they have the smallest concentration value among $3, 4, 5$-node
graphlets respectively and are observed to have the least accurate estimates.
The NRMSE is estimated over 1,000 independent simulations except that the NRMSE
of \textsf{SRW4} is only estimated over 100 simulations since the random walk
on $G^{(4)}$ is relatively slow.  We do not study \textsf{SRW1} for $4, 5$-node
graphlets because $\alpha^4_2, \alpha^5_2, \alpha^5_3, \alpha^5_6$ are zeros
with \textsf{SRW1}.  The sample size, i.e., the random walk steps, equals to
$20$K for all methods in the framework. We summarize our findings as follows.
\begin{itemize}[leftmargin=*]
    \item The method \textsf{SRW1CSSNB}, i.e., random walk on $G$ with
        optimization techniques \textsf{CSS} and \textsf{NB-SRW}, has the highest
        accuracy in estimating the concentration of $3$-node graphlets. The
        method \textsf{SRW2CSS} has the best performance in estimating $4,
        5$-node graphlet concentration.
    \item The best methods in our framework provide accurate estimates.  The
        NRMSE of \textsf{SRW1CSSNB} for graphlet $g^3_2$ is in the
        range $0.025\sim 0.13$. The NRMSE of \textsf{SRW2CSS} for
        graphlets $g^4_{6}, g^5_{21}$ is in the range $0.08\sim 4.3$, and $0.20
        \sim 0.86$ respectively. Note that we only use 20K random walk steps.
        The sample size is small compared with the graph size. 
        E.g., we only exploit 0.03\% nodes of Sinaweibo.
    \item For the same graphlets, the random walk on $G^{(d)}$ with smaller $d$
        outputs better estimates. E.g., \textsf{SRW1} outputs estimates of
        $c^3_2$ which has $3.8\times$ smaller NRMSE than that of \textsf{SRW2}
        for Twitter; \textsf{SRW2} produces estimates of $c^4_6$ with
        $10\times$ smaller NRMSE than \textsf{SRW3} for Gowalla. In conclusion,
        we should choose $d = \{1, 2, 2\}$ for $3, 4, 5$-node graphlets respectively.
    \item The optimization technique \textsf{CSS} improves the accuracy of estimates 
        a lot while the performance gain of \textsf{NB-SRW} is
        negligible. For example, \textsf{SRW1CSS} reduces the NRMSE of
        \textsf{SRW1} more than $3$ times for Wikipedia and
        Sinaweibo when estimating the triangle ($g^3_2$) concentration.  
\end{itemize}

\begin{figure*}[t]
    \centering
    \setcounter{subfigure}{0}
    \subfloat[][Triangle ($g^3_2$, \scalebox{0.4}{\GTHREE{2}})]
    {\label{fig:3nodegraphletnrmse}
        \includegraphics[width=0.96\textwidth, height=0.12\textwidth]{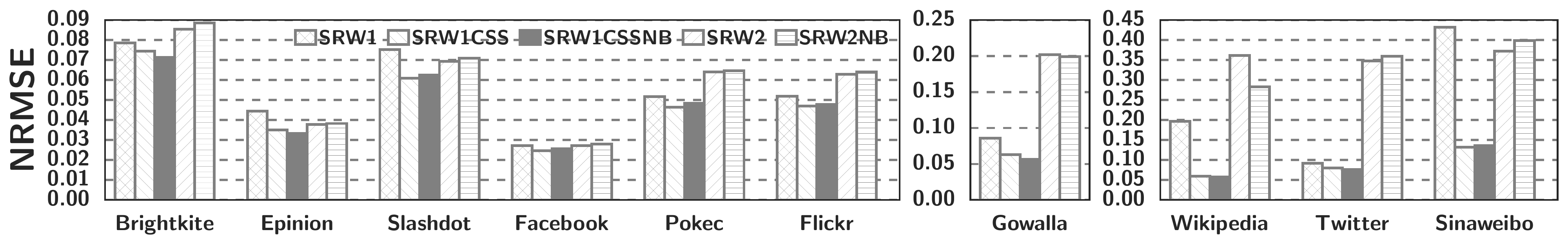}}\\
    \subfloat[][$4$-node clique ($g^4_6$, \scalebox{0.4}{\Gfour{6}})]
    {\label{fig:4nodeid6graphletnrmse}
        \includegraphics[width=0.65\textwidth, height=0.12\textwidth]{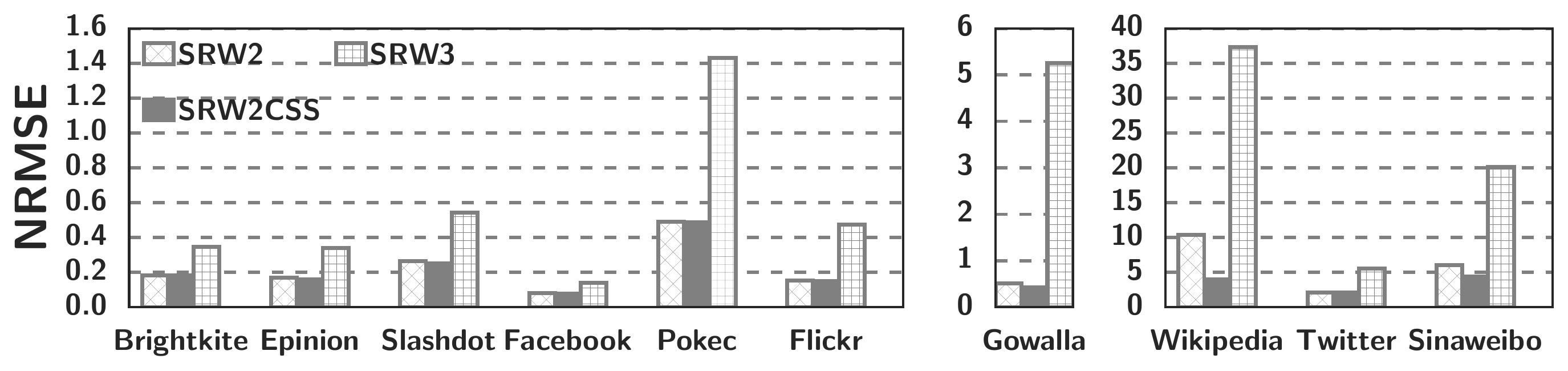}}
    \subfloat[][5-node clique ($g^5_{21}$,
    \adjustbox{valign=m}{\scalebox{0.6}{\GFIVE{21}}})]
    {\label{fig:5id21graphletnrmse}
    \includegraphics[width=0.31\textwidth,height=0.12\textwidth]{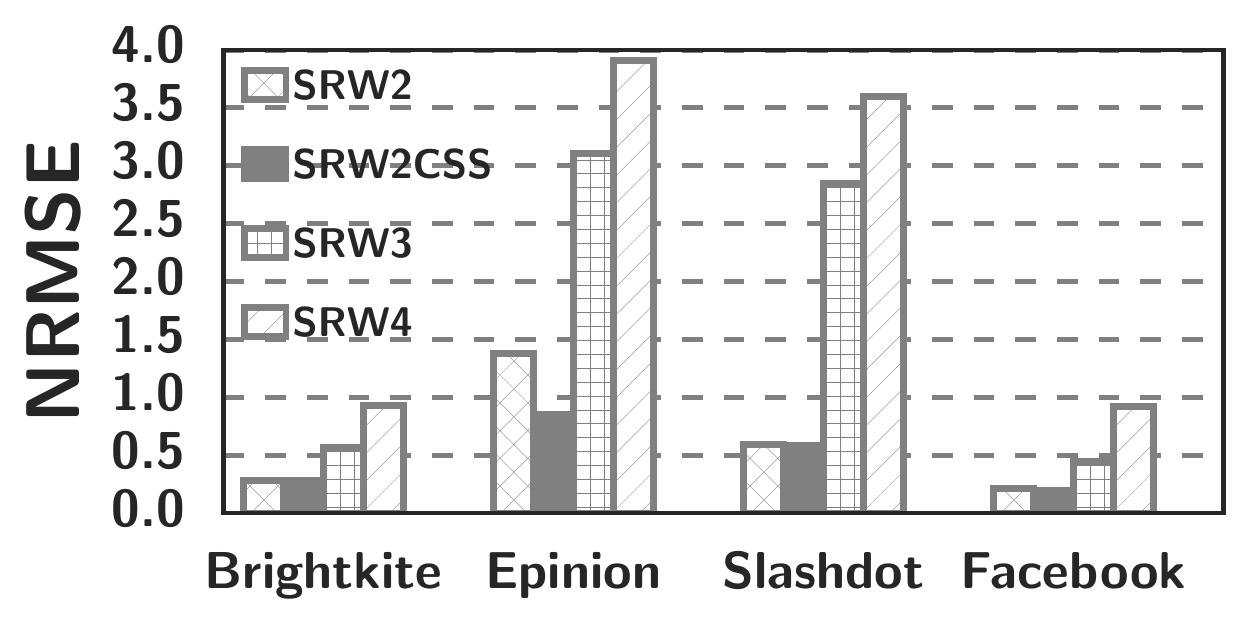}}
    \caption{NRMSE of concentration estimates (sample size =
    20K).}\label{fig:different_random_walks}
\end{figure*}

\begin{figure}[htb]
    \centering
    \captionsetup[subfigure]{justification=centering, margin={0.0cm, 0.0cm}}
    \subfloat[][Weighted concentration]{
    \includegraphics[width=0.23\textwidth,
    height=0.15\textwidth]{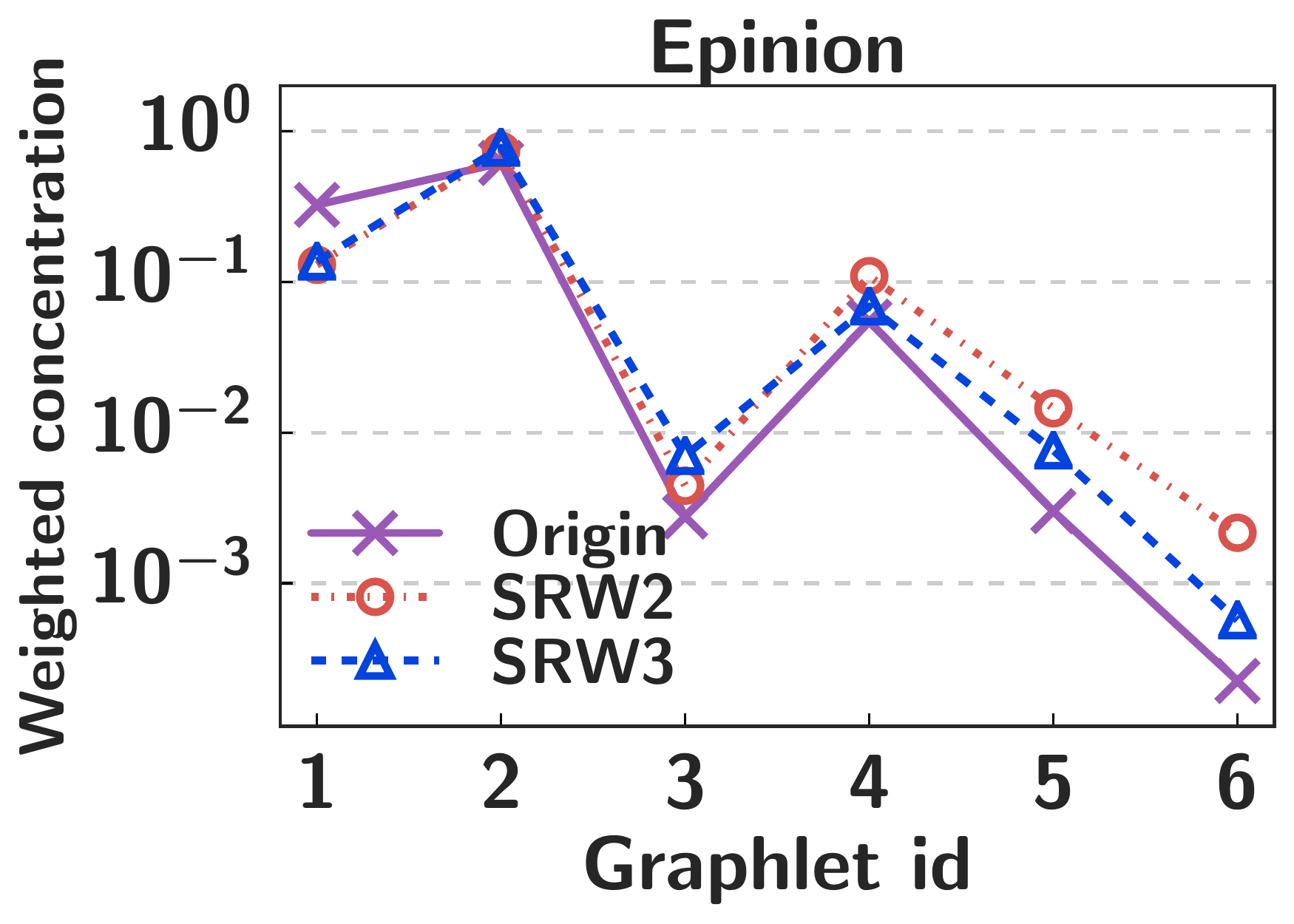}\label{subfig:weightedconcentration}
    }
    \subfloat[][NRMSE] {
    \includegraphics[width=0.23\textwidth,
    height=0.15\textwidth]{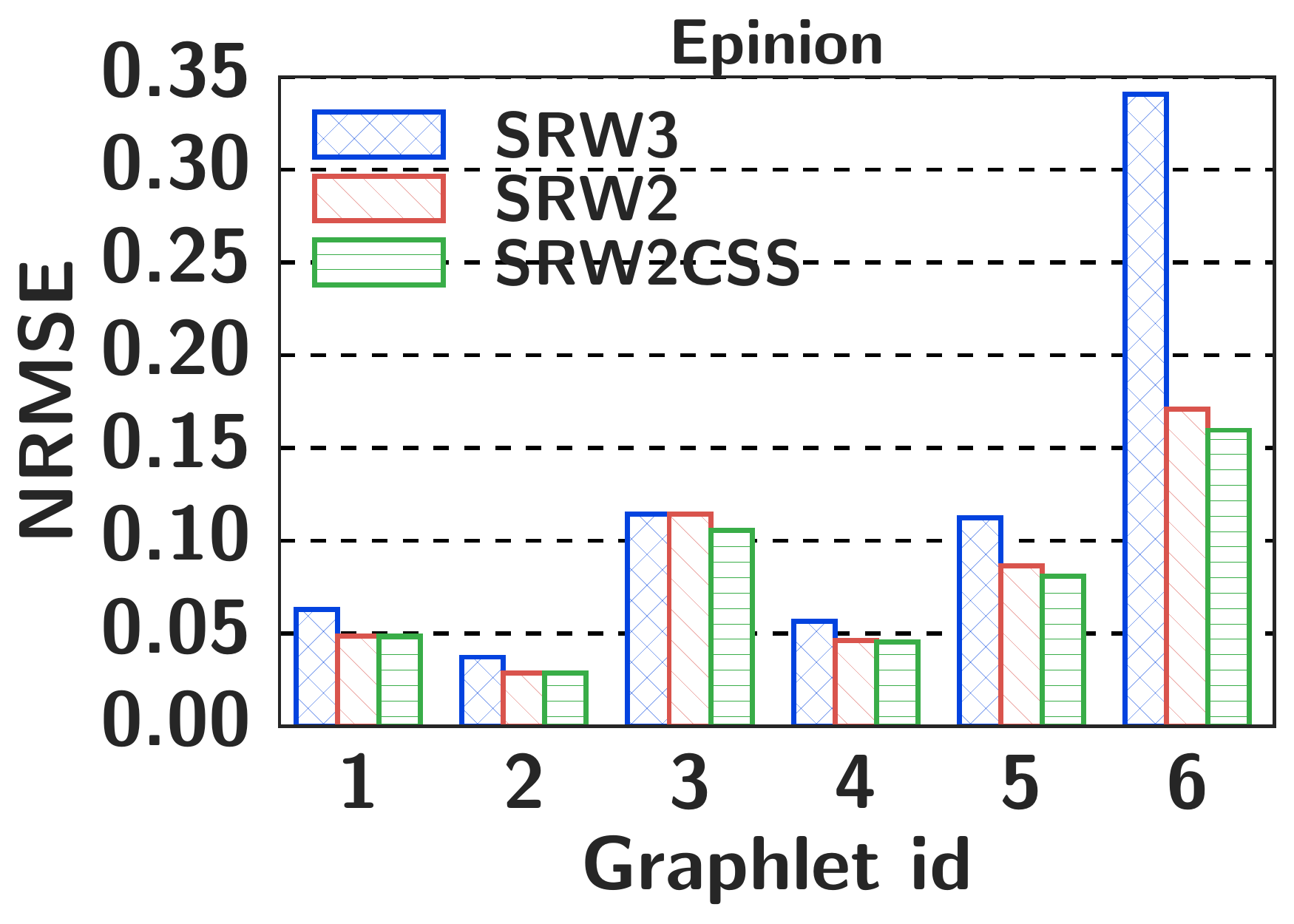}\label{subfig:individualnrmse}
    }\\
    \caption{Relationship between the weighted concentration and the accuracy
    (sample size = 20K).}\label{fig:4nodegraphletsimulation}
\end{figure}

\noindent\textbf{Weighted concentration and accuracy.} We now introduce the
concept of {\it weighted concentration} to explain how the parameter $d$
affects the performance.
From Equation~\eqref{eq:graphletcounter}, we know that $\frac{1}{n} \sum_{s =
1}^{n}\frac{h^k_i(X^{(l)}_s)}{\pi_e(X^{(l)}_s)}\rightarrow \alpha^k_i
C^k_i$. To further understand the performance of our framework, we define the
\textit{weighted concentration} for graphlet $g^k_i$ as $\alpha^k_i
C^k_i/(\sum_{j=1}^{m}\alpha^k_j C^k_j)$.
As an example, we plot the weighted concentration of $4$-node graphlets for 
Epinion in Figure~\ref{subfig:weightedconcentration}. From
Figure~\ref{fig:4nodegraphletsimulation}, we know the parameter $d$ and
the concentration value affect the performance of our framework in the following way.
\begin{itemize}[leftmargin=*]
    \item {\bf \em Effect of the parameter $d$.} Compared with the original
        concentration, the weighted version lifts the percentage of relatively
        rare graphlets, i.e., $g^4_3$
        (\adjustbox{valign=m}{\scalebox{0.4}{\Gfour{3}}}), $g^4_5$
        (\adjustbox{valign=m}{\scalebox{0.4}{\Gfour{5}}}), and $g^4_6$
        (\adjustbox{valign=m}{\scalebox{0.4}{\Gfour{6}}}).  For the graphs
        Epinion, the weighted concentration
        of \textsf{SRW2} is much larger than that of \textsf{SRW3} for
        graphlets $ g^4_5, g^4_6$, while for graphlet $g^4_3$, the weighted
        concentration of \textsf{SRW3} is slightly higher than that of
        \textsf{SRW2}. For example, the weighted concentration for $g^4_6$ with
        \textsf{SRW2} is about $8\times$ higher than the original one while
        \textsf{SRW3} only increases the concentration $2\times$ higher. In
        other words,  \textsf{SRW2} increases the probability of getting a
        sample of $g^4_6$ about 8$\times$ higher compared with uniform sampling
        of graphlets, while \textsf{SRW3} only increases the probability about
        $2\times$ higher. Consequently, the NRMSE of \textsf{SRW2} in
        estimating $c^4_6$ is $2\times$ smaller than that of \textsf{SRW3}.
        From Theorem~\ref{thm:union_bound} we know that more samples are needed
        to achieve specific accuracy for
        graphlets with smaller $\alpha^k_iC^k_i$. Hence the error of the
        estimation for rare graphlets is the major error source. If we are able
        to assign rare graphlets higher weighted
        concentration, the overall performance is less likely to degenerate.
        {\em Based on above discussion, we conclude that random walks on $G^{(d)}$
        with smaller $d$ have better overall performance since they have a higher
        chance of getting the relatively rare graphlets.}

    \item {\bf \em Effect of the concentration value.} From
        Figure~\ref{subfig:individualnrmse} we can see that \textsf{SRW2} and
        \textsf{SRW2CSS} perform better than \textsf{SRW3} for all the $4$-node
        graphlets except $g^4_3$ (because the weighted concentration of $g^4_3$
        computed with \textsf{SRW3} is higher than that of \textsf{SRW2}).
        Besides, the smaller the concentration value, the higher the estimation
        error, which is consistent with our analysis in
        Theorem~\ref{thm:union_bound}. 
\end{itemize}

\subsubsection{Convergence}
To show the convergence properties of the methods, we vary the
sample size from 2K to 20K in increment of 1K. We present the simulation
results in Figure~\ref{fig:convergence_analysis} for $3, 4, 5$-node cliques
since they have the smallest concentration value and the least accuracy.  Due
to the space limitation, we only choose 6 representative graphs in the datasets
for the presentation.  We summarize the observations as follows. 
\begin{itemize}[leftmargin=*]
    \item The estimates are more concentrated around the ground-truth as we
        increase the sample size. 
    \item \textsf{SRW1CSSNB} exhibits consistent best performance in estimating
        $c^3_2$. \textsf{SRW2CSS} has consistent best performance in estimating
        $4, 5$-node clique concentration.
    \item There are spikes in the line plot of NRMSE v.s. sample size. This is
        caused by burn-in period of the random walks and inadequate simulation
        times.
\end{itemize}
% convergence
\begin{figure}[htb]
    \centering
    \captionsetup[subfigure]{justification=centering, margin={0.0cm, 0.0cm}}
    \setcounter{subfigure}{0}
    \subfloat[][Triangle (\scalebox{0.4}{\GTHREE{2}})]
    {\includegraphics[width=0.24\textwidth, height=0.15\textwidth]{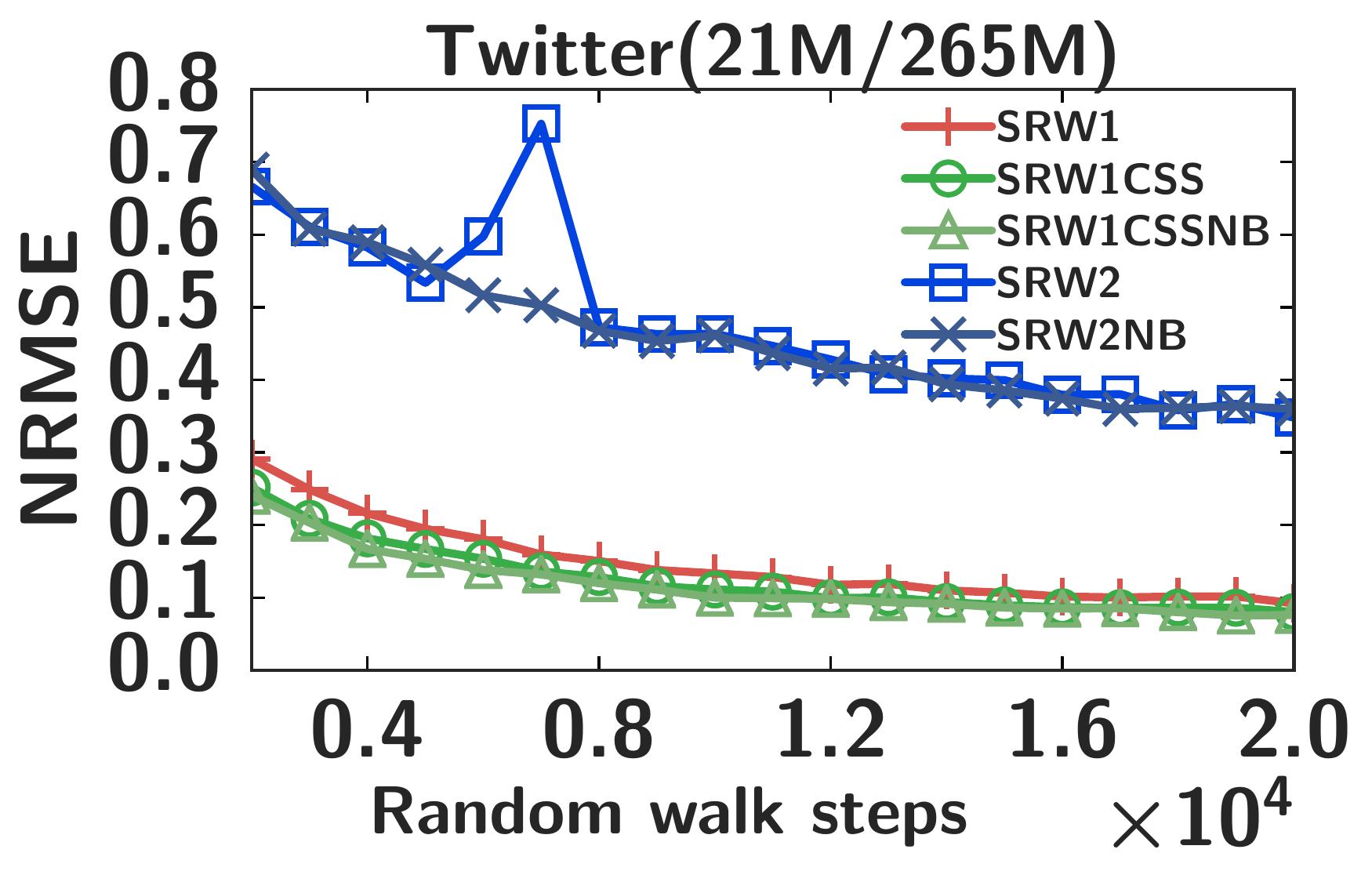}
    \includegraphics[width=0.24\textwidth, height=0.15\textwidth]{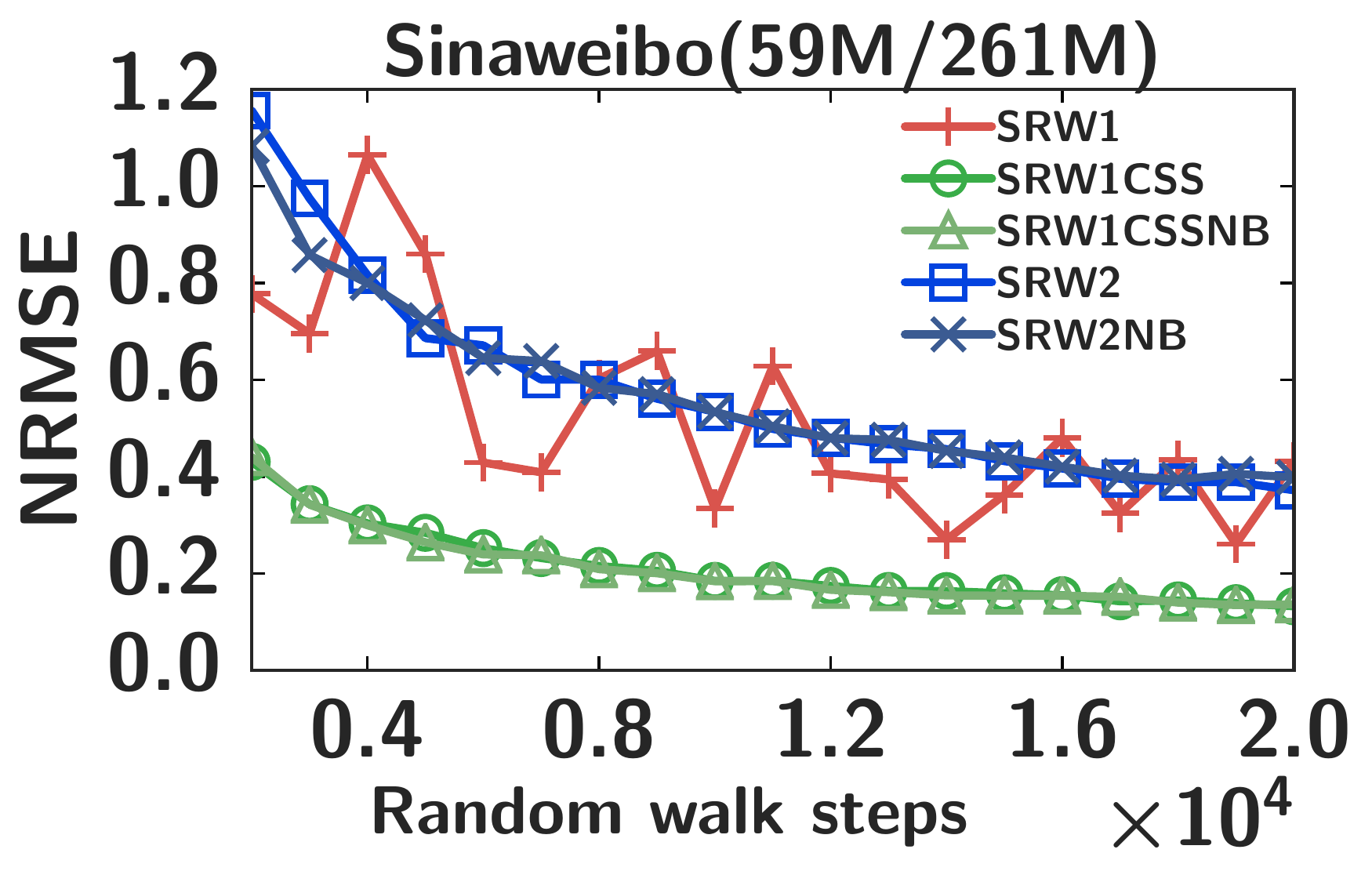}
    \label{subfig:triangle_converge}
    }\\
    \subfloat[][$4$-node clique (\scalebox{0.4}{\Gfour{6}})]
    {\includegraphics[width=0.24\textwidth, height=0.15\textwidth]{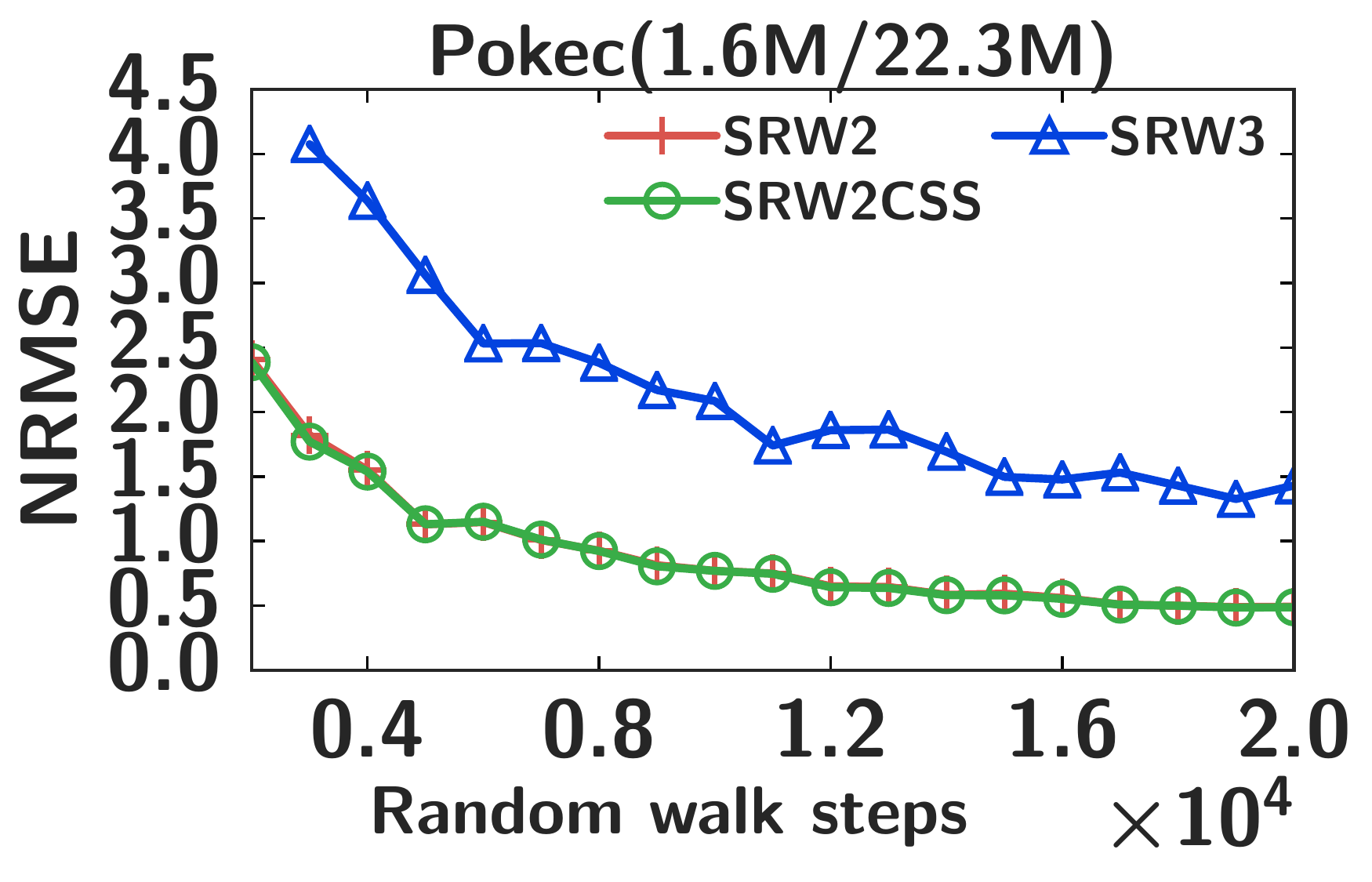}
    \includegraphics[width=0.24\textwidth, height=0.15\textwidth]{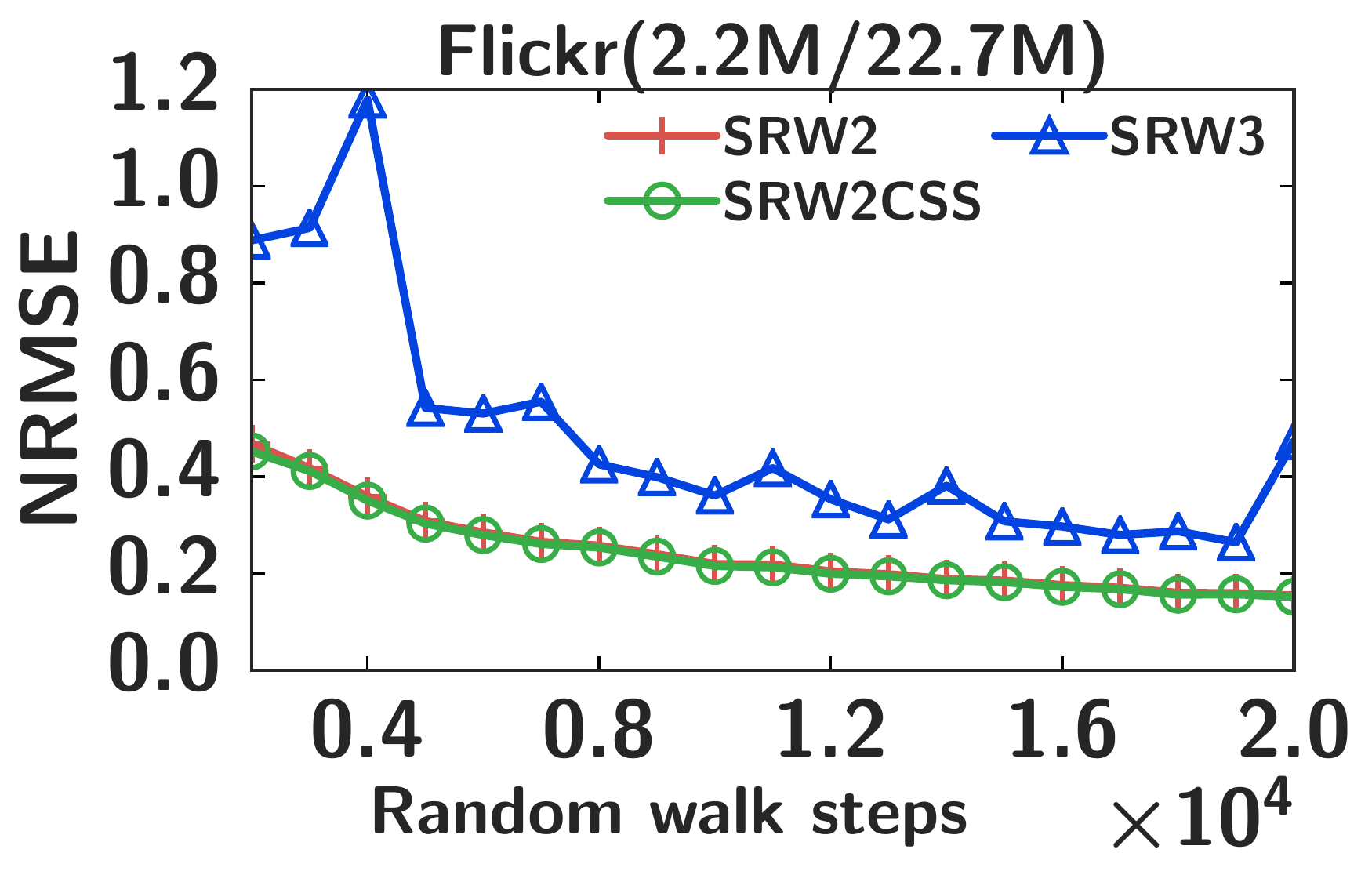}}\\
    \subfloat[][$5$-node clique (\adjustbox{valign=m}{\scalebox{0.6}{\GFIVE{21}}})]
    {\includegraphics[width=0.24\textwidth, height=0.15\textwidth]{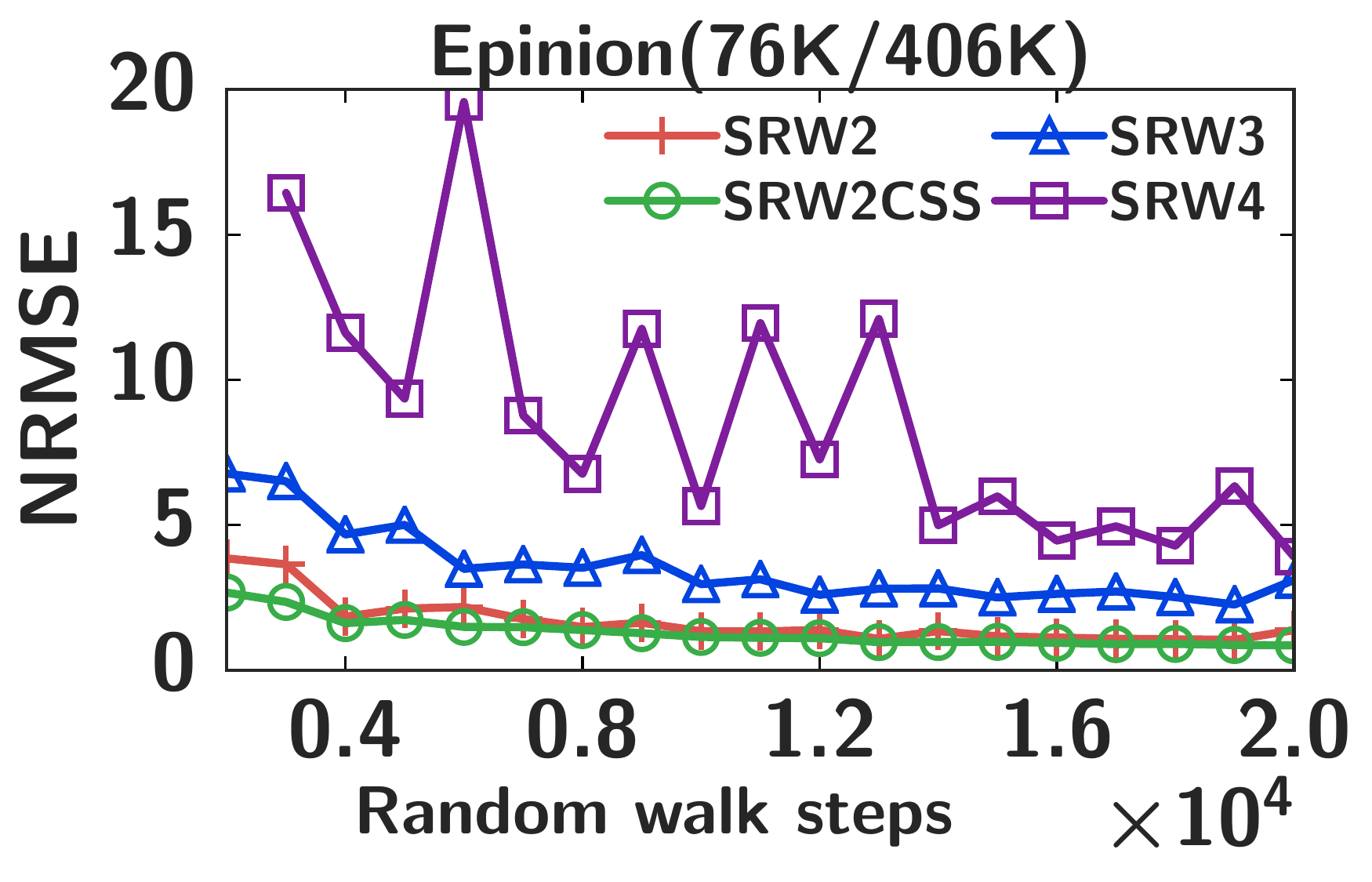}
    \includegraphics[width=0.24\textwidth, height=0.15\textwidth]{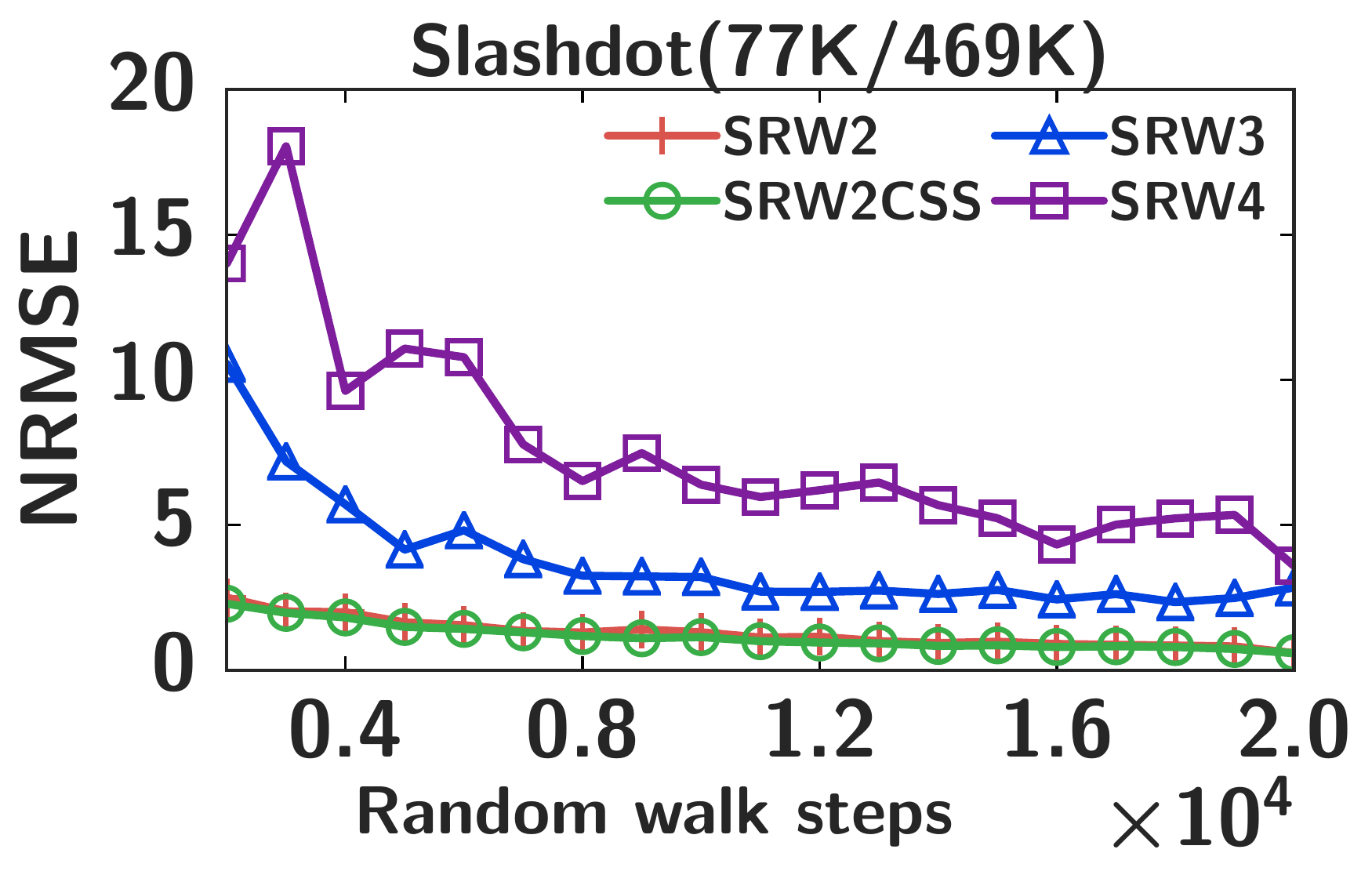}}
    \caption{Convergence of the estimates.}\label{fig:convergence_analysis}
\end{figure}

\subsubsection{Running time}
In Table~\ref{table:timecost}, we show the average running time of performing
20K random walk steps for different methods when estimating $5$-node graphlet
concentration.  Among them, \textsf{Exact} represents the enumeration time of
the method~\cite{hovcevar2014combinatorial}. Since \textsf{SRW3CSS} incurs
high computation cost, we do not consider this algorithm.  The time cost of the
random walk  consists of populating a random neighbor and identifying the
graphlet types. Here we do not consider the APIs response delay.  From this
table, we know that the random walk on $G^{(d)}$ is faster when $d$ is smaller. For
example, \textsf{SRW2} is much faster than \textsf{SRW3}
and \textsf{SRW4}.  This also validates the choice of smaller $d$.
\begin{table}[t]
\centering
\caption{Running time of performing 20K random walk steps for different methods. } \label{table:timecost}
\resizebox{0.48\textwidth}{!}{
\begin{tabular}{|l|c|c|c|c|c|}
    \hline
    Graph & \textsf{SRW2}  & \textsf{SRW2CSS} & \textsf{SRW3} & \textsf{SRW4} & \textsf{Exact}\\ \hline \hline
    BrightKite~\cite{konect} 
    & 19.4 ms & 110.2 ms & 271.1 ms & 20.6 s & 511 s\\
    Epinion~\cite{snapnets}    
    & 20.6 ms & 68.6 ms & 540.0 ms &51.4 s & 11091 s\\
    Slashdot~\cite{snapnets}   
    & 19.6 ms & 50.6 ms & 538.8 ms & 47.4 s & 5702 s\\
    Facebook~\cite{konect}  
    & 21.8 ms & 114.8 ms & 214.2 ms & 19.8 s & 4405 s\\\hline
    \end{tabular}
}
\end{table}

\subsection{Comparison with Competing Methods}
\subsubsection{Methods with restricted access assumption}
Our framework is mainly designed for graphs with restricted access, i.e., the
graph data can only be accessed through APIs.  For graphs with restricted
access, random walk-based methods are commonly used to exploit the properties
of the graphs. In this part, we compare the best methods in our framework with
two state-of-the-art random walk-based methods~\cite{wang2014efficiently}
and~\cite{hardiman2013estimating}.

The state-of-the-art random walk-based method in estimating any $k$-node
graphlet concentration is the \textsf{PSRW} proposed
in~\cite{wang2014efficiently}. Note that \textsf{PSRW} is equivalent to
choosing $d = k-1$ in our framework.  Specifically, \textsf{PSRW} corresponds
to \textsf{SRW2}, \textsf{SRW3}, and \textsf{SRW4} when estimating $3, 4$ and
$5$-node graphlet concentration respectively. We recap the comparison results
in Figure~\ref{fig:different_random_walks} and Table~\ref{table:timecost} as
follows.
\begin{itemize}[leftmargin=*]
    \item For $3$-node graphlets, \textsf{SRW1CSSNB} performs best for all the
        datasets in Table~\ref{table:dataset}, and it outperforms \textsf{PSRW}
        up to 3.8$\times$ (for ``Twitter''). For $4, 5$-node graphlets,
        \textsf{SRW2CSS} performs better than \textsf{PSRW} both in time cost
        and accuracy, e.g., \textsf{SRW2CSS} outperforms \textsf{PSRW} in
        estimating $4$-node graphlet concentration {\em up to an order of
        magnitude} (for ``Gowalla'') in accuracy.  
        We have consistent better performance in \textsf{SRW1CSSNB} for
        $3$-node graphlets and \textsf{SRW2CSS} for $4, 5$-node graphlets.
\end{itemize}

The triangle concentration has strong relationship with the
global clustering coefficient which is defined as
$3C^3_2 / (C^3_1 + 3C^3_2) = 3c^3_2 / (2c^3_2 + 1)$. Similarly, $c^3_2$ can be
computed directly with the clustering coefficient. Hence we also consider the
method proposed by Hardiman et
al.~\cite{hardiman2013estimating}, which is primarily designed for clustering
coefficient, as the competing method for $3$-node
graphlet concentration. The method in~\cite{hardiman2013estimating} uses the
simple random walk on $G$, and at each step, the visited node $v_t$ checks
whether the node visited before $v_t$ and the node visited after $v_t$ are
connected. {\em Detailed analysis of this method reveals that it is equivalent to
\textsf{SRW1} under our framework}, but derived in a totally different way.
From Figure~\ref{fig:different_random_walks},  we observe that our
method \textsf{SRW1CSSNB}
outperforms~\cite{hardiman2013estimating} (\textsf{SRW1}). Especially for
Wikipedia and Sinaweibo, \textsf{SRW1CSSNB} has at least $3\times$
smaller NRMSE.

\subsubsection{Methods with full access assumption}
Now we assume the graph data is readily available and fits in the main memory.
Our framework is primarily designed for graphs with restricted access.
We evaluate our framework in such full access setting with the purpose to shed
light on the advantages and disadvantages of applying the MCMC samplers for
memory-based graphs.  We compare with two state-of-the-art methods: wedge
sampling~\cite{seshadhri2013triadic}, and path sampling~\cite{jha2015path}.

\noindent{\bf Wedge sampling~\cite{seshadhri2013triadic}.} This method
estimates the triadic measures (e.g., number of
triangles) by generating uniform wedge (\scalebox{0.4}{\GTHREE{1}}) samples.
To get a uniform wedge sample, it first selects a random node $v$ according to
the probability $p_v$, here $p_v\triangleq {d_v\choose 2}/\left(\sum_{u\in V}
{d_u\choose 2}\right)$, and then chooses a uniform random pair of neighbors of
$v$ to generate a wedge. 
Note that wedge sampling needs to compute the probability $p_v$ for each node
$v$. 
This preprocess step has time complexity $O(|V|)$. Besides, the time
complexity of getting $n$ wedge samples is $O(n\log |V|)$. Hence the time
complexity of wedge sampling is $O(|V| + n\log |V|)$.

\noindent{\bf Path sampling~\cite{jha2015path}.} It estimates
$4$-node graphlet counts by generating uniform $3$-path
(\adjustbox{valign=m}{\scalebox{0.4}{\Gfour{1}}}) samples. For each edge $e =
(u, v)\in E$, denote $\tau_e = (d_u - 1)(d_v - 1)$ and $S \triangleq \sum_{e}\tau_e$. To get a uniform $3$-path, it first selects an edge $e=(u, v)$ with
probability $p_e\triangleq \tau_e / S$. Then picks uniform random neighbor
$u^{\prime}$ of $u$ other than $v$, picks uniform random neighbor $v^{\prime}$
of $v$ other than $u$, and outputs the three edges $\{(u^{\prime}, u), (u, v), (v,
v^{\prime})\}$ as the sampled $3$-path. 
%Let $A_i$ denote the number of $3$-path
%contained in graphlet $g^4_i$. Suppose we generate $n$ random $3$-path samples
%and $C^{\prime}_i$ induced subgraphs of them  are isomorphic to $g^4_i$. Then
%$\frac{W C^{\prime}_i}{A_i n}$ is the estimated $C^4_i$.
The preprocess time of the algorithm is $O(|E|)$. The time of generating $n$
samples is $O(n\log |E|)$. The author also proposed an improved sampler to
estimate graphlets counts.
\begin{figure}[t]
    \centering
    \setcounter{subfigure}{0}
    \subfloat[][Triangle ($g^3_2$, \scalebox{0.4}{\GTHREE{2}})]
    {\label{subfig:3nodeCount}
        \includegraphics[width=0.48\textwidth, height=0.15\textwidth]{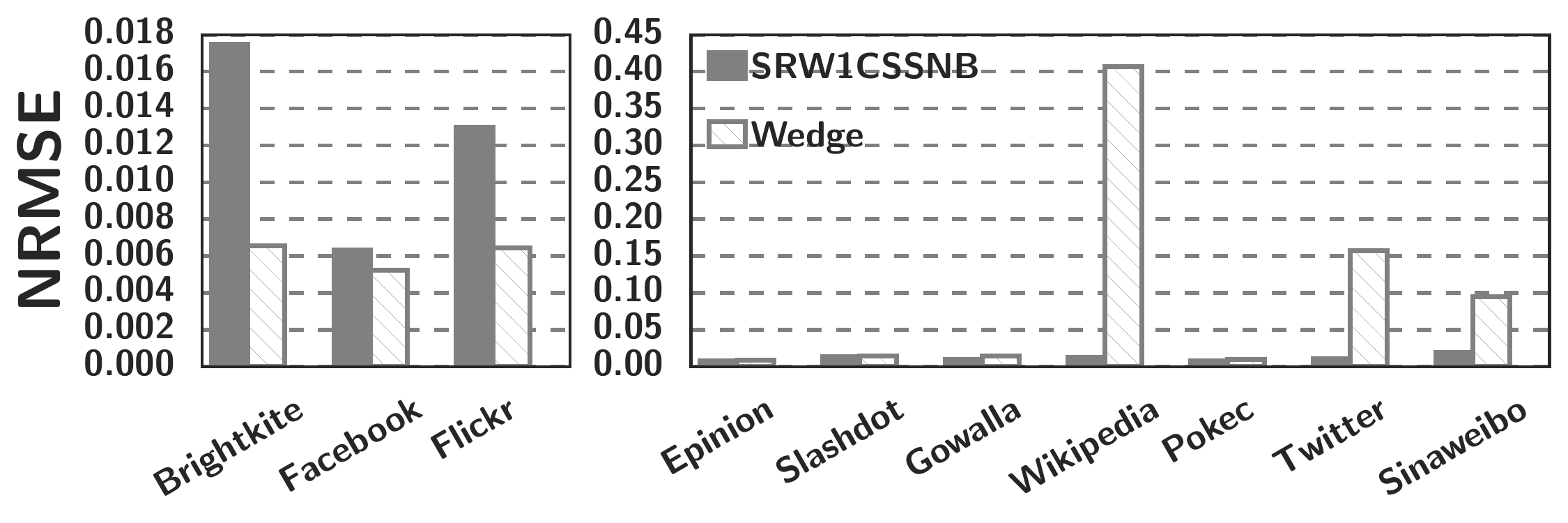}}\\
    \subfloat[][$4$-node clique ($g^4_6$, \scalebox{0.4}{\Gfour{6}})]
    {\label{subfig:4nodeid6Count}
        \includegraphics[width=0.48\textwidth, height=0.15\textwidth]{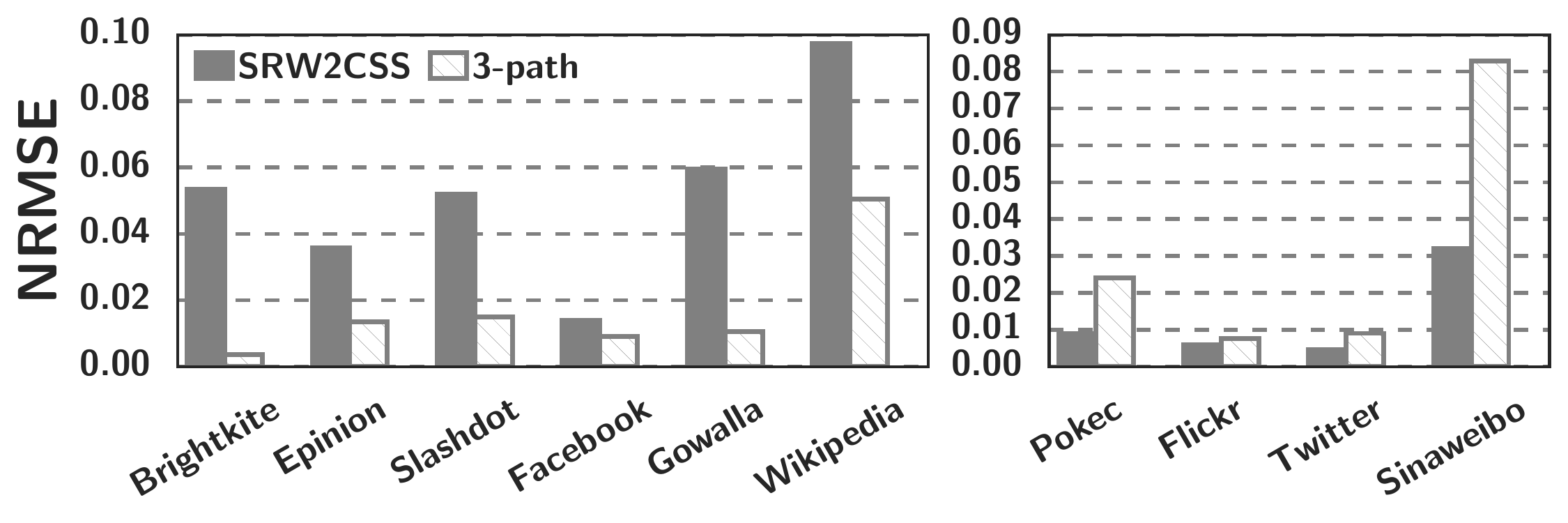}}
    \caption{NRMSE of graphlet counts estimation.}\label{fig:compare_full_access}
\end{figure}

{\em Note that these two methods focus on estimating graphlet counts.}
Our framework is also capable of estimating graphlet counts for memory-based graphs
according to Equation~\eqref{eq:graphletcounter}
and~\eqref{eq:css_unbiased_estimator}. 
%We choose the methods which have the
%best performance in our framework to compare with~\cite{seshadhri2013triadic} and~\cite{jha2015path}. 
%Specifically, 
We compare \textsf{SRW1CSSNB} with wedge sampling, and \textsf{SRW2CSS} with
path sampling for $3, 4$-node graphlet counts estimation.  For fair comparison,
we run wedge sampling and path sampling for 200K samples (200K samples are
used in the original papers), and then run the
\textsf{SRW1CSSNB} and \textsf{SRW2CSS} for the same running time as wedge
sampling and path sampling respectively. The results are shown in
Figure~\ref{fig:compare_full_access}. For path sampling, we only show the
accuracy of $4$-node clique due to space limitation. We summarize the results
in Figure~\ref{fig:compare_full_access} as follows.
\begin{itemize}[leftmargin=*]
    \item In Figure~\ref{subfig:3nodeCount}, we compare our method
        \textsf{SRW1CSSNB} with wedge sampling (denoted as \textsf{Wedge} in
        the figure). For graphs BrightKite, Facebook, and Flickr,
        \textsf{Wedge} is more accurate than \textsf{SRW1CSSNB} given the same
        running time. Actually, these three graphs have the highest triangle
        concentration in the datasets.  \textsf{SRW1CSSNB} performs better than
        \textsf{Wedge} given the same running time for the rest graphs.  Both
        of \textsf{SRW1CSSNB} and \textsf{Wedge} generate wedge samples. The
        difference is that \textsf{Wedge} generates independent wedge samples
        while \textsf{SRW1CSSNB} generates correlated samples.  However,
        \textsf{SRW1CSSNB} generates samples much faster than \textsf{Wedge}.
        Hence for large graphs with small triangle concentration, we recommend
        to use the \textsf{SRW1CSSNB} to estimate the triangle counts.
    \item In Figure~\ref{subfig:4nodeid6Count}, we evaluate our proposed method
        \textsf{SRW2CSS} and path sampling (which is denoted as \textsf{3-path}
        in the figure). For the large graphs Pokec, Flickr, Twitter, and
        Sinaweibo, our method \textsf{SRW2CSS} performs better than
        \textsf{3-path}. This is because \textsf{SRW2CSS} does not need a
        preprocess step and generates samples much faster than
        \textsf{3-path}. For large graphs with more than 100M edges, we
        recommend to use our method \textsf{SRW2CSS}. 
\end{itemize}
\noindent \textbf{Summary:} Our framework is designed for graphs with
restricted access. However, the simulation results in
Figure~\ref{fig:compare_full_access} indicate that our proposed methods are
also more efficient than the state-of-the-art methods for large memory-based
graphs.  Besides, both of \textsf{Wedge} and \textsf{3-path} are not general
and not directly applicable for graphs with restricted access. 

\subsubsection{Adapted memory-based methods}
In this part, we adapt the wedge sampling~\cite{seshadhri2013triadic} to the
random walk-based method and compare the adapted method with our framework.  We
do not adapt the path sampling because the $3$-path sampler in path sampling
cannot output the samples of $3$-star (\scalebox{0.3}{\Gfour{2}}) directly, and
hence the adaption of the path sampling is much more complicated. 

\noindent\textbf{Adapted wedge sampling.} Wedge sampling cannot be applied for
graphs with restricted access directly. We give an adaption of the wedge
sampling so as to make it applicable for the restricted setting. The main idea
of the adaption is to sample each node $v$ in the graph with probability
$p_v\triangleq {d_v\choose 2}/\left(\sum_{u\in V} {d_u\choose 2}\right)$ via
the Metropolis-Hasting random walks (MHRW)~\cite{metropolis1953equation,
hastings1970monte}, and then choose a uniform random pair of neighbors of the
sampled node $v$ to generate a wedge.
Algorithm~\ref{algo:adapted_wedge_sampling} in Appendix~\ref{app:adaption}
presents the pseudo code of the adaption for wedge sampling. 
According to Algorithm~\ref{algo:adapted_wedge_sampling}, the adapted wedge
sampling method needs to calling the APIs of the underlying networks for three
times at each random walk step, which means \textit{the adapted method has
$3\times$ API's calling cost compared with our method when given the same
random walk steps}.

The comparison between \textsf{SRW1CSSNB} and the adapted wedge sampling
\textsf{Wedge-MHRW} is demonstrated in Figure~\ref{fig:compare_adapted_wedge}. 
We compare their accuracy in estimating the triangle concentration.
The NRMSE is estimated over 1,000
independent simulations. The results are summarized as follows. 
\begin{itemize}[leftmargin=*]
    \item Figure~\ref{subfig:3nodeConcentrationCompare} shows that our proposed
        method \textsf{SRW1CSSNB} has much higher accuracy than the
        \textsf{Wedge-MHRW} when estimating the triangle concentration with
        the same number of random walk steps.  For
        example, the NRMSE of \textsf{SRW1CSSNB} is $8\times$ smaller than
        \textsf{Wedge-MHRW} for graph Wikipedia when estimating the triangle
        concentration using $20$K random steps. 
    \item In Figure~\ref{subfig:3nodeConcentrationConvergenceCompare}, we
        choose two largest graphs in our datesets to compare the convergence of
        \textsf{SRW1CSSNB} and \textsf{Wedge-MHRW}. Both of these two methods
        converge to the ground truth value when increasing the number of random
        walk steps. Besides, our method has consistent higher accuracy
        than \textsf{Wedge-MHRW}.
\end{itemize}
\begin{figure}[t]
    \centering
    \setcounter{subfigure}{0}
    \subfloat[][Accuracy (random walk steps = 20K)]
    {\label{subfig:3nodeConcentrationCompare}
        \includegraphics[width=0.48\textwidth, height=0.15\textwidth]{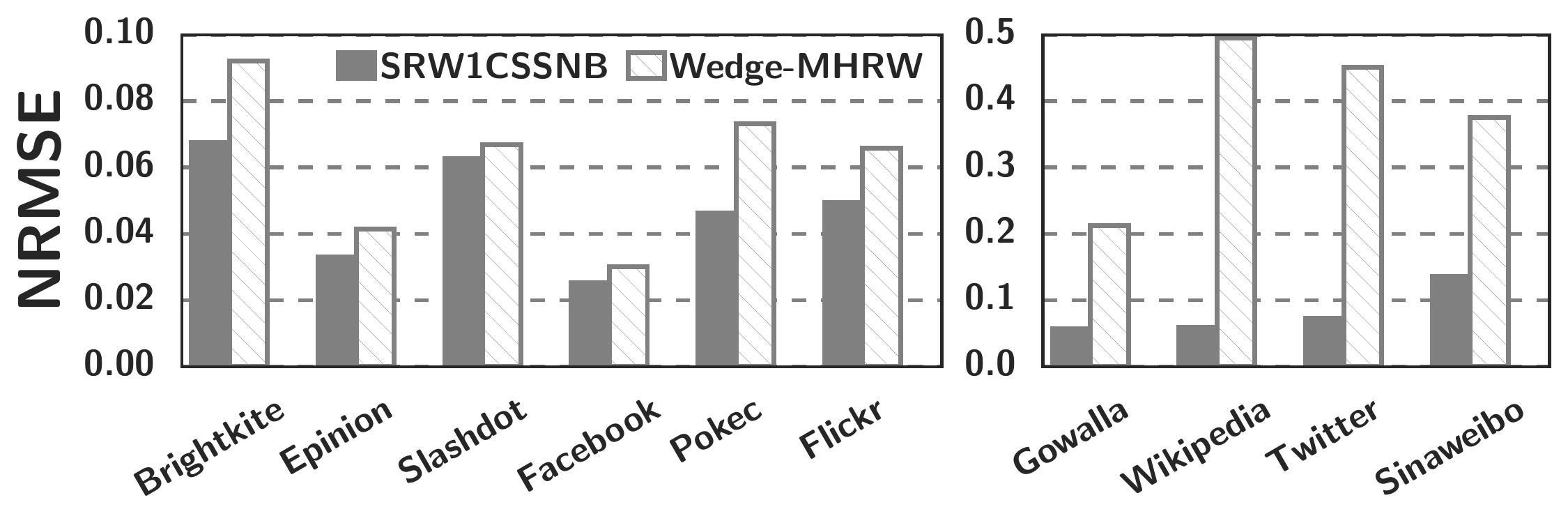}}\\
    \subfloat[][Convergence]
    {\label{subfig:3nodeConcentrationConvergenceCompare}
        \includegraphics[width=0.23\textwidth,
        height=0.14\textwidth]{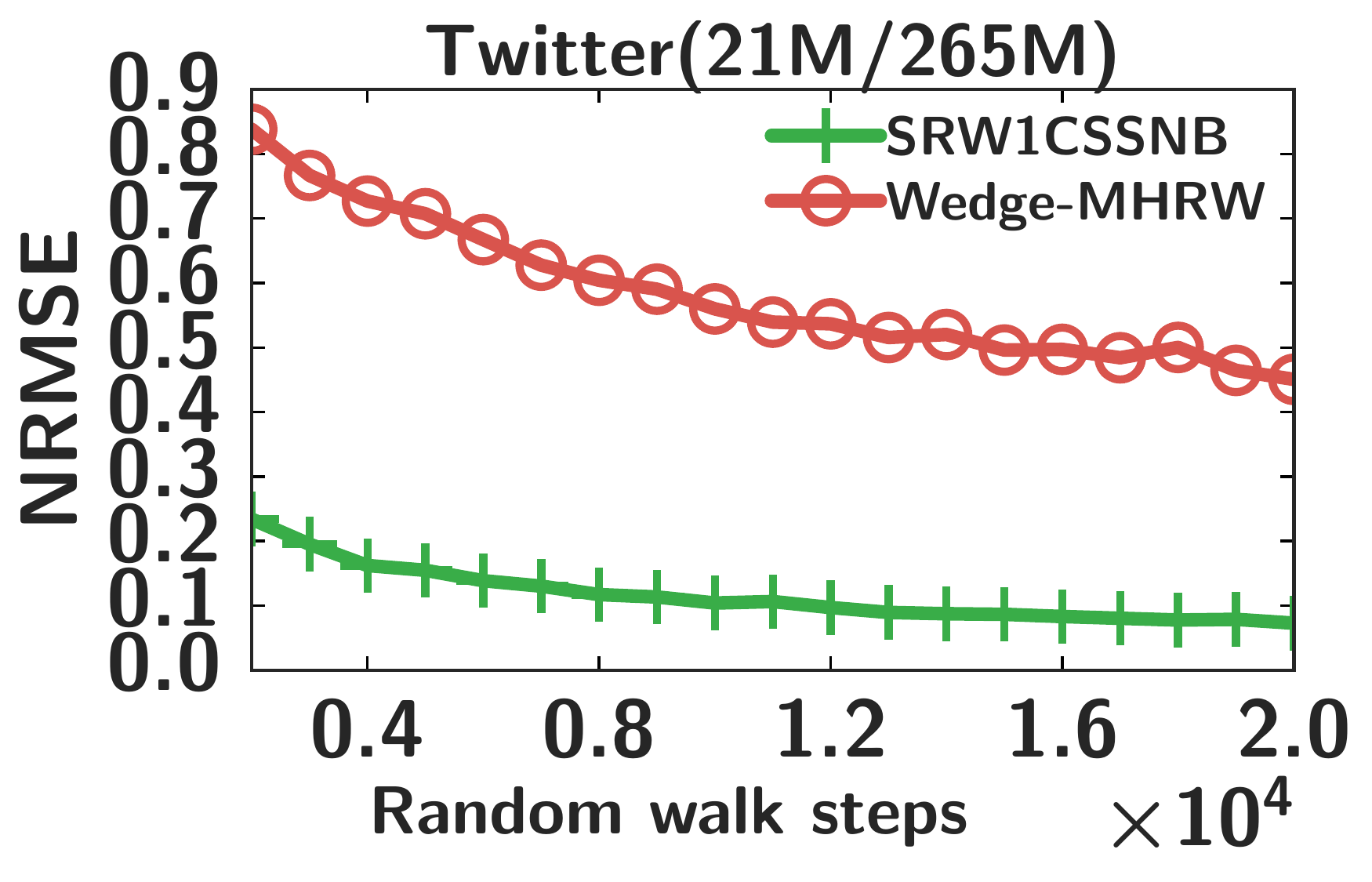}
        \includegraphics[width=0.23\textwidth, height=0.15\textwidth]{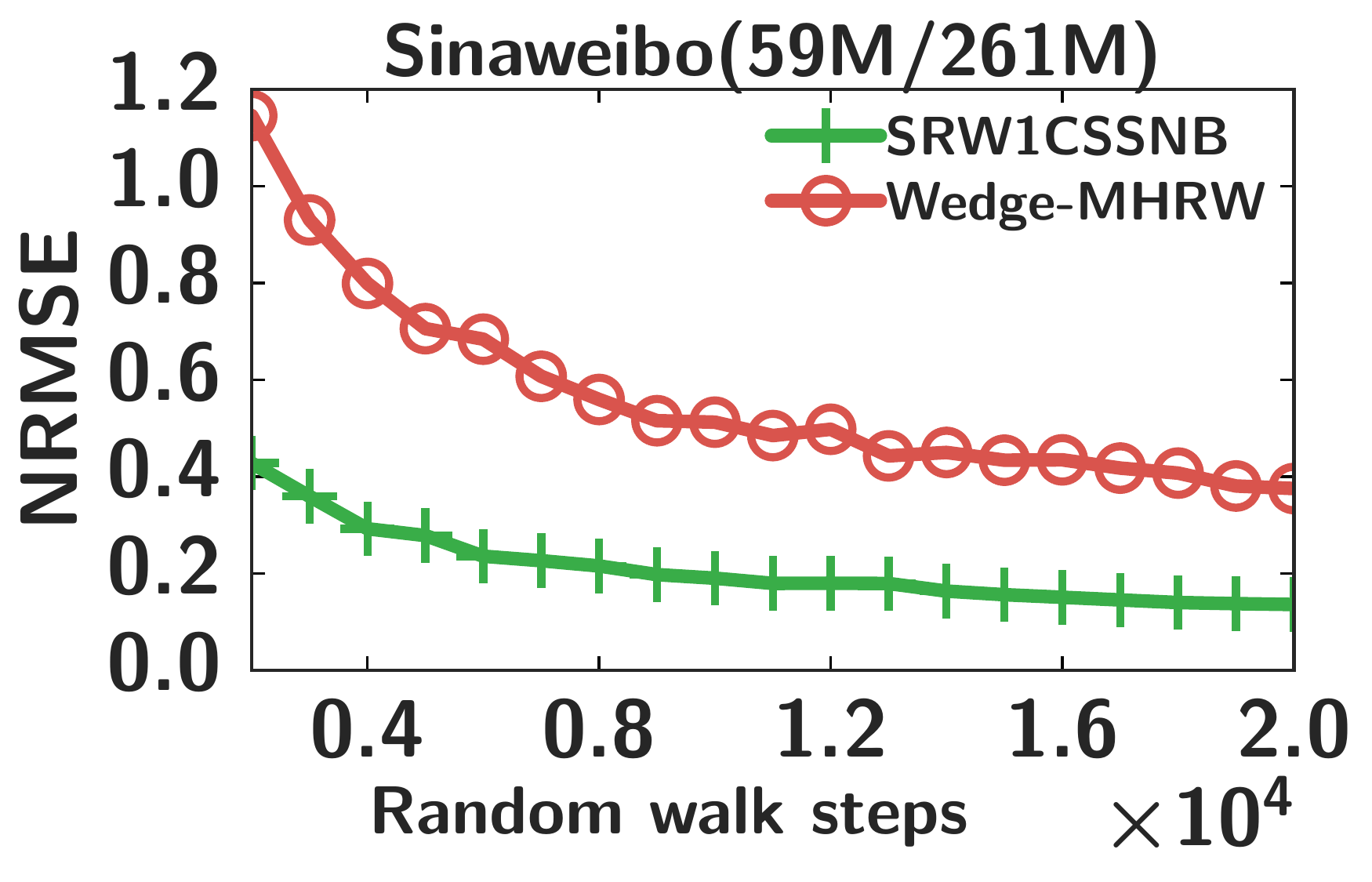}}
    \caption{Comparison between our proposed method \textsf{SRW1CSSNB} and the
    adapted wedge sampling \textsf{Wedge-MHRW} in estimating the triangle
    concentration.}\label{fig:compare_adapted_wedge}
\end{figure}

\subsection{Applications}
In this subsection, we apply our framework to analyze the intrinsic properties
of the large scale graph Sinaweibo in our datasets. Sinaweibo is the most
popular microblog service in China and has reached 222 million monthly active
users as of September 2015. It allows users to follow others, post and repost
messages, comment on others' posts, etc.  With the fact that Facebook is an
online social network while Twitter is more like a news
media~\cite{kwak2010twitter}, we now study whether Sinaweibo acts like a social
network or a news media by measuring its similarity to Twitter and Facebook. To
measure the similarity, we adopt the definition of graphlet kernel
in~\cite{5664} and restrict the definition to $4$-node graphlets. Specifically,
for two graphs with $4$-node graphlet concentration $\mathbf{c}_1$ and
$\mathbf{c}_2$, we define the similarity between them as $\mathbf{c}^{T}_1\cdot
\mathbf{c}_2 / (\norm{\mathbf{c}_1}\cdot \norm{\mathbf{c}_2})$. We run the
random walk for 20K steps for graphs Facebook, Twitter, and Sinaweibo to
estimate the $4$-node graphlet concentration and compute the similarity with
the estimated value. The results of 100 simulations are reported in
Table~\ref{table:similarity}. Our proposed method \textsf{SRW2CSS} gives a more
accurate estimate of similarity value compared with \textsf{PSRW}. Besides, we
find that Sinaweibo has more similar subgraph building blocks to Twitter, which
indicates Sinaweibo also acts like an efficient platform for information
diffusion. Note that for these large graphs, it is impractical to crawl the
whole datasets for later on analysis. Our framework produces accurate estimates
based on a small portion of crawled nodes, which implies our framework is an efficient analysis
tool for large graphs with restrict access.
\begin{table}[t]
\centering
\caption{Similarity between Sinaweibo and social network Facebook,
    as well as news media Twitter. } \label{table:similarity}
\resizebox{0.45\textwidth}{!}{
\begin{tabular}{|l|c|c|c|}
    \hline
    Graph & \textsf{SRW2CSS} & \textsf{PSRW} & \textsf{Exact}\\ \hline \hline
    Facebook~\cite{konect} 
    & {\bf 0.5809$\pm$0.0501} & 0.5856$\pm$0.0676 & 0.5757 \\
    Twitter~\cite{nr}    
    & {\bf 0.9988$\pm$0.0236} & 0.9957$\pm$0.0200 &0.9999 \\
    \hline
    \end{tabular}
}
\end{table}

\section{Conclusion}\label{sec:conclusion}
In this paper, we propose a novel random walk-based framework which takes one
tunable parameter to estimate the graphlet concentration. Our framework is
general and can be applied to any $k$-node graphlets. We derive an analytic
bound on the number of random walk steps required for convergence. We also
introduce two optimization techniques to further improve the efficiency of our
framework.  Our experiments with many real-world networks show that the methods
with appropriate parameter in our framework produce accurate estimates and
outperform the state-of-the-art methods significantly in estimation accuracy.

\balance

\section{Acknowledgments}
We thank the reviewers for their valuable comments. 
The work of Yongkun Li was sponsored  by CCF-Tencent Open Research Fund.
Pinghui Wang was supported by Ministry of Education \& China Mobile Joint
Research Fund Program (MCM20150506), the National Natural Science Foundation of
China (61603290, 61103240, 61103241, 61221063, 91118005, 61221063, U1301254),
Shenzhen Basic Research Grant (JCYJ20160229195940462), 863 High Tech
Development Plan (2012AA011003), 111 International Collaboration Program of
China,  and the Application Foundation Research Program of SuZhou (SYG201311).
The work of John C.S. Lui is supported in part by RGC 415013 and Huawei Research Grant.

\bibliographystyle{abbrv}
\bibliography{mybib} 

\normalsize 

\begin{appendix}
\section{Proof of
Theorem~\ref{thm:stationary_distribution}}\label{app:section:stationary_distribution}
\stationarydistribution*
\begin{proof}
    We first prove that there exists unique stationary distribution for the
    expanded Markov chain. For any two states $X^{(l)}_i = (X_{i_1}, \cdots,
    X_{i_l})$ and $X^{(l)}_j = (X_{j_1}, \cdots, X_{j_l})$ in
    $\mathcal{M}^{(l)}$, $X_{i_l}$ can reach state $X_{j_ 1}$ in finite steps
    during the random walk on $G^{(d)}$. This indicates that $X^{(l)}_i$ can
    reach $X^{(l)}_j$ in finite steps in the expanded Markov chain. Similarly,
    $X^{(l)}_j$ can reach $X^{(l)}_i$ in finite steps. We conclude that the
    expanded Markov chain is irreducible.  Since any irreducible Markov chain
    has one unique stationary distribution~\cite[Theorem 5.3]{olle2000finite}, there exists
    a unique stationary distribution for the expanded Markov chain. 
    
    To derive the closed form of $\PIMB$, we write the entry of
    transition matrix $\mathbf{P}_e$ of the expanded Markov chain for state
    $X^{(l)}_i$ and $X^{(l)}_j$ when $l > 1$ as follows: 
    {\small
    \begin{equation*}
        \mathrm{P}_e(X^{(l)}_i, X^{(l)}_j) = \begin{cases}
            \frac{1}{d_{X_{i_l}}} & \begin{split}
            &\text{if }(X_{i_l}, X_{j_{l}})\in R^{(d)},\\
            &\text{and } X_{i_q} = X_{j_{q-1}},1<q\leq l,
        \end{split}
            \\
            0 &\text{otherwise}.
        \end{cases}
    \end{equation*}
    }
    For $\PIMB$ in Equation~(\ref{eq:stationary_distribution}) we can
    verify that 
    \begin{equation*}
        %\begin{split}
        %   \pmb{\pi}_M &=\pmb{\pi}_M\cdot \mathbf{P}_M\\
        %   \sum_{X^{(l)}\in \mathcal{M}^{(l)}} &\pi_M(X^{(l)}) = 1
        %\end{split}
         \PIMB =\PIMB\cdot \mathbf{P}_e \quad \mbox{and} \quad
            \sum_{X^{(l)}\in \mathcal{M}^{(l)}} \pi_e(X^{(l)}) = 1.
    \end{equation*}
    According to Theorem 5.1 and 5.3 in~\cite[Chapter 5]{olle2000finite},
    $\PIMB$ in
    Equation~(\ref{eq:stationary_distribution}) is the unique stationary
    distribution.
\end{proof}

\section{Computation of Coefficient}\label{app:computation_of_coefficient}

In this section, we discuss in detail how to compute $\alpha^k_i$.  The
coefficient $\alpha^k_i$ is the key to our framework.  One can
view it as part of the ``{\em inclusion probability}''.  Here we provide a
method to compute $\alpha^k_i$ in Algorithm~\ref{algo:compute_coefficient}
for $SRW(d)$ when  $1\leq d \leq k - 1$. In Line~\ref{algo:line:condition}
of Algorithm~\ref{algo:compute_coefficient}, $S^{(l)}_{x_i}$ represents the
$x_i$-th element in list $S^{(l)}$. A special case of
Algorithm~\ref{algo:compute_coefficient} is $d = k - 1$. In this case,
$\alpha^k_i$ can also be computed by enumerating $(k - 1)$-node induced
subgraphs of $g^k_i$, which has time complexity $O(k^3)$.  The coefficient
$\alpha^k_i$ equals to $(|S| - 1)\cdot |S|$, here $S$ is the set of $(k -
1)$-node subgraphs in $k$-node graphlet $g^k_i$.  
\begin{algorithm}[H]
\caption{Pseudo code of computing $\alpha^k_i$}
\begin{algorithmic}[1]\label{algo:compute_coefficient}
    \REQUIRE node set and edge list of graphlet $g^k_i$, $SRW(d)$
    \ENSURE$\alpha^k_i$ for graphlet $g^k_i$
    \STATE{Set $S\leftarrow$ all $d$-node connected induced subgraphs of $g^k_i$}\\
    \STATE{Random walk block length $l\leftarrow k - d + 1$ }\\
    \STATE{Counter $\alpha^k_i\leftarrow 0$}\\
    \FORALL{$S^{(l)}\leftarrow$ combination of $l$ elements from $S$}
    \IF{size of node set $\cup_{s\in S^{(l)}}V(s)$ less than $k$}
    \STATE{\textbf{continue}}
        \ENDIF
        \FORALL{$(x_1, \cdots, x_l)\leftarrow$ permutation of $1, \cdots, l$}
        \IF {$S^{(l)}_{x_i}$ and $S^{(l)}_{x_{i+1}}$ share $d\!-\!1$ nodes,
        $\forall1\!\leq\!i\!\leq\!l\!-\!1$}\label{algo:line:condition}
        \STATE{$\alpha^k_i\leftarrow \alpha^k_i +
        1$}\label{algo:line:incremental}
        \ENDIF
    \ENDFOR
    \ENDFOR
    \RETURN $\alpha^k_i$
\end{algorithmic}
\end{algorithm}

\section{Chernoff Bound for Estimator}\label{app:chernoff_bound}
% proof
Our proof is based on the following Chernoff-Hoeffding bound for finite state
Markov chain.
\begin{theorem}\label{thm:chernoff_bound}
\cite[Theorem 3]{chung2012chernoff} For a finite and ergodic Markov chain with
    state space $\mathcal{N}$ and stationary distribution $\pmb{\pi}$, let
    $\tau=\tau(\varepsilon)$ denote its $\varepsilon$-mixing time  for $\varepsilon
    \leq 1/8$.  Let $X_1, \cdots, X_n$ denote a $n$-step random walk starting
    from an initial distribution $\varphi$ on $\mathcal{N}$. Define
    $\norm{\varphi}_{\pmb{\pi}}=\sum_{i=1}^{|\mathcal{N}|}\frac{\varphi_i^2}{\pi_i}$.
    The expectation of function $f:\mathcal{N}\rightarrow [0, 1]$ is denoted by
    $\mathbb{E}_{\pmb{\pi}}[f(X)]=\mu$. Define the total weight of the walk
    $X_1, \cdots, X_n$ by $Z \triangleq \sum_{s=1}^{n}f(X_s)$. There exists
    some constant $c$ which is independent of $\mu, \varepsilon$ and $\epsilon$ such
    that for $0<\delta<1$
\begin{equation*}
\Pr\left[\left|\frac{Z}{n}-\mu\right| >\epsilon\mu\right] \leq
c\norm{\varphi}_{\pmb{\pi}}e^{-\epsilon^2\mu n / 72\tau}
\end{equation*}
\end{theorem}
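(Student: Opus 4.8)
The plan is to prove this as a Chernoff--Hoeffding tail inequality for a bounded functional of a Markov chain via the moment-generating-function (MGF) route, but carried out with matrix-norm estimates that make the bound depend on the mixing time $\tau$ rather than on the spectral gap or on $\pi_{\min}$. First I would apply the exponential Markov inequality: for any $r>0$, $\Pr[Z/n-\mu>\epsilon\mu]\le e^{-r(1+\epsilon)\mu n}\,\mathbb{E}[e^{rZ}]$, and symmetrically for the lower tail with the sign of $r$ reversed; combining the two tails by a union bound (folding the factor $2$ into the constant $c$) reduces the whole statement to an upper bound on the MGF $\mathbb{E}[e^{rZ}]$. Writing the diagonal matrix $D=\operatorname{diag}(e^{rf(i)})$ and expanding the path measure of the walk gives the exact identity $\mathbb{E}[e^{rZ}]=\varphi^{\mathsf T}D(PD)^{n-1}\mathbf{1}$, so the entire problem becomes one of controlling the growth of the perturbed operator $PD$.

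Next I would move into the $\pmb{\pi}$-weighted geometry. With $\Pi=\operatorname{diag}(\pi)$, set $A=\Pi^{1/2}P\Pi^{-1/2}$; for a reversible chain $A$ is symmetric with top eigenvalue $1$ and eigenvector $\sqrt{\pmb{\pi}}$, and because the diagonal matrix $D$ commutes with $\Pi$ we have $\Pi^{1/2}(PD)\Pi^{-1/2}=AD$. Applying Cauchy--Schwarz in this inner product bounds $\mathbb{E}[e^{rZ}]$ by a product of an initial-distribution factor and an operator norm; the initial factor is exactly $\norm{\varphi}_{\pmb{\pi}}$ (so it equals $1$ when $\varphi=\pmb{\pi}$) and appears only as the multiplicative prefactor in the claim, while the residual task is an estimate of the form $\|(AD)^{m}\|_2\le e^{m(r\mu+O(r^2))}$ with the $O(r^2)$ term controlled by $\tau$. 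For a general non-reversible ergodic chain the same scheme goes through after replacing $A$ by the multiplicative reversibilization, at the cost of universal constants only.

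The crux --- and the step I expect to be the main obstacle --- is this operator-norm estimate, precisely because it must be expressed through the mixing time rather than the spectral gap. A naive eigenvalue-perturbation argument bounds the top eigenvalue of $AD$ by $1+r\mu+O(r^2)$ but leaves the subdominant contribution governed by the spectral gap, which is exactly what the theorem wants to avoid. Instead I would group the $n$ steps into blocks of length $\tau$ and invoke the defining property of $\tau(\varepsilon)$ for $\varepsilon\le 1/8$: within $\tau$ steps any starting distribution is driven to total-variation distance at most $\varepsilon$ from $\pmb{\pi}$, where the per-step expected weight is exactly $\mu$. Bounding $e^{rf}\le 1+rf+r^2f^2$ for $f\in[0,1]$ and small $r$, and splitting $P=\mathbf{1}\pmb{\pi}^{\mathsf T}+(P-\mathbf{1}\pmb{\pi}^{\mathsf T})$, one shows that each $\tau$-block contributes a first-order factor $e^{r\mu}$ per step while the correlations across the mixing window inflate the second-order (variance) term by a factor proportional to $\tau$. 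The slack $\varepsilon\le 1/8$ is absorbed into a universal constant, and this delicate bookkeeping is what ultimately pins down the value $72$.

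Finally I would assemble the pieces: the block analysis yields $\mathbb{E}[e^{rZ}]\le c\,\norm{\varphi}_{\pmb{\pi}}\,e^{\,r\mu n+c'r^2\mu\tau n}$, so that $\Pr[Z/n>(1+\epsilon)\mu]\le c\,\norm{\varphi}_{\pmb{\pi}}\,e^{-r\epsilon\mu n+c'r^2\mu\tau n}$. Because the variance term carries the extra factor $\tau$, the optimal exponential tilt is $r\propto\epsilon/\tau$ rather than $r\propto\epsilon$; substituting it produces an exponent of order $-\epsilon^2\mu n/\tau$, and tracking the constants through the Taylor bound, the block decomposition, and the two-sided union bound yields exactly the claimed $c\,\norm{\varphi}_{\pmb{\pi}}\,e^{-\epsilon^2\mu n/72\tau}$.
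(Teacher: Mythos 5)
You should note at the outset that the paper contains no proof of this statement at all: it is quoted verbatim, with citation, as Theorem 3 of Chung et al.~\cite{chung2012chernoff}, and is used purely as a black box to derive Lemmas~\ref{lemma:bound_numerator} and~\ref{lemma:bound_denominator} and Theorem~\ref{thm:union_bound}. So the only meaningful comparison is with the cited source's argument, whose overall skeleton --- exponential tilting, the exact MGF identity $\mathbb{E}[e^{rZ}]=\varphi^{\mathsf T}D(PD)^{n-1}\mathbf{1}$, and the choice $r\asymp\epsilon/\tau$ --- your sketch does reproduce correctly.

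The genuine gap sits exactly at the step you flag as the crux. Your plan controls the operator norm by cutting the walk into \emph{contiguous} blocks of length $\tau$ and expanding $P=\mathbf{1}\pmb{\pi}^{\mathsf T}+(P-\mathbf{1}\pmb{\pi}^{\mathsf T})$ inside each block. This cannot work as stated: the definition of $\tau(\varepsilon)$ says nothing about a \emph{single} step, and the per-step error operator $P-\mathbf{1}\pmb{\pi}^{\mathsf T}$ generally has $\pmb{\pi}$-weighted norm essentially equal to $1$ (that is precisely what slow mixing means), so the $2^{\tau}$ cross terms produced by expanding a block are individually of order $1$ and their sum cannot be forced below the claimed $e^{r\mu\tau+O(r^{2}\tau)}$ per-block bound. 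Only the $\tau$-step kernel $P^{\tau}$ is guaranteed uniformly close to $\mathbf{1}\pmb{\pi}^{\mathsf T}$. The cited proof respects this distinction by a different decomposition: it splits the walk into $\tau$ \emph{interleaved} subsequences $X_i,X_{i+\tau},X_{i+2\tau},\dots$, each a Markov chain with kernel $P^{\tau}$ that mixes in one step to within $1/8$; it proves the MGF bound for such one-step-mixing chains, and then recombines the subsequences with the generalized H\"older inequality $\mathbb{E}\bigl[\prod_{i}e^{rS_i}\bigr]\le\prod_{i}\bigl(\mathbb{E}[e^{r\tau S_i}]\bigr)^{1/\tau}$, which is where the loss absorbed into the constant $72$ actually comes from. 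Two further problems: your claim that non-reversible chains are handled by the multiplicative reversibilization ``at the cost of universal constants only'' is false in general --- that substitution trades the mixing time of $P$ for the spectral gap of the reversibilization, which is exactly the dependence this theorem is formulated to avoid; and your Cauchy--Schwarz step yields the prefactor $\sqrt{\norm{\varphi}_{\pmb{\pi}}}$ rather than $\norm{\varphi}_{\pmb{\pi}}$ (harmless since $\norm{\varphi}_{\pmb{\pi}}\ge 1$, but a sign that the constant-tracking you assert would ``pin down the value $72$'' has not actually been carried out).
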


%\begin{lemma}\label{lemma:start_node}
%For a random walk with stationary distribution $\pmb{\pi}$ and an expanded
%Markov chain defined on this random walk with stationary distribution
%$\pmb{\pi}_e$, if we start from a fixed node $X_0$ in $G^{(d)}$, then we have
%$\norm{\varphi}_{\pmb{\pi}_e} = 1/\pi(X_0)$.
%\end{lemma}
%\begin{proof}
%    We start from $X_0$ and walk for $l$ steps to collect the first sample
%    $X^{(l)}_0
%    \in \mathcal{M}^{(l)}$. Then the first element of $X^{(l)}_0$ must be 
%    $X_0$ and $\Pr[X^{(l)}_0 = (X_0,
%\cdots, X_{l-1})] = \frac{1}{d_{X_0}}\cdots\frac{1}{d_{X_{l-2}}}$. We have
%\begin{equation*}
%    \norm{\varphi}_{\pmb{\pi}_M}=\sum_{X^{(l)}_0} \frac{
%\left(\frac{1}{d_{X_0}}\cdots\frac{1}{d_{X_{l-2}}}\right)^2}{\frac{1}{2|R^{(d)}|}\frac{1}{d_{X_1}}\frac{1}{d_{X_{l-2}}}}=\frac{1}{\pi(X_0)}
%\end{equation*}
%\end{proof}
%

In the following, we first prove that the numerator $\hat{C}^k_i\triangleq
\frac{1}{n}\sum_{s=1}^{n}h^k_i(X^{(l)}_s)/\left(\alpha^k_i\pi_e(X^{(l)}_s)\right)$
in Equation~\eqref{eq:concentration_estimator} concentrates around its expected
value $C^k_i$.
\begin{lemma}\label{lemma:bound_numerator}
    For $0 < \delta < 1$, there exists a constant $\xi$, such that if $n \geq
    \xi(\frac{W}{\alpha^k_i
    C^k_i})\frac{\tau^\prime}{\epsilon^2}(\log
    \frac{\norm{\varphi}_{\PIMB}}{\delta})$ we have
    \begin{equation*}
    \Pr\left[\left|\hat{C}^k_i - C^k_i\right|/C^k_i>\frac{\epsilon}{3}\right] <
    \delta / 2
    \end{equation*}
    where $W$ is defined as
    $\max_{X^{(l)}\in\mathcal{M}^{(l)}}1/\pi_e(X^{(l)})$,
    $\varphi$ is the initial distribution, $\norm{\varphi}_{\PIMB}$ is defined
    as $\sum_{X^{(l)}}\varphi^2(X^{(l)})/\pi_e(X^{(l)})$, and $\tau^\prime$ is the
    mixing time $\tau^{\prime}(1/8)$ of the expanded Markov chain.
\end{lemma}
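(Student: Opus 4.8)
The plan is to reduce the statement to a single invocation of the Chernoff--Hoeffding bound for Markov chains (Theorem~\ref{thm:chernoff_bound}), applied to the expanded Markov chain. The only friction is that Theorem~\ref{thm:chernoff_bound} requires a test function valued in $[0,1]$, whereas the summand $g(X^{(l)})\triangleq h^k_i(X^{(l)})/(\alpha^k_i\pi_e(X^{(l)}))$ is valued in $\{0\}\cup\{1/(\alpha^k_i\pi_e(X^{(l)}))\}$ and can be much larger than $1$. First I would rescale by $M\triangleq W/\alpha^k_i$, which by the definition $W=\max_{X^{(l)}}1/\pi_e(X^{(l)})$ is an upper bound on $g$, and set $f\triangleq g/M$ so that $f:\mathcal{M}^{(l)}\to[0,1]$. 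This rescaling is the one nontrivial idea, and it is precisely where the factor $W/\alpha^k_i$ in the final bound is born.

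Next I would identify the stationary mean. By the unbiasedness computation preceding Equation~\eqref{eq:graphletcounter},
\begin{equation*}
    \mu\triangleq\mathbb{E}_{\PIMB}[f]=\frac{1}{M}\,\mathbb{E}_{\PIMB}\!\left[\frac{h^k_i(X^{(l)})}{\alpha^k_i\pi_e(X^{(l)})}\right]=\frac{\alpha^k_i C^k_i}{W}.
\end{equation*}
With $Z\triangleq\sum_{s=1}^n f(X^{(l)}_s)$ we have $Z/n=\hat{C}^k_i/M$, so the target event $|\hat{C}^k_i-C^k_i|/C^k_i>\epsilon/3$ coincides exactly with the relative-deviation event $|Z/n-\mu|>(\epsilon/3)\mu$. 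I would also check that the expanded chain is ergodic: it is finite and irreducible by Theorem~\ref{thm:stationary_distribution}, and aperiodicity may be assumed (passing to the lazy walk $\tfrac12 I+\tfrac12\mathbf{P}_e$, which leaves $\PIMB$ unchanged, or assuming $G$ is non-bipartite), so that the mixing time $\tau'=\tau'(1/8)$ of the expanded chain is well defined.

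Invoking Theorem~\ref{thm:chernoff_bound} with deviation parameter $\epsilon/3$ then gives
\begin{equation*}
    \Pr\!\left[\frac{|\hat{C}^k_i-C^k_i|}{C^k_i}>\frac{\epsilon}{3}\right]\le c\,\norm{\varphi}_{\PIMB}\exp\!\left(-\frac{\epsilon^2\mu n}{648\,\tau'}\right).
\end{equation*}
Substituting $\mu=\alpha^k_i C^k_i/W$, requiring the right-hand side to be at most $\delta/2$, and solving for $n$ yields $n\ge\frac{648\,\tau' W}{\epsilon^2\alpha^k_i C^k_i}\log\frac{2c\,\norm{\varphi}_{\PIMB}}{\delta}$, which has exactly the claimed shape once the numerical constants are folded into $\xi$.

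The main obstacle is the final constant bookkeeping, namely collapsing $\log(2c\,\norm{\varphi}_{\PIMB}/\delta)$ into $\log(\norm{\varphi}_{\PIMB}/\delta)$. For this I would use that $\norm{\varphi}_{\PIMB}=\sum_{X^{(l)}}\varphi^2(X^{(l)})/\pi_e(X^{(l)})\ge 1$ by Cauchy--Schwarz (since $\varphi$ and $\PIMB$ are probability vectors), so that $\log(\norm{\varphi}_{\PIMB}/\delta)>0$; restricting to the meaningful regime $\delta\le 1/2$ then makes $\log(1/\delta)\ge\log 2$ dominate the additive $\log(2c)$, which lets the whole prefactor be absorbed into a single constant $\xi$ independent of $\epsilon$ and $\delta$. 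Everything else in the argument is routine substitution.
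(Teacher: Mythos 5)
Your proposal is correct and follows essentially the same route as the paper: after unwinding your rescaling by $M=W/\alpha^k_i$, your test function $f=h^k_i(X^{(l)})/(W\pi_e(X^{(l)}))$ is exactly the paper's $f_i$, with the same mean $\mu=\alpha^k_iC^k_i/W$, the same invocation of Theorem~\ref{thm:chernoff_bound} at deviation $\epsilon/3$, and the same extraction of $n$. Your additional remarks on aperiodicity and on absorbing $\log(2c)$ via $\norm{\varphi}_{\PIMB}\geq 1$ are sound bookkeeping that the paper leaves implicit.
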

\begin{proof}
    Let $f_i=\frac{h^k_i(X^{(l)})/\left(\pi_e(X^{(l)})\right)}{W}$ such that all values
    of $f_i$ are in $[0, 1]$.  The expectation of $f_i$ is $\mu_i =
    \mathbb{E}_{\PIMB}[f_i] = \frac{\alpha^k_i C^k_i}{W}$.  Using the
    result in Theorem~\ref{thm:chernoff_bound}, we have
    \begin{equation*}
    \Pr\left[\left|\hat{C}^k_i/C^k_i - 1\right|>\frac{\epsilon}{3}\right]
    < c\norm{\varphi}_{\PIMB}e^{-\epsilon^2\mu_i n / 9\cdot72\tau^\prime}
    \end{equation*} 
    Extracting $n$ for which $\frac{\delta}{2} =
    c\norm{\varphi}_{\PIMB}e^{-\epsilon^2\mu_i n / 9\cdot72\tau^\prime}$, we
    have $n \geq
    \xi(\frac{W}{\alpha^k_i C^k_i})\frac{\tau^\prime}{\epsilon^2}(\log
    \frac{\norm{\varphi}_{\PIMB}}{\delta})$
\end{proof}

We then prove that 
\begin{equation*}
\hat{C}^k\triangleq
\frac{1}{n}\sum_{s=1}^{n}h^k(X^{(l)})/\left(\alpha^k(X^{(l)}_s)\pi_e(X^{(l)}_s)\right)
\end{equation*}
concentrates around its expected value $C^k\triangleq
\sum_{i=1}^{|\mathcal{G}^k|}C^k_i$.

\begin{lemma}\label{lemma:bound_denominator}
    For $0 < \delta < 1$, there exists a constant $\xi$, such that if $n \geq
    \xi(\frac{W}{\alpha_\text{min}C^k})\frac{\tau^\prime}{\epsilon^2}(\log
    \frac{\norm{\varphi}_{\PIMB}}{\delta})$ we have
    \begin{equation*}
    \Pr\left[\left|\hat{C}^k - C^k\right|/C^k>\frac{\epsilon}{3}\right] <
    \delta / 2
    \end{equation*}
    where $W$ is defined as
    $\max_{X^{(l)}\in\mathcal{M}^{(l)}}1/\pi_e(X^{(l)})$ and
    $\alpha_\text{min}=\min_j \alpha^k_j$. $\tau^\prime$ is the
    mixing time $\tau^\prime(1/8)$ of the expanded Markov chain starting from the initial
    distribution $\varphi$.
\end{lemma}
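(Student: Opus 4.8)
The plan is to mirror the argument used for Lemma~\ref{lemma:bound_numerator}, replacing the single‑graphlet weight $\alpha^k_i\pi_e$ with the state‑dependent weight $\alpha^k(\X{l})\pi_e(\X{l})$ and using $\alpha_\text{min}$ for the normalization. First I would verify that $\hat{C}^k$ is unbiased for $C^k$. Starting from $\mathbb{E}_{\PIMB}[\hat{C}^k]=\sum_{\X{l}\in\mathcal{M}^{(l)}} h^k(\X{l})/\alpha^k(\X{l})$, the key observation is that on any state with $h^k(\X{l})=1$ the subgraph $s(\X{l})$ is isomorphic to a unique graphlet type $g^k_i$, so the sum $\alpha^k(\X{l})=\sum_j \alpha^k_j h^k_j(\X{l})$ collapses to the single value $\alpha^k_i$. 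Grouping the remaining sum first by graphlet type $i$, then by subgraph $s\simeq g^k_i$, and finally over the $|\mathcal{C}(s)|=\alpha^k_i$ corresponding states of each $s$, the factor $\alpha^k_i$ cancels and one is left with $\sum_i\sum_{s\simeq g^k_i}1=\sum_i C^k_i=C^k$.

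Next I would put the summand into the $[0,1]$ range required by Theorem~\ref{thm:chernoff_bound}. Since $h^k\le 1$, since $\alpha^k(\X{l})\ge\alpha_\text{min}$ exactly on the support $\{h^k=1\}$ (off that support the summand vanishes), and since $1/\pi_e(\X{l})\le W$, the quantity $h^k(\X{l})/(\alpha^k(\X{l})\pi_e(\X{l}))$ is bounded above by $W/\alpha_\text{min}$. I would therefore define $g(\X{l})=\frac{\alpha_\text{min}}{W}\cdot\frac{h^k(\X{l})}{\alpha^k(\X{l})\pi_e(\X{l})}\in[0,1]$, so that $\frac{1}{n}\sum_s g(\X{l}_s)=\frac{\alpha_\text{min}}{W}\hat{C}^k$ and, by the unbiasedness just shown, $\mu=\mathbb{E}_{\PIMB}[g]=\frac{\alpha_\text{min}C^k}{W}$.

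The expanded Markov chain is finite and irreducible with stationary distribution $\PIMB$ (Theorem~\ref{thm:stationary_distribution}), so Theorem~\ref{thm:chernoff_bound} applies with mixing time $\tau^\prime=\tau^\prime(1/8)$ and initial distribution $\varphi$. Taking the deviation parameter to be $\epsilon/3$ and using the identity $|\frac{1}{n}\sum_s g-\mu|>\frac{\epsilon}{3}\mu \iff |\hat{C}^k-C^k|/C^k>\frac{\epsilon}{3}$, the bound gives
\begin{equation*}
    \Pr\left[|\hat{C}^k-C^k|/C^k>\frac{\epsilon}{3}\right] \le c\,\norm{\varphi}_{\PIMB}\, e^{-(\epsilon/3)^2\mu n/72\tau^\prime}.
\end{equation*}
Finally I would extract $n$ by setting the right‑hand side equal to $\delta/2$; solving for $n$ and substituting $\mu=\alpha_\text{min}C^k/W$ yields $n\ge\xi\,\frac{W}{\alpha_\text{min}C^k}\frac{\tau^\prime}{\epsilon^2}\log\frac{\norm{\varphi}_{\PIMB}}{\delta}$ for a constant $\xi$ absorbing the numerical factors and the $2c$, exactly the claimed threshold.

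I expect the only genuinely new step over Lemma~\ref{lemma:bound_numerator} to be the unbiasedness computation, since $\hat{C}^k$ aggregates over all graphlet types rather than tracking one: the care needed is in observing that $\alpha^k(\X{l})$ reduces to a single $\alpha^k_i$ on each relevant state so that the counting cancellation goes through, and in checking that $\alpha_\text{min}$ (rather than a per‑type $\alpha^k_i$) is the correct normalizer keeping $g$ in $[0,1]$. Everything else is a direct transcription of the numerator argument.
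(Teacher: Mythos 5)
Your proposal is correct and follows essentially the same route as the paper's proof: normalize $h^k/(\alpha^k\pi_e)$ by $W/\alpha_\text{min}$ to land in $[0,1]$, compute $\mu=\alpha_\text{min}C^k/W$, and invoke the Markov-chain Chernoff--Hoeffding bound with deviation $\epsilon/3$ exactly as in the numerator lemma. The paper compresses the unbiasedness computation (the collapse of $\alpha^k(X^{(l)})$ to a single $\alpha^k_i$ and the resulting cancellation) into ``a single line calculation,'' whereas you spell it out; the only detail you omit is the paper's explicit convention of setting the test function to $0$ when $\alpha_\text{min}=0$, a degenerate case in which the stated sample-size threshold is vacuous anyway.
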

\begin{proof}
    We define $f =
    \frac{h^k(X^{(l)})/\left(\alpha^k(X^{(l)})\pi_e(X^{(l)})\right)}{W/\alpha_\text{min}}$
    if $\alpha_\text{min}\neq 0$. Otherwise we define $f = 0$. A single line
    calculation shows that $\mu = \mathbb{E}_{\PIMB}[f] =
    \frac{\alpha_\text{min}C^k}{W}$. Following the same argument in
    Lemma~\ref{lemma:bound_numerator} we can derive the bound of $n$.
\end{proof}

\unionbound*
\begin{proof}
    We first find steps $n_i$ to guarantee $\hat{C}^k_i$ and $\hat{C}^k$ be
    within $\epsilon/3$ deviation from their expectation with probability
    greater than $1 - \delta / 2$. See Lemma~\ref{lemma:bound_numerator}
    and~\ref{lemma:bound_denominator} for more details. Then use the fact
    \begin{equation*}
        (1-\epsilon) c^k_i\leq \frac{(1-\epsilon / 3)C^k_i}{(1+\epsilon/3)C^k}
        \leq \frac{\hat{C}^k_i}{\hat{C}^k} \leq \frac{(1 + \epsilon /
        3)C^k_i}{(1 - \epsilon / 3)C^k} \leq (1 + \epsilon) c^k_i
    \end{equation*}
    we can prove that with  $n \geq
    \xi(\frac{W}{\Lambda})\frac{\tau^\prime}{\epsilon^2}(\log
    \frac{\norm{\varphi}_{\PIMB}}{\delta})$, the estimator $\hat{c}^k_i$
    is within $\epsilon$ deviation from expectation with probability greater
    than $1-\delta$. Here $\tau^\prime$ is the mixing time $\tau^{\prime}(1/8)$
    of the expanded Markov chain. Second, we prove that $\tau^\prime = \tau$.
    To see this, we denote the initial distribution of the random walk as
    $\pmb{\pi}_0$. After $l -1$ steps, we get the initial distribution for the
    expanded Markov chain. At time $t$, we denote the distribution of random
    walk as $\pmb{\pi}_t$ and distribution for expanded Markov chain is
    $\overline{\pmb{\pi}}_{t - l + 1}$.  At time $t + l - 1$, we get
    distribution $\overline{\pmb{\pi}}_t$, for any state $X^{(l)} = (X_1,
    \cdots, X_{l})$. Since $\overline{\pi}(X^{(l)}) =
    \pi_t(X_1)\frac{1}{d_{X_1}}\cdot\frac{1}{d_{X_{l-1}}}$, the variation
    distance satisfies $|\pmb{\pi} - \pmb{\pi}_t|_1 = |\pmb{\pi}_e -
    \overline{\pmb{\pi}}_t|_1$. Thus we have $\tau = \tau^\prime$. This ends
    the proof.
\end{proof}

\section{Proof of Lemma~\ref{lemma:css_variance}}\label{app:css_variance}
\cssvariance*
\begin{proof}
    Variance of random variable $X$ can be expanded as
    $\mathbb{E}[X^2]-(\mathbb{E}[X])^2$. Since the expectations of both
    functions are equal, we only need to prove that
    \begin{equation*}
        \mathbb{E}_{\PIMB}\left[\left(h^k_i(\X{l})/p(X^{(l)})\right)^2\right]\leq
        \mathbb{E}_{\PIMB}\left[\left(h^k_i(\X{l})/(\alpha^k_i\pi_e(X^{(l)}))\right)^2\right]
    \end{equation*}
    Note that the left hand side of the inequality can be rewritten as
    \begin{equation*}
        \frac{1}{(2|R^{(d)}|)^2}\sum_{s\simeq g^k_i} \frac{1}{\sum_{\X{l}_j\in
        \mathcal{C}(s)}\pi_e(\X{l}_j)}
    \end{equation*}
    while the right hand side can be represented as
    \begin{equation*}
        \frac{1}{(2|R^{(d)}|)^2}\sum_{s\simeq g^k_i} \frac{1}{(\alpha^k_i)^2}
        \sum_{\X{l}_j\in \mathcal{C}(s)}1 / \pi_e(\X{l}_j).
    \end{equation*}
    Use the fact that {\em harmonic mean is not greater than arithmetic mean}, we
    have
    \begin{equation*}
        \frac{1}{\sum_{\X{l}_j\in
        \mathcal{C}(s)}\pi_e(\X{l}_j)}\leq
        \frac{1}{(\alpha^k_i)^2}
        \sum_{\X{l}_j\in \mathcal{C}(s)}1 / \pi_e(\X{l}_j).
    \end{equation*}
    This ends the proof.
\end{proof}

\section{Computation of sampling probability}\label{app:computation_of_sampling_probability}
Algorithm~\ref{algo:css_reweight} shows the pseudo code to compute the
$p(\X{l})$ ($l > 2$) when given node set and edge list of subgraph $s$
induced by nodes in $\X{l}$. 
\begin{algorithm}[H]
    \caption{Pseudo code of computing $p(\X{l})$}
\begin{algorithmic}[1]\label{algo:css_reweight}
    \REQUIRE state $\X{l}$, graphlet size $k$, $SRW(d)$
    \ENSURE sampling probability $p(\X{l})$
    \STATE{random walk block length $l\leftarrow k - d + 1$ }\\
    \STATE{$s\leftarrow$ subgraph induced by nodes in $X^{(l)}$}\\
    \STATE{set $S\leftarrow$ all $d$-node connected induced subgraphs of $s$}\\
    \STATE{counter $p(X^{(l)})\leftarrow 0$}\\
    \FORALL{$S^{(l)}\leftarrow$ combination of $l$ elements from $S$}
    \IF{size of node set $\cup_{s\in S^{(l)}}V(s)$ equals to $k$}
    \FORALL{$(x_1, \cdots, x_l)\leftarrow$ permutation of $1, \cdots, l$}
        \IF {$S^{(l)}_{x_i}$ and $S^{(l)}_{x_{i+1}}$ share $d - 1$ nodes,
        $\forall1\leq\!i\leq\!l\!-\!1$ }
        \STATE{corresponding state ${X^\prime}^{(l)}\leftarrow
        (S^{(l)}_{x_1},\cdots, S^{(l)}_{x_l})$}
        \STATE{$p(X^{(l)})\leftarrow p(X^{(l)}) + \pi_e({X^\prime}^{(l)})$}
        \ENDIF
    \ENDFOR
    \ENDIF
    \ENDFOR
    \RETURN $p(X^{(l)})$
\end{algorithmic}
\end{algorithm}

\section{Adaption of Wedge Sampling}\label{app:adaption}
We design a Metropolis-Hasting random walk (MHRW) to visit each node $v$ with
the stationary distribution $\pi(v) = p_v\triangleq {d_v\choose
2}/\left(\sum_{u\in V} {d_u\choose 2}\right)$. For each sampled node, we choose
a uniform random pair of neighbors to generate (closed or open) wedges.  We
demonstrate the pseudo code of the adapted wedge sampling in
Algorithm~\ref{algo:adapted_wedge_sampling}.  Note that at each random walk
step, the adapted wedge sampling algorithm needs to explore \textit{three
nodes} to obtain their neighborhood information.

\begin{algorithm}[H]
    \caption{Adapted Wedge Sampling}
\begin{algorithmic}[1]\label{algo:adapted_wedge_sampling}
    \REQUIRE sampling budget $n$, Metropolis-Hasting random walk on $G$
    \ENSURE estimate of $c^3_1, c^3_2$
    \STATE{counter $\hat{C}^3_i\leftarrow 0$ for $i \in \{1, 2\}$}\\
    \STATE{random walk steps $t\leftarrow 1$}
    \STATE{$v_1\leftarrow$ randomly chosen node with $d_{v_1} \geq 2$}
    \WHILE{$t \leq n$}
    \STATE{$(v^{\prime}_{t}, v^{\prime\prime}_t)\leftarrow$ uniform random pair of neighbors of $v_t$}
    \IF{$\{v^{\prime}_{t}, v^{\prime\prime}_t, v\}$ induces a triangle}
    \STATE{$\hat{C}^3_2\leftarrow \hat{C}^3_2 + 1$ } \COMMENT{closed wedge}
    \ELSE
    \STATE{$\hat{C}^3_1 \leftarrow \hat{C}^3_1 + 1$} \COMMENT{open wedge}
    \ENDIF
    \STATE{$w\leftarrow$ random neighbor of node $v_t$}
    \STATE{Generate $p\sim U(0, 1)$} \COMMENT{random variable uniformly
    distributed between $0\sim 1$}
    \IF{$p\leq \min\{1, \frac{d_w - 1}{d_{v_t} - 1}\}$}
    \STATE{$v_{t+1}\leftarrow w$}
    \ELSE
    \STATE{$v_{t+1} \leftarrow v_{t}$}
    \ENDIF
    \STATE{$t\leftarrow t + 1$}
    \ENDWHILE
    \RETURN $\hat{c}^3_1 = 3\hat{C}^3_1/(3\hat{C}^3_1 + \hat{C}^3_2),
    \hat{c}^3_2 = \hat{C}^3_2/(3\hat{C}^3_1 + \hat{C}^3_2)$
\end{algorithmic}
\end{algorithm}

\end{appendix}

\end{document}